





\documentclass[pdflatex,sn-mathphys]{sn-jnl}



\jyear{2021}%

\theoremstyle{thmstyleone}%
\newtheorem{theorem}{Theorem}
\newtheorem{proposition}[theorem]{Proposition}%

\theoremstyle{thmstyletwo}%
\newtheorem{remark}{Remark}%
\newtheorem*{rep@theorem}{\rep@title}
\newcommand{\newreptheorem}[2]{%
\newenvironment{rep#1}[1]{%
 \def\rep@title{#2 \ref{##1}}%
 \begin{rep@theorem}}%
 {\end{rep@theorem}}}
\makeatother

\newreptheorem{theorem}{Theorem}
\newtheorem{corollary}{Corollary}%
\newreptheorem{corollary}{Corollary}%

\newtheorem{lemma}[theorem]{Lemma}
\newtheorem{definition}{Definition}

\newtheorem{assumption}{Assumption}

\theoremstyle{thmstylethree}%

\raggedbottom

\begin{document}

\title{Randomized Time Riemannian Manifold Hamiltonian Monte Carlo}


\author*[1]{\fnm{Peter A.} \sur{Whalley}}\email{p.a.whalley@sms.ed.ac.uk}
	
\author[1]{\fnm{Daniel} \sur{Paulin}}\email{dpaulin@ed.ac.uk}
	
\author[1]{\fnm{Benedict} \sur{Leimkuhler}}\email{b.leimkuhler@ed.ac.uk}

\affil*[1]{\orgdiv{Department of Mathematics}, \orgname{, University of Edinburgh.}, \orgaddress{ \city{Edinburgh}, \postcode{EH9 3FD}, \country{UK}}}


\abstract{Hamiltonian Monte Carlo (HMC) algorithms which combine numerical approximation of Hamiltonian dynamics on finite intervals with stochastic refreshment and Metropolis correction are popular sampling schemes, but it is known that they may suffer from slow convergence in the continuous time limit.  A recent paper of Bou-Rabee and Sanz-Serna ({\em Ann. Appl. Prob.}, {\bf 27}:2159-2194, 2017) demonstrated that this issue can be addressed by simply randomizing the duration parameter  of the Hamiltonian paths.  In this article, we use the same idea to enhance the sampling efficiency of a constrained version of HMC, with potential benefits in a variety of application settings. We demonstrate both the conservation of the stationary distribution and the ergodicity of the method. We also compare the performance of various schemes in numerical studies of model problems, including an application to high-dimensional covariance estimation.}

\keywords{Piecewise deterministic Markov process, Sampling, Riemannian manifold, High dimensional inference}



\maketitle

\section{Introduction and Motivation}
Efficient sampling of high dimensional probability distributions is required for Bayesian inference and is a challenge in many fields including biological modelling (\cite{Wi2007}), economic modelling (\cite{Gr2012}), machine learning with large data sets (\cite{Pa2017,Ba2012}) and molecular dynamics (\cite{Pe2015}). A popular approach is Markov chain Monte Carlo, which defines a Markov chain $X_{i+1} \sim p(\cdot \mid X_{i})$ with invariant measure $\mu$ and from which we may estimate expected values from the relation $\mathbb{E}_{X \sim \mu} f(X) \approx \frac{1}{N}\sum^{N}_{i=1}f(X_{i})$; however convergence of such averages  can be slow for high dimensional and multimodal distributions (see e.g. \cite{Qu2018}). Recent attempts to address this problem include the local bouncy particle sampler of \cite{Bouch2018} and the Zig-Zag process of \cite{Bi2019}. These methods can be viewed as piecewise deterministic Markov processes (PDMPs), see \cite{Va2018}. The Randomized Hamiltonian Monte Carlo (RHMC), proposed in \cite{Bo2017} and further studied in \cite{De2021} , evolves a Hamiltonian flow for a duration drawn from an exponential distribution.  In standard HMC the choice of integration time is a challenging task (see \cite{Ho2014}) and mixing can be inefficient for some choices of integration time. By contrast, RHMC does not suffer from this problem as randomization of the duration prevents periodicities. This strategy has been studied from both analytic and numerical perspectives in \cite{Bo2017}. Other recent algorithms have been proposed which build on this idea (for example \cite{Ri2022} and \cite{Kl2022}).

The algorithms discussed above are targeted to sampling from distributions in Euclidean space. The need to work with Riemannian manifolds is motivated by applications where constraints are imposed from modelling considerations or are introduced in order to restrict sampling to a relevant subdomain derived from statistical analysis  (see \cite{Br2012}).   Examples of manifolds include products of spheres or orthogonal matrices which arise in applications in protein configuration modelling with the Fisher-Bingham distribution (\cite{Ha2006}), texture analysis using distributions over rotations (\cite{Ku2004}) and fixed-rank
matrix factorization for collaborative filtering (\cite{Me2011}, \cite{Sa2008}). Methods that sample from probability distributions on manifolds have been considered in \cite{Ha2008}, \cite{Br2012}, \cite{By2013}, \cite{Gi2011}, \cite{Le2012}, \cite{Za2018}, \cite{Di2012}, \cite{Le2019} and \cite{La2021}. In this article, we focus on manifolds defined by algebraic constraints. In order to maintain the constraints, in practice one needs to perform projections at each step of the algorithm, an additional overhead compared to Euclidean MCMC algorithms.

In this paper we propose the Randomized Time Riemannian Manifold Hamiltonian Monte Carlo (RT-RMHMC) method, an RHMC scheme for Riemannian manifolds.  We establish invariance under a compactness assumption of the desired measure in the (small stepsize limit) continuous-time PDMP version of our method, where the algorithm is rejection free. Further, we demonstrate the invariance of the discretized method with Metropolis-Hastings adjustment and prove ergodicity of the discretized method with Metropolis-Hastings adjustment.   We show in numerical experiments that this method has improved robustness, demonstrating for example that the convergence rate is relatively flat in the choice of mean time parameter; these results mirror those obtained for the Euclidean version of the method.   Moreover, we compare RT-RMHMC to a constrained underdamped Langevin integrator g-BAOAB introduced in \cite{Le2016}.  

To our knowledge, there is no theoretical or numerical treatment of RHMC in the manifold setting and there has been no theoretical treatment of Riemannian Hamiltonian Monte Carlo methods in the continuous time setting. We provide a first result to estabilish invariance of a continuous time Riemannian Hamiltonian Monte Carlo method in the compact setting.   A biased RHMC method was recently introduced (see \cite{Kl2022}) which has event rates which depend on the position in the state space, these state dependent event rates can be incorporated into our RHMC Riemannian framework when the framework is unadjusted. We note that in the appendix of that article,  a version of RHMC is introduced in the setting of adapting the metric for sampling on Euclidean space but not for working on a Riemannian manifold.   

The remainder of this article is organised as follows.  In the next section we describe the algorithm and provide invariance in the continuous time setting under a compactness assumption.  Section 3 considers the numerical implementation with and without Metropolis test.  Section 4 provides conservation of the stationary distribution of the discretized algorithm and the ergodicity of the method with Metropolis-Hastings adjustment. Section 5 discusses numerical experiments and Section 6 gives some thoughts on future developments.   We include several appendices addressing the generator,  the invariance of the target measure and the irreducibility of the scheme, from which ergodicity necessarily follows.

%

\section{Algorithm}
Let $(\mathcal{M},g)$ be a $d$-dimensional Riemannian manifold and $T\mathcal{M}$ denote its tangent bundle. Let $G(x)$ denote the positive definite matrix associated to the metric $g$ at $x \in \mathcal{M}$. 
Consider a target distribution on $\mathcal{M}$ with density
\[
\pi_{\mathcal{H}}(x) = \frac{1}{Z_{\mathcal{M}}}\exp{(- U_{\mathcal{H}}(x))},
\]
with respect to $\sigma_{\mathcal{M}}(dx)$, the surface measure (Hausdorff measure) of $\mathcal{M}$ defined by $\sigma_{\mathcal{M}}(dx) = \sqrt{\det{G(x)}}dx$ and $Z_{\mathcal{M}} = \int_{\mathcal{M}}\exp{(- U_{\mathcal{H}}(x))}\sigma_{\mathcal{M}}(dx)$, which we assume to be finite. Consider an extension of the distribution to $T\mathcal{M}$ as
\begin{equation}\label{measure}
\mu(dz) = \frac{1}{Z_{T\mathcal{M}}} \exp{(-H(x,v))}\lambda_{T \mathcal{M}}(dz),
\end{equation}
where $\lambda_{T\mathcal{M}}(dz)$ is the Liouville measure of $T\mathcal{M}$, $H$ is defined by \begin{equation} \label{hamiltonian}
H(x,v) = U_{\mathcal{H}}(x) + \frac{1}{2}v^{T}G(x)^{-1}v = U(x) + \frac{1}{2}\log{\{(2\pi)^{d}\det{G(x)}\}} + \frac{1}{2}v^{T}G(x)^{-1}v
\end{equation}
for $(x,v) \in T\mathcal{M}$ and $Z_{T\mathcal{M}} = \int_{T\mathcal{M}}\exp{(-H(x,v))}\lambda_{T \mathcal{M}}(dz),$ which is finite when $Z_{\mathcal{M}}$ is. We have that 
\[\mu(dz) = \pi(x)\sigma_{\mathcal{M}}(dx)\psi(x)(dv), \]
where $\psi(x)(dv)$ is simply the Gaussian measure on $T_{x}\mathcal{M}$ given by 
\[\psi(x)(dv)  =  \frac{1}{\sqrt{\{(2\pi)^{d}\det{G(x)}\}}}\exp{\{-\frac{1}{2}v^{T}G(x)^{-1}v\}} \sigma_{T_{x}\mathcal{M}}(dv) \]
in local coordinates and $\sigma_{T_{x}\mathcal{M}}(dv)$ is the Lebesgue measure on $T_{x}\mathcal{M}$. In particular we have that $\mu$ has marginal distribution $\pi_{\mathcal{H}}$ with respect to the Hausdorff measure (\cite{Gi2011},\cite{By2013},\cite{Le2010}[Section 3.3.2]).

We will define a stochastic process which is a Riemannian version of the Randomized Hamiltonian Monte Carlo of \cite{Bo2017}. The stochastic process follows constrained Hamiltonian dynamics for an time duration $t$ sampled from $t \sim \exp{(\lambda)}$ for some rate $\lambda > 0$ before an event. This event is a random velocity refreshment under the distribution $\psi(x)$.

Algorithm \ref{alg:1} defines Randomized time Riemannian Manifold Hamiltonian Monte Carlo (RT-RMHMC) with rate parameter $\lambda > 0$, and Hamiltonian dynamics governed by the Hamiltonian $H(x,v) = U_{\mathcal{H}}(x) + \frac{1}{2}v^{T}G(x)^{-1}v$ defined on $T\mathcal{M}$. This stochastic process has invariant measure $\mu(z) = \exp{(-H(z))}$ with respect to the Liouville measure on $T\mathcal{M}$.

\begin{algorithm}[H]
\footnotesize
\SetAlgoLined
	\begin{itemize}
		\item Initialise $x_{0}$ arbitrarily on $\mathcal{M}$ and sample $v_{0} \sim \psi(x_{0})$ on $T_{x_{0}}\mathcal{M}$ such that $(x_{0},v_{0}) \in T\mathcal{M}$.
		\item Initialise $t_{0} = 0$.
		\item for $k = 1,2,...$ do
		\begin{itemize}
			\item Update time via $t_{k} = t_{k-1} + \delta t $, where $\delta t \sim \exp{(\lambda)}$.
			\item Evolve over $[t_{k-1},t_{k}]$ Hamilton's equations
			with initial condition $(x(t_{k-1}),v(t_{k-1})) = (x_{t_{k-1}},v_{t_{k-1}})$.
			\item Set $(x_{s},v_{s}) = (x(s),v(s)) \text{ for } s \in [t_{k-1},t_{k}).$
			\item Set $x_{t_k} = x(t_k)$ and sample $v_{t_{k}} \sim \psi(x_{t_k})$ such that $(x_{t_k},v_{t_k}) \in T\mathcal{M}.$
		\end{itemize}
	\end{itemize}
	\caption{RT-RMHMC}
	\label{alg:1}
\end{algorithm}
To sample from a distribution $\pi$ with respect to the Hausdorff measure we define $U = - \log{\pi}$ under the assumption that $\pi$ is integrable on $\mathcal{M}$.
 
We can define the generator for this stochastic process as
\begin{equation}\label{generator}
    \mathcal{L}f(z) = X_{H}(f(z)) + \lambda(Qf(z) - f(z)),
\end{equation}
where 
\[Qf(x,v) := \frac{1}{\sqrt{\{(2\pi)^{d}\det{G}(x)\}}}\int_{T_{x}\mathcal{M}} \exp{\{-\frac{1}{2}\xi^{T}G(x)^{-1}\xi\}}f(x,\xi) d\xi \]
is the transition kernel for a completely randomized velocity refreshment according to a Gaussian distribution on the tangent space $T_{x}\mathcal{M}$ and $X_{H}$ is the Hamiltonian vector field associated to $H$. In the Supplementary Material, we will prove that this is the generator of this stochastic process in Section \ref{PDMPs-and-their-invariant-measures} and invariance of the measure in Section \ref{invariant-measure} under a compactness assumption. Our main theoretical result about Algorithm \ref{alg:1} is the following.

\begin{repcorollary}{cor:invmeasureRTRMHMC}[Invariant measure for RT-RMHMC]
Let $(P_{t})_{t \geq 0}$ be the transition semigroup of a simulation of Algorithm \ref{alg:1} with characteristics $(\varphi,\lambda,Q)$ on $T\mathcal{M}$ and Hamiltonian $H \in C^{2}(T\mathcal{M})$, where $(\mathcal{M},g)$ is a compact smooth Riemannian manifold and $\varphi$ is the Hamiltonian flow associated to the Hamiltonian. Let $\mu$ be the measure on $(T\mathcal{M}, \mathcal{B}(T\mathcal{M}))$ given by
\[ \mu(dz) \propto e^{-H(x,v)}d\lambda_{T \mathcal{M}}(z),\]
where $d\lambda_{T\mathcal{M}}$ is the Liouville measure of $T\mathcal{M}$.
Then $\mu$ is invariant for RT-RMHMC.
\end{repcorollary}


\section{Constrained Symplectic Integrator and Metropolis-Hastings adjustment}

In this section, we will state some more broadly implementable versions of Algorithm \ref{alg:1} that are applicable when the Hamiltonian dynamics cannot be solved exactly. We start with a brief introduction to Lagrangian and Hamiltonian dynamics with constraints based on \cite{Lee2017}[Chapter 3] .

Consider manifolds $\mathcal{M}$ embedded in $\mathbb{R}^{d}$ that can be described by algebraic equations
\[
\mathcal{M}:=\{x \in \mathbb{R}^{d} \mid g_{i}(x) = 0, i = 1,...,m \} \subset \mathbb{R}^{d}
\]
where $g_{i}: \mathbb{R}^{d} \to \mathbb{R}$ $i= 1,...,m$ are continuously differentiable functions with linearly independent gradient functions for all $x \in \mathcal{M}$. 

We refer to  such a submanifold as an algebraic constraint manifold. We can express the Euler-Lagrange equations as an orthogonal projection of the Euler-Lagrange equations in $\mathbb{R}^{d}$ onto the constraint manifold, hence we have
\[
\frac{d}{dt}\left(\frac{\partial L(x,\dot{x})}{\partial \dot{x}}\right) - \frac{\partial L(x,\dot{x})}{\partial x} + \sum^{m}_{i=1}\lambda_{i}\frac{\partial g_{i}(x)}{\partial x} = 0, 
\]
where $\lambda_{i}$ are Lagrange multipliers for each of the constraints. We can then define an augmented Lagrangian function $L^{a}:T^{*}M \times \mathbb{R}^{m} \to \mathbb{R}$  by $L^{a}(x,\dot{x},\lambda) = L(x,\dot{x}) + \sum^{m}_{i=1} \lambda_{i}g_{i}(x)$. Then the Euler-Lagrange equations can be expressed as 
\[
\frac{d}{dt}\left(\frac{\partial L^{a}(x,\dot{x},\lambda)}{\partial \dot{x}}\right) - \frac{\partial L^{a}(x,\dot{x},\lambda)}{\partial x} = 0
\]
and the augmented Hamiltonian function $H^{a}:T^{*}\mathcal{M} \times \mathbb{R}^{m} \to \mathbb{R}$ as $H^{a}(x,\mu,\lambda) = \mu \cdot \dot{x} - L^{a}(x,\dot{x},\lambda)$, and we therefore obtain Hamilton's equations (see \cite{Ha2007})
\[
	\dot{x} = \frac{\partial H^{a}(x,\mu,\lambda)}{\partial \mu} \qquad 
	\dot{\mu} = -\frac{\partial H^{a}(x,\mu,\lambda)}{\partial x}.
\]
We next introduce a new formulation of RT-RMHMC for constraint manifolds which we will use for numerical simulation. Note that a constraint manifold Hamiltonian Monte Carlo method was introduced in \cite{Br2012}, but with a deterministic duration parameter. We will use the same notation as that used in \cite{Br2012} to introduce randomized time into this algorithm. 

Let us denote our constraints $c(x) := (g_{i}(x),...,g_{m}(x))^{T}$ and let $C(x) = \frac{\partial c}{\partial x}$ denote the Jacobian of the constraints, which we assume to have full rank everywhere. Define a Hamiltonian of the constrained system as $H(x,v) = U_{\mathcal{H}}(x) + K(v)$, where $K(v) = \frac{1}{2}v^{T}G(x)^{-1}v$ is the kinetic energy and $v$ lies in the cotangent space,  $\mathcal{T}^{*}_{x}\mathcal{M} = \{ v \mid C(x) \frac{\partial H}{\partial v}(x,v) = 0\}$. The dynamics of the constrained system in terms of the Hamiltonian is thus given by
\[
	\dot{v} = -\frac{\partial H}{\partial x} + C(x)^{T}\lambda,\qquad
	\dot{x} = \frac{\partial H}{\partial v},\qquad \text{such that }
	c(x) = 0,
\]
where we remark that we can naturally identify the tangent and cotangent spaces and bundles.

If we let $\pi_{\mathcal{H}}$ be our target measure with respect to the Hausdorff measure. We also let $U_{\mathcal{H}}(x) =  -\log \pi_{\mathcal{H}}(x)$ be the potential energy of our constrained system. We can simulate the constrained Hamiltonian dynamics using \cite{An1983}. However, if we know $\pi_{\mathcal{H}}$ explicitly we can avoid computation of the metric tensor by assuming our system is isometrically embedded in Euclidean space. Under this assumption we can then consider Algorithm \ref{alg:2}, which is an explicit algorithm for simulation of Randomized time constrained Hamiltonian Monte Carlo (RT-CHMC). We will discuss and justify the embedding assumption further in section \ref{embedded manifolds}.

\begin{algorithm}[H]
    \footnotesize
    \SetAlgoLined
	\begin{itemize}
		\item Initialise $x_{0}$ arbitrarily on $\mathcal{M}$ and sample $v_{0} \sim \mathcal{N}(0,I \mid C(x_{0})v_{0} = 0)$.
		\item Initialise $t_{0} = 0$.
		\item for $k = 1,2,...$ do
		\begin{itemize}
			\item Update time via $t_{k} = t_{k-1} + \delta t $, where $\delta t \sim \exp{(1/\lambda)}$.
			\item Evolve over $[t_{k-1},t_{k}]$ Hamilton's equations subject to constraints $c(\cdot)$
			with initial condition $(x(t_{k-1}),v(t_{k-1})) = (x_{t_{k-1}},v_{t_{k-1}})$ and Hamiltonian $H(x,v) = -\log{\pi_{\mathcal{H}}(x)} + \frac{1}{2}v^{T}v$.
			\item Set $(x_{s},v_{s}) = (x(s),v(s)) \text{ for } s \in [t_{k-1},t_{k})$
			\item Set $x_{t_k} = x(t_k)$ and sample $v_{t_{k}} \sim \mathcal{N}(0,I \mid C(x_{t_k})v_{t_k} = 0)$ 
		\end{itemize}
	\end{itemize}
	\caption{RT-CHMC}
	\label{alg:2}
\end{algorithm}

In Algorithm \ref{alg:2}  we sample the Gaussian distribution on the tangent space at a point on $\mathcal{M}$. We can do this by sampling a Gaussian distributed vector and then projecting this orthogonally. To orthogonally project a momentum vector onto $T^{*}\mathcal{M}$ and correctly resample the momentum in Algorithm \ref{alg:2} at $x \in \mathcal{M}$ we apply the projector \[P_{\mathcal{M}}(x) := I - C(x)^T (C(x)C(x)^T)^{-1} C(x).\]
\begin{proposition}
	If $v' \sim \mathcal{N}(0,I)$ then $v = P_{\mathcal{M}}(x)v'$ is distributed according to $v \sim \mathcal{N}(0,I \mid C(x)v = 0)$
\end{proposition}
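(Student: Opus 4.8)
The plan is to show that both the push-forward law of $v = P_{\mathcal{M}}(x)v'$ and the conditional law $\mathcal{N}(0,I \mid C(x)v = 0)$ coincide with one and the same centered Gaussian supported on the linear subspace $V := \{w \in \mathbb{R}^{d} : C(x)w = 0\} = \ker C(x)$, namely $\mathcal{N}(0,P_{\mathcal{M}}(x))$. First I would record the algebraic properties of $P := P_{\mathcal{M}}(x)$. Since $C(x)$ has full rank, $C(x)C(x)^{T}$ is invertible and $P$ is well defined; a direct computation gives $P^{T} = P$ and $P^{2} = P$, so $P$ is a symmetric idempotent, i.e.\ an orthogonal projector. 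Moreover $C(x)P = C(x) - (C(x)C(x)^{T})(C(x)C(x)^{T})^{-1}C(x) = 0$, so the range of $P$ is contained in $V$; conversely $Pw = w$ for every $w \in V$, so $P$ is precisely the orthogonal projector onto $V$, with kernel $V^{\perp}$ equal to the column space of $C(x)^{T}$.

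Second, since $v = Pv'$ is a linear image of the Gaussian vector $v'$, it is itself a centered Gaussian vector. Its covariance is $\mathbb{E}[Pv'(v')^{T}P^{T}] = P I P = P^{2} = P$, and $C(x)v = C(x)Pv' = 0$ almost surely. Hence $v \sim \mathcal{N}(0,P)$, a degenerate Gaussian concentrated on $V$.

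Third, and this is the delicate step, I would make precise the conditional law $\mathcal{N}(0,I \mid C(x)v = 0)$, which conditions on a measure-zero event. The cleanest route is to fix an orthonormal basis of $\mathbb{R}^{d}$ adapted to the splitting $\mathbb{R}^{d} = V \oplus V^{\perp}$ and express a standard Gaussian in these coordinates; by rotational invariance the coordinates are independent one-dimensional standard Gaussians, the first $d-m$ spanning $V$ and the last $m$ spanning $V^{\perp}$. The constraint $C(x)v = 0$ is exactly the vanishing of the $V^{\perp}$-coordinates, and conditioning the independent standard Gaussian on these coordinates being zero leaves the standard Gaussian on $V$ unchanged. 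Expressed back in $\mathbb{R}^{d}$, this is the centered Gaussian supported on $V$ whose covariance is the projector $P$ onto $V$, that is $\mathcal{N}(0,P)$.

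Comparing the two computations, both laws equal $\mathcal{N}(0,P)$, which proves the claim. The main obstacle is precisely the rigorous handling of the degenerate conditioning in the third step; every other step is elementary linear algebra. Any ambiguity there can be resolved by \emph{defining} the conditional distribution through the orthonormal-basis factorization above, equivalently as the weak limit of $\mathcal{N}(0,I)$ conditioned on $\|C(x)v\| \le \varepsilon$ as $\varepsilon \downarrow 0$, or as the disintegration of the Gaussian measure along the fibres of $v \mapsto C(x)v$; all of these yield the same centered Gaussian on $V$ with covariance $P$.
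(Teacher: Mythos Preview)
Your argument is correct: you verify that $P_{\mathcal{M}}(x)$ is the orthogonal projector onto $\ker C(x)$, compute the covariance of $P_{\mathcal{M}}(x)v'$ to be $P_{\mathcal{M}}(x)$ itself, and then identify the conditional law $\mathcal{N}(0,I\mid C(x)v=0)$ with the same degenerate Gaussian via an orthogonal decomposition $\mathbb{R}^d = V \oplus V^\perp$. The treatment of the measure-zero conditioning is handled carefully and the alternative characterisations you mention (disintegration, weak limit) are all standard and consistent.

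The paper does not actually prove this proposition: it simply cites \cite{Gr2021}. So your self-contained argument goes beyond what the paper provides. In that sense there is no ``paper's approach'' to compare against beyond the outsourced reference; your direct linear-algebraic proof is exactly the kind of argument one would expect to find behind such a citation, and it has the advantage of making explicit which property of $P_{\mathcal{M}}(x)$ (symmetric idempotent with range $\ker C(x)$) is doing the work.
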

\begin{proof}
	Can be found in \cite{Gr2021}.
\end{proof}


\subsection{Embedded Manifolds}\label{embedded manifolds}
We next introduce the theory of manifold embeddings as it was presented in \cite{By2013} to show that numerical simulation of RT-CHMC is in fact simulation of RT-RMHMC on constraint manifolds.

If we know the form of the distribution $\pi_{\mathcal{H}}$ with respect to the Hausdorff measure, then we can avoid the computation of the metric tensor and the lack of a global coordinate system (\cite{By2013}). We achieve this using isometric embeddings, remarking that every Riemannian manifold can be isometrically embedded in Euclidean space due to the Nash embedding theorem (\cite{Na1956}). If we have an isometric embedding $\xi : \mathcal{M} \to \mathbb{R}^{n}$, then considering a path $q(t)$ on $\mathcal{M}$, the path $x(t) = \xi(q(t))$ is such that $\dot{x}_{i}(t) = \sum_{j} \frac{\partial x_{i}}{\partial q_{j}} \dot{q}_{j}(t)$. The phase space $(q,p)$, where $\dot{q} = G^{-1}p,$ can then be transformed to the embedded phase space $(x,v),$ where
\[
v = \dot{x} = XG(q)^{-1}p = X(X^{T}X)^{-1}p, \text{ where } X_{ij} = \frac{\partial x_{i}}{\partial q_{j}}, 
\]
since $G = X^{T}X$ due to the fact that the embedding is isometric and preserves inner products (see \cite{By2013}). Now the Hamiltonian (eq \ref{hamiltonian}) is
\[
H(x,v) = -\log{ \pi_{\mathcal{H}}(x)} + \frac{1}{2}v^{T}v 
\]
in terms of coordinates $(x,v).$
When considering sampling of the velocities in Algorithm \ref{alg:1} and Algorithm \ref{alg:2}, since $p \sim \mathcal{N}(0,G(q))$, we have
\[
v \sim \mathcal{N}(0,X(X^{T}X)^{-1}X^{T}),
\]
where $X(X^{T}X)^{-1}X^{T}$ is the orthogonal projection onto the tangent space of the embedded manifold (\cite{By2013}). Therefore we can sample from $\mathcal{N}(0,I)$ and project onto the tangent space to obtain a necessary sample. The Hamiltonian is thus expressed in a form which is independent of the  metric (provided we know the density with respect to the Hausdorff measure). We now introduce the numerical integrator's (RATTLE) scheme \cite{Le2004}[Chapter 7]:
\begin{align*}
	x_{n+1} &= x_{n} + \Delta t v_{n+1/2}\\
	v_{n+1/2} &= v_{n} -\frac{\Delta t}{2}\nabla_{x}U(x_{n}) - \frac{\Delta t}{2}C(x_{n})^{T} \lambda^{n}_{(r)} \qquad \qquad \quad \hspace{2.2mm} \text{such that } c(x_{n+1}) = 0\\
	v_{n+1} &= v_{n+1/2} -\frac{\Delta t}{2}\nabla_{x}U(x_{n+1}) - \frac{\Delta t}{2} C(x_{n+1})^{T} \lambda^{n+1}_{(v)}\qquad  \text{ such that } C(x_{n+1})v_{n+1} = 0,
\end{align*}
where we solve for $\lambda^{n}_{(r)}$ and $\lambda^{n+1}_{(v)}$ at each iteration so that the iterates lie in the tangent bundle. We solve for $\lambda^{n}_{(r)}$ (a non-linear system of equations) by cycling through the constraints, adjusting one multiplier at each iteration.  Denote by $C_{i}$ the $i$th row of $C$ and we first initialise
\[
Q := \overline{x}_{n+1} = x_{n} + \Delta t v_{n} - \frac{\Delta t^{2}}{2}  \nabla_{x}U(x_{n}).
\]
Next we cycle through the list of constraints one after another as follows: for each $i = 1,...,m$ compute
\[
\Delta \Lambda_{i} := \frac{g_{i}(Q)}{C_{i}(Q)C_{i}(x_{n})},
\]
and update $Q$ by $Q := Q - C_{i}(x_{n})^{T}\Delta \Lambda_{i}$ until $g_{i}(Q)<tol$ for all $i =1,...,m$, where $tol$ is a certain prescribed tolerance. Then we set $x_{n+1} = Q$ and have $x_{n+1} \in \mathcal{M}$ within the tolerance.  (Note that other stopping criteria could be used (see \cite{Or2000}).) We solve for $\lambda^{n+1}_{(v)}$ by solving the linear system:
\[
\left ( C(x_{n})C(x_{n})^{T} \right ) \lambda^{n}_{(v)}  = C(x_{n})\left(\frac{2}{\Delta t} v_{n-1/2} - \nabla_{x}U(x_{n}) \right) .
\]
Once the linear system has been solved we obtain $(x_{n+1},v_{n+1}) \in T^{*}\mathcal{M}$.
\begin{theorem} 
	Let $\mathcal{M}$ be a constraint manifold. Let $H \in  C^{2}(T\mathcal{M})$, the RATTLE numerical integrator of the Hamiltonian system defined by $H$ in $T\mathcal{M}$  is symmetric, symplectic and of order 2. Further it respects the manifold constraints.
\end{theorem}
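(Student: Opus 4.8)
The plan is to treat the four assertions separately, obtaining constraint preservation and symmetry by direct manipulation of the update equations, deducing second order from symmetry combined with consistency, and reserving the real work for symplecticity. First I would verify that the iterates remain on $T\mathcal{M}$ (identifying tangent and cotangent bundles as in the text), which is essentially built into the scheme: the multiplier $\lambda^{n}_{(r)}$ is defined precisely so that the position constraint $c(x_{n+1})=0$ holds, and $\lambda^{n+1}_{(v)}$ so that the hidden velocity constraint $C(x_{n+1})v_{n+1}=0$ holds, whence $(x_{n+1},v_{n+1})\in T\mathcal{M}$ by construction. The only point requiring argument is solvability of these multiplier systems. The velocity constraint is linear and $C(x_{n+1})C(x_{n+1})^{T}$ is invertible by the full-rank assumption, so $\lambda^{n+1}_{(v)}$ is uniquely determined; for $\lambda^{n}_{(r)}$ I would apply the implicit function theorem to the map $\lambda \mapsto c\big(x_{n} + \Delta t\, v_{n} - \tfrac{\Delta t^{2}}{2}\nabla_{x}U(x_{n}) - \tfrac{\Delta t^{2}}{2}C(x_{n})^{T}\lambda\big)$, whose Jacobian in $\lambda$ is nonsingular for $\Delta t$ small, giving a unique small solution.

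Next, for symmetry I would show that the one-step map $\Phi_{\Delta t}\colon (x_{n},v_{n})\mapsto(x_{n+1},v_{n+1})$ satisfies $\Phi_{-\Delta t}=\Phi_{\Delta t}^{-1}$, equivalently that it is reversible under the momentum flip $(x,v)\mapsto(x,-v)$. This is a matter of substituting $\Delta t \mapsto -\Delta t$ and interchanging the roles of the endpoints $n$ and $n+1$: the half-step velocity $v_{n+1/2}$ is the pivot, and the two symmetric half-kicks about the position drift cause the relabelled equations to coincide with the originals, with the multipliers mapping to their reversed counterparts. Having symmetry in hand, second order follows cheaply. I would first establish consistency (order at least one) by Taylor expanding a single step about $\Delta t = 0$ and comparing with the exact constrained flow, using that the position update reproduces $\dot{x}=v$ and the velocity update reproduces $\dot{v}=-\nabla_{x}U + C(x)^{T}\lambda$ to leading order on $\mathcal{M}$. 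The general principle that a symmetric one-step method has a local error expansion in even powers of $\Delta t$ then upgrades order one to order two automatically, so no third-order expansion need be computed by hand.

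The genuinely substantive property, and the one I expect to be the main obstacle, is symplecticity. Here I would interpret RATTLE as the Hamiltonian image, under the discrete Legendre transform, of the SHAKE variational integrator generated by a discrete Lagrangian $L_{d}(x_{n},x_{n+1})$ subject to the constraints $c(x_{n})=c(x_{n+1})=0$; the discrete Hamilton principle then forces the resulting map to preserve the canonical two-form on $T\mathcal{M}$, namely the restriction of $dx\wedge dv$ to the constraint distribution. Carrying this out requires care in showing that the Lagrange-multiplier forces contribute only in the $\mathrm{d}c$ directions, which are annihilated upon restriction to the tangent spaces, so that $dx_{n+1}\wedge dv_{n+1}=dx_{n}\wedge dv_{n}$ holds on $T\mathcal{M}$. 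This delicate bookkeeping of the constraint terms is the crux of the argument; since the corresponding statement for RATTLE is classical, I would cross-check the computation against the treatments of Leimkuhler and Reich and of Hairer, Lubich and Wanner, adapting their conventions to the isometrically embedded setting used here.
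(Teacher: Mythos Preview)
Your sketch is correct and follows the standard route to these properties of RATTLE. However, you should know that the paper does not actually prove this theorem: the entire proof reads ``Given in \cite{Le1994}.'' The result is classical, due to Leimkuhler and Skeel, and the paper simply cites it.

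Relative to that cited source, your outline is faithful to the usual argument. Constraint preservation is indeed immediate from the definition of the multipliers, with solvability handled by the full-rank assumption and an implicit-function-theorem step for small $\Delta t$. Symmetry via the $\Delta t \mapsto -\Delta t$ relabelling and second order via the even-power expansion of symmetric methods are exactly the textbook arguments. For symplecticity, your plan to view RATTLE as the Hamiltonian side of the SHAKE discrete variational integrator and to show preservation of the restricted two-form $dx \wedge dv$ on $T\mathcal{M}$ is the approach of Hairer--Lubich--Wanner and Leimkuhler--Reich; the point you flag, that the multiplier forces lie in the span of $\mathrm{d}c$ and hence vanish on the constraint distribution, is precisely the crux. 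So your proposal is not merely consistent with the paper's treatment but supplies the content that the paper delegates to the literature.
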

\begin{proof}
	Given in \cite{Le1994}.
\end{proof}

\subsection{Metropolis Hastings Adjustment}\label{sec:metropolis_adjustment}

Let $\Psi^{L}_{\Delta t}: T\mathcal{M} \to T\mathcal{M}$ be the numerical integrator defined by $L$ steps of RATTLE with stepsize $\Delta t$. This integrator approximates the Hamiltonian dynamics. For theoretical purposes we will also define the map $N: T\mathcal{M} \to T \mathcal{M}$ which negates the momentum term i.e. $N(x,v) \equiv  (x,-v)$. Note that this leaves the Hamiltonian invariant and due to the fact that the momentum is resampled this has no affect on the samples from $\pi_{\mathcal{H}}$. We will define the following Metropolised RT-RMHMC, where we sample $T \sim \exp{(\lambda)}$ and fix a maximum time length $\Delta t_{\max{}}$ below the stability threshold of the numerical integrator. Then we choose the number of leapfrog steps $L$ to be $\lceil T/\Delta t_{\max{}} \rceil$. Having chosen $L$ in this way, we set $\Delta t = T/L \leq \Delta t_{\max}$. At each step we perform $L$ RATTLE steps with stepsize $\Delta t$. We propose this method of discretisation instead of purely randomising the stepsize and fixing a number of leapfrog steps to avoid numerical instabilities in the numerical integrator. One could also propose fixing a stepsize within the numerical stability threshold of the integrator and simply sampling an integer number of leapfrog steps geometrically to randomise the time. However our proposed method closer relates to the continuous dynamics without the issues due to numerical instabilities.

\begin{remark}\label{reversibility}
	For large choices of stepsize $\Delta t$ it has been shown that $\Phi^{L}_{\Delta t}$ is not reversible where RATTLE is used to integrate on the manifold, see \cite{Le2019,Za2018}. In \cite{Le2019} they propose to combat this by adding a reversibility check incorporated into the Metropolis-Hastings adjustment, although in practice such checks may be neglected in favor of an implicit assumption that $\Delta t$ is sufficiently small to avoid non-reversibility issues. We will investigate this further in section \ref{numerical-results}. 
\end{remark}

In light of Remark \ref{reversibility}, we include $\textnormal{Rev}(\cdot)$ as a additional (optional) accept-reject condition which implements a reversibility check (following \cite{Le2019}). In numerical experiments we examine the stepsize threshold where the reversibility condition fails (See Fig. \ref{fig:reversibility-check}).

\begin{algorithm}[H]
    \footnotesize
\SetAlgoLined
	\begin{itemize}
		\item Initialise $\Delta t_{\max}$ within the stability threshold.
		\item Initialise $x_{0}$ arbitrarily on $\mathcal{M}$ and sample $v_{0} \sim \mathcal{N}(0,I \mid C(x_{0})v_{0} = 0)$.
		\item Initialise $t_{0} = 0$.
		\item for $k = 1,2,...$ do
		\begin{enumerate}
			\item Sample $v_{k-1} \sim \mathcal{N}(0,I \mid C(x_{k-1})v_{k-1} = 0)$.
			\item  \begin{itemize}
				\item Sample $T \sim \exp{(\lambda)}$.
				\item Set $L = \lceil T/\Delta t_{\max} \rceil$ and $\Delta t = T/L$.
				\item Set $(x^{*},v^{*}) = \Psi(x_{k-1},v_{k-1}) = \textnormal{Rev}(N( \Psi^{L}_{\Delta t}(x_{k-1},v_{k-1})))$.
				\item Accept $(x^{*},v^{*})$ with probability $\min{\{ 1, \exp{\{ H(x,v)-H(x^{*},v^{*})\}}\}}$ and set $(x_{k},v_{k}) = (x^{*},v^{*})$.
				\item Otherwise $(x_{k},v_{k}) = (x_{k-1},v_{k-1})$.
			\end{itemize}
		\end{enumerate}
	\end{itemize}
	\caption{RT-CHMC with Metropolis-Hastings step}
	\label{alg:3}
\end{algorithm}

\begin{remark}
Our framework can be adapted to handle inequality constraints by incorporating an additional rejection condition in the Metropolis-Hastings step, which rejects samples which aren't within the boundary.

This will be used in our application in Section \ref{sec:high-dim-cov} to impose a half-normal prior on some dimensions of our Bayesian model.
\end{remark}

\section{Ergodicity}

We will now prove ergodicity and exact invariance of the desired measure of the discrete time algorithm with metropolis-hastings adjustment. We will provide ergodicity under two assumptions by the same technique as \cite{Br2012} and restating some of their results.

\begin{proposition} \label{mu-invariant}
	Assuming that $\Psi$ is reversible for $\Delta t_{max} > 0$ then $\mu$ is invariant with respect to the Markov kernel proposed in Algorithm \ref{alg:3}.
\end{proposition}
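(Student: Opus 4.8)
The plan is to factor one transition of Algorithm \ref{alg:3} into two substeps, each of which leaves $\mu$ invariant separately: the momentum refreshment $R$ (step 1), followed by the Metropolised involutive proposal $M$ (step 2). In operator-on-measures notation the full kernel satisfies $\mu P = (\mu R)M$, so that invariance of $\mu$ under both $R$ and $M$ immediately yields $\mu P = \mu$.

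First I would dispose of the refreshment. Recall from (\ref{measure}) that $\mu$ factorises as $\mu(dz) = \pi(x)\sigma_{\mathcal{M}}(dx)\,\psi(x)(dv)$, so the conditional law of $v$ given $x$ under $\mu$ is exactly the Gaussian $\psi(x)$ on $T_x\mathcal{M}$, equivalently $\mathcal{N}(0,I \mid C(x)v=0)$ in the embedded picture, by the Proposition on the projector $P_{\mathcal{M}}(x)$. The map $R$ holds $x$ fixed and replaces $v$ by an independent draw from $\psi(x)$; it therefore reproduces precisely this conditional and leaves $\mu$ invariant.

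Next I would treat the Metropolis substep by conditioning on the draw $T \sim \exp(\lambda)$, which fixes $L$ and $\Delta t$ and hence a deterministic proposal $\Psi_T = \mathrm{Rev}\circ N \circ (\Psi^L_{\Delta t})$. Two properties are required. \emph{Volume preservation}: since RATTLE is symplectic (by the Theorem), $\Psi^L_{\Delta t}$ preserves the Liouville measure $\lambda_{T\mathcal{M}}$, and the momentum flip $N$ preserves it as well, so $\Psi_T$ is $\lambda_{T\mathcal{M}}$-measure preserving. \emph{Involution}: time-reversal symmetry of RATTLE gives $N\circ \Psi^L_{\Delta t}\circ N = (\Psi^L_{\Delta t})^{-1}$, whence $\Psi_T\circ \Psi_T = N\,\Psi^L_{\Delta t}\,N\,\Psi^L_{\Delta t} = \mathrm{id}$; this is exactly where the hypothesis that $\Psi$ is reversible for the chosen $\Delta t_{\max}$ enters, ensuring the implicit constraint solves and the check $\mathrm{Rev}$ do not spoil the self-inverse property. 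Given a volume-preserving involution and acceptance probability $\min\{1, e^{H(z)-H(\Psi_T z)}\}$ against $\mu \propto e^{-H}\,\lambda_{T\mathcal{M}}$, a standard change of variables shows the flux integrand $\min\{e^{-H(z)}, e^{-H(\Psi_T z)}\}$ is symmetric under $z \mapsto \Psi_T z$ (Jacobian unity, $\Psi_T$ self-inverse), yielding detailed balance, so each $M_T$ is $\mu$-reversible. Evenness of $H$ in $v$ ensures invariance survives the negation, and since mixing reversible kernels with a common stationary measure preserves reversibility, averaging over $T$ shows the mixed kernel $M$ is $\mu$-invariant.

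The main obstacle is the involution step: verifying $\Psi_T\circ \Psi_T = \mathrm{id}$ on the constrained bundle $T\mathcal{M}$, since the RATTLE half-steps solve nonlinear systems for the Lagrange multipliers and project onto the manifold, and these implicit operations are exactly reversible only below a stepsize threshold. This is precisely why the reversibility hypothesis (and operationally the $\mathrm{Rev}$ check of Remark \ref{reversibility}) is imposed, and it is the one place where the manifold constraints genuinely complicate the otherwise standard involutive-Metropolis reasoning, following the same technique as \cite{Br2012}.
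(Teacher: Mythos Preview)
Your proposal is correct and follows essentially the same approach as the paper: both decompose the transition into the momentum refreshment (which leaves $\mu$ invariant by its product structure) and the Metropolised step, then establish detailed balance for the latter using that $\Psi$ is a Liouville-measure-preserving involution, with the change of variables $z_2 = \Psi(z_1)$ turning the acceptance integrand into a symmetric expression. Your version is slightly more explicit about conditioning on the random time $T$ and then mixing, whereas the paper simply notes that the calculation is independent of $L$ and $\Delta t$, but this is a cosmetic difference rather than a genuinely different route.
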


\begin{proof}
See Section \ref{sec:proofMetropolis} of the Supplementary Material.
\end{proof}

\begin{assumption} \label{discrete-assumption-1}
Let $\mathcal{M} \in \{x \in \mathbb{R}^{n} \mid c(x) = 0 \}$ be Riemannian manifold which is connected, smooth and differentiable. We assume that $\nicefrac{\partial c}{\partial x}$ is full rank everywhere.
\end{assumption}

\begin{assumption} \label{discrete-assumption-2}
 Let $\mathcal{M}$ be a Riemannian manifold which satisfies assumption \ref{discrete-assumption-1}. For $x \in \mathcal{M}$ we define $\mathcal{B}_{r}(x) = \{x' \in \mathcal{M} \mid d(x',x) \leq r \}$ to be the geodesic ball of radius $r$ of $x$. We assume that there exists a $r > 0$ such that for every $x \in \mathcal{M}$ and $x' \in \mathcal{B}_{r}(x)$ there exists a unique choice of Lagrange multipliers and velocity $v \in T_{x}\mathcal{M}$, $v' \in T_{x'}\mathcal{M}$ for which $(v',x') = \Psi^{L}_{\Delta t}(v,x)$ for sufficently small $\Delta t$.
\end{assumption}

\begin{theorem}[Accessibility] \label{accessibility}
Let $U \in C^{2}(\mathcal{M})$, and assuming assumption \ref{discrete-assumption-1}. For any $x_{0}, x_{1} \in \mathcal{M}$ and $\Delta t$ sufficently small, there exists finite $v_{0} \in T \mathcal{M}$, $v_{1} \in T \mathcal{M}$ and Lagrange multipliers $\lambda_0$, $\lambda_1$ such that $(v_1 , x_1 ) = \Psi_{\Delta t}(v_{0},x_{0}).$ 
\end{theorem}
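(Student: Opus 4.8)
The plan is to show that for any two points $x_0, x_1 \in \mathcal{M}$, by taking a single RATTLE step of size $\Delta t$ with an appropriate choice of initial velocity $v_0$ and Lagrange multipliers, the configuration update can be made to land on $x_1$. The central idea is to read the RATTLE configuration update as a constrained shooting problem and to invert it for the velocity. Recall that the position half of one RATTLE step (with $L=1$) produces
\[
x_1 = x_0 + \Delta t\, v_0 - \frac{\Delta t^2}{2}\nabla_x U(x_0) - \frac{\Delta t^2}{2} C(x_0)^T \lambda_{(r)},
\]
subject to the constraint $c(x_1) = 0$, followed by a velocity update that solves a linear system for $\lambda_{(v)}$ so that $C(x_1) v_1 = 0$. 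Given that $x_1$ is already assumed to lie on $\mathcal{M}$, the strategy is to solve this first equation for the pair $(v_0, \lambda_{(r)})$.

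First I would rearrange the position equation to isolate the unknowns, writing
\[
\Delta t\, v_0 - \frac{\Delta t^2}{2} C(x_0)^T \lambda_{(r)} = x_1 - x_0 + \frac{\Delta t^2}{2}\nabla_x U(x_0) =: b.
\]
The velocity $v_0$ is constrained to the tangent space via $C(x_0) v_0 = 0$, which is an $m$-dimensional linear condition, while $\lambda_{(r)} \in \mathbb{R}^m$ supplies exactly $m$ free parameters. The natural decomposition is to split $v_0$ and $b$ into their tangential and normal components using the projector $P_{\mathcal{M}}(x_0) = I - C(x_0)^T(C(x_0)C(x_0)^T)^{-1}C(x_0)$. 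Projecting the displacement equation onto the tangent space eliminates the $C(x_0)^T\lambda_{(r)}$ term (since $P_{\mathcal{M}}(x_0) C(x_0)^T = 0$), yielding $\Delta t\, v_0 = P_{\mathcal{M}}(x_0)\, b$, which determines $v_0$ explicitly as a tangent vector. Projecting onto the normal space then determines $\lambda_{(r)}$ uniquely from the remaining normal component of $b$, using full rank of $C(x_0)$ (Assumption~\ref{discrete-assumption-1}) to invert $C(x_0)C(x_0)^T$. Once $v_0$ and $x_1$ are fixed, the velocity $v_1$ and the multiplier $\lambda_{(v)} = \lambda_1$ are determined by the linear RATTLE velocity update, which is uniquely solvable by the same full-rank argument.

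The subtlety, and the main obstacle, is that the RATTLE position equation only enforces $c(x_1)=0$ implicitly: the tangential projection gives a candidate $v_0$, but one must verify that the resulting $x_1$ produced by the \emph{full} nonlinear RATTLE map indeed coincides with the prescribed target rather than merely satisfying the linearized constraint. For general $\Delta t$ the constraint $c(x_1)=0$ is nonlinear in $v_0$, so the clean tangential/normal splitting above is exact only to leading order; the genuine statement requires that, for $\Delta t$ sufficiently small, the implicit relation defining $\lambda_{(r)}$ admits a solution consistent with hitting $x_1$. I would handle this by treating the displacement $x_1 - x_0$ as fixed and $\Delta t$ as the small parameter, and invoke the implicit function theorem on the map $(v_0, \lambda_{(r)}) \mapsto (C(x_0)v_0,\, c(x_1(v_0,\lambda_{(r)})))$ at $\Delta t \to 0$, where the Jacobian in $\lambda_{(r)}$ is $-\tfrac{\Delta t^2}{2} C(x_1)C(x_0)^T$, nonsingular by full rank. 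This guarantees existence (and local uniqueness) of finite $v_0$ and $\lambda_{(r)}$ for small $\Delta t$, and finiteness of $v_1, \lambda_1$ follows immediately since they solve a well-posed linear system. Connectedness of $\mathcal{M}$ (Assumption~\ref{discrete-assumption-1}) is not needed for a single step but underlies the broader irreducibility argument for which this accessibility result is the key ingredient.
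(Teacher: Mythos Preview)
The paper does not supply its own proof here; it simply cites \cite{Br2012}[Theorem 2] (and the underlying results in Marsden and Hairer--Lubich--Wanner). Your first two paragraphs reproduce exactly the argument in that reference: once $x_0,x_1\in\mathcal{M}$ are \emph{prescribed}, the RATTLE position update is a genuinely linear equation in the unknowns $(v_0,\lambda_{(r)})$, and the tangential/normal splitting via $P_{\mathcal{M}}(x_0)$ solves it outright using only the full-rank hypothesis of Assumption~\ref{discrete-assumption-1}. The velocity update is then a well-posed linear system for $(v_1,\lambda_{(v)})$. That is the whole proof.

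Your third paragraph, however, manufactures a difficulty that is not there and then mishandles it. Because $x_1$ is given on $\mathcal{M}$, the condition $c(x_1)=0$ is satisfied automatically; there is no ``linearized constraint'' versus ``full nonlinear constraint'' distinction, and the tangential/normal decomposition is exact, not merely leading order. Your proposed implicit-function-theorem argument is also ill-posed as written: holding $x_1-x_0$ fixed while sending $\Delta t\to 0$ forces $|v_0|\to\infty$, so there is no finite base point at which to linearize; and the Jacobian block $C(x_1)C(x_0)^T$ need not be invertible for arbitrary $x_0,x_1$ (full rank of each factor does not imply their product is nonsingular). Simply drop that paragraph---the linear-algebra argument you already gave is complete and matches the cited source.
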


\begin{proof}
Found in \cite{Br2012}[Theorem 2] and is an extension of the results of \cite{Ma2001}[Theorem 2.1.1] and \cite{Ha2006a}[Theorem 5.6, Section IX.5.2].
\end{proof}

\begin{theorem}[$\mu$-irreducible]\label{irreducible}
Let $U \in C^{2}(\mathcal{M})$, and under assumptions \ref{discrete-assumption-1} and \ref{discrete-assumption-2} we have that for any $x \in \mathcal{M}$, and measurable set $A \subset \mathcal{M}$ with positive measure. Then there exists an $n \in \mathbb{N}$ such that
\[K^{n}(x,A) > 0, \]
where $K$ denotes the marginal transition kernel defined on $\mathcal{M}$ of Algorithm \ref{alg:3}.
\end{theorem}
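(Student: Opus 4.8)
The plan is to establish a local minorisation for the marginal kernel $K$ and then exploit the connectedness of $\mathcal{M}$ from Assumption \ref{discrete-assumption-1} to chain finitely many local moves together. First I would reduce the analysis to a single RATTLE step: conditioning on the event $\{T \le \Delta t_{\max}\}$, which has strictly positive probability $1 - e^{-\lambda \Delta t_{\max}}$ under $T \sim \exp(\lambda)$, forces $L = \lceil T/\Delta t_{\max} \rceil = 1$ and $\Delta t = T \in (0,\Delta t_{\max}]$. On this event $\Psi = \textnormal{Rev}\circ N \circ \Psi^{1}_{\Delta t}$, and since the momentum flip $N$ does not alter the position component while $\textnormal{Rev}$ accepts for $\Delta t$ sufficiently small (cf. Remark \ref{reversibility}), the configuration proposal is simply the $x$-component of $\Psi^{1}_{\Delta t}(x,v)$.

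Next I would prove that from any $x \in \mathcal{M}$ the one-step kernel places positive mass on every positive-measure subset of the geodesic ball $\mathcal{B}_{r}(x)$. Fix $x' \in \mathcal{B}_{r}(x)$. By Theorem \ref{accessibility} there is a finite velocity $v$ for which $x'$ equals the $x$-component of $\Psi^{1}_{\Delta t}(x,v)$, and by Assumption \ref{discrete-assumption-2} this velocity, together with its Lagrange multipliers, is unique. Writing $F_{x}(v)$ for the $x$-component of $\Psi^{1}_{\Delta t}(x,v)$, the uniqueness in Assumption \ref{discrete-assumption-2} combined with the full-rank hypothesis of Assumption \ref{discrete-assumption-1} lets me apply the implicit function theorem to the RATTLE equations and conclude that $F_{x}$ is a local diffeomorphism from a neighbourhood of $v$ in $T_{x}\mathcal{M}$ onto a neighbourhood of $x'$ in $\mathcal{M}$. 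Since the refreshed velocity is drawn from a Gaussian with full support on $T_{x}\mathcal{M}$, the change-of-variables formula shows that the configuration proposal has a continuous, strictly positive density on $\mathcal{B}_{r}(x)$ with respect to the surface measure $\sigma_{\mathcal{M}}$. The Metropolis factor $\min\{1,\exp(H(x,v)-H(x^{*},v^{*}))\}$ is strictly positive because $U \in C^{2}(\mathcal{M})$ and the kinetic energy are finite, so multiplying by it preserves positivity. Hence $K(x,B) > 0$ for every measurable $B \subseteq \mathcal{B}_{r}(x)$ of positive measure.

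Finally I would remove the locality restriction using connectedness. Given $x \in \mathcal{M}$ and a positive-measure set $A$, choose a Lebesgue density point $x_{1} \in A$, so that $A \cap \mathcal{B}_{\rho}(x_{1})$ has positive measure for all small $\rho$. Since $\mathcal{M}$ is connected, hence path-connected as a smooth manifold, join $x$ to $x_{1}$ by a continuous path and cover it by finitely many balls to obtain points $x = y_{0}, y_{1}, \dots, y_{n} = x_{1}$ with $y_{k+1} \in \mathcal{B}_{r}(y_{k})$. Propagating an open positive-measure neighbourhood forward through the single-step density established above, the $n$-step kernel $K^{n}(x,\cdot)$ inherits a strictly positive density on a neighbourhood of $x_{1}$; intersecting with $A$ and using that $x_{1}$ is a density point yields $K^{n}(x,A) > 0$.

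The main obstacle is the second step, namely verifying that $F_{x}$ is a genuine local diffeomorphism so that the pushforward of the velocity Gaussian is absolutely continuous with a positive density near $x'$. This is precisely where Assumption \ref{discrete-assumption-2} (local uniqueness of the velocity and Lagrange multipliers) and Assumption \ref{discrete-assumption-1} (full-rank constraint Jacobian, which guarantees invertibility of $C(x)C(x)^{T}$ appearing in the projector $P_{\mathcal{M}}$) are essential, through the implicit function theorem applied to the nonlinear RATTLE system. A secondary technical point is carrying an \emph{open} set of positive measure through the chain of compositions rather than tracking single points, which is what ensures the final intersection with $A$ retains positive measure.
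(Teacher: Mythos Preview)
Your proposal is correct and follows essentially the same approach as the paper: use accessibility (Theorem~\ref{accessibility}) together with the full support of the Gaussian velocity refreshment to obtain a positive one-step density on a geodesic $r$-ball, then chain finitely many such steps along a path connecting $x$ to the target set. You supply more technical detail than the paper---explicitly conditioning on $\{T\le\Delta t_{\max}\}$ to force $L=1$, invoking the implicit function theorem for the local diffeomorphism, and carefully propagating open sets rather than single points---but the paper's proof (which follows \cite{Br2012}) has the same skeleton, using a covering of $A$ by $r/2$-balls in place of your Lebesgue density point and chaining along a geodesic.
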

\begin{proof}
See Section \ref{sec:proofMetropolis} of the Supplementary Material.
\end{proof}

\begin{lemma}[Aperiodic]\label{Aperiodic} Let $U \in C^{2}(\mathcal{M})$ and under Assumptions \ref{discrete-assumption-1} and \ref{discrete-assumption-2} Algorithm \ref{alg:3} is aperiodic.
\end{lemma}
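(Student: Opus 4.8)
The plan is to deduce aperiodicity from the $\mu$-irreducibility already established in Theorem \ref{irreducible}, combined with the observation that the Metropolis--Hastings step of Algorithm \ref{alg:3} rejects with strictly positive probability on a set of positive measure. Recall that the marginal kernel $K$ on $\mathcal{M}$ resamples the velocity at the start of each step, proposes $(x^{*},v^{*}) = \Psi(x,v)$, and accepts with probability $\alpha(x,v,T) = \min\{1,\exp(H(x,v)-H(x^{*},v^{*}))\}$; upon rejection the position chain remains at $x$. Hence $K(x,\cdot)$ carries a point mass at $x$ of weight $r(x) := \mathbb{E}_{v,T}[\,1-\alpha(x,v,T)\,]$, where the expectation runs over the refreshed velocity $v\sim\mathcal{N}(0,I\mid C(x)v=0)$ and the randomized time $T\sim\exp(\lambda)$ (and hence over $L$ and $\Delta t$). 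The crux is to show $r(x)>0$ for $x$ in a set of positive $\pi_{\mathcal{H}}$-measure.

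First I would establish that the proposal does not conserve $H$ exactly. Since $N$ and $\textnormal{Rev}$ leave $H$ invariant, $H(x^{*},v^{*}) = H(\Psi^{L}_{\Delta t}(x,v))$, and as RATTLE is only a second-order integrator, for a $C^{2}$ potential and any fixed $\Delta t>0$ the energy error $\Delta H(x,v) := H(\Psi^{L}_{\Delta t}(x,v)) - H(x,v)$ does not vanish identically on $T\mathcal{M}$. Because $\Psi = \textnormal{Rev}\circ N\circ\Psi^{L}_{\Delta t}$ is an involution on the set where the reversibility check passes (this is precisely the symmetry underlying Proposition \ref{mu-invariant}), one has $\Delta H(\Psi(x,v)) = -\Delta H(x,v)$ there, so the image of any configuration with $\Delta H<0$ has $\Delta H>0$; consequently $\{\Delta H>0\}$ has positive Liouville measure whenever $\Delta H\not\equiv 0$. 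On this open set $\alpha<1$, and by continuity of $H$ and of the flow in $(x,v)$ the corresponding velocity slices have positive Gaussian measure for $x$ ranging over a positive-measure subset $A\subset\mathcal{M}$; averaging over $v$ and $T$ then gives $r(x)>0$ on $A$. Failures of the optional check $\textnormal{Rev}$ only add to the rejection probability and do not affect the conclusion.

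Finally I would argue by contradiction. Suppose $K$ has period $d\geq 2$; then by $\mu$-irreducibility (Theorem \ref{irreducible}) there is a cyclic decomposition $\mathcal{M}=\bigcup_{i=0}^{d-1}D_{i}$ into disjoint sets with $K(x,D_{(i+1)\bmod d})=1$ for a.e.\ $x\in D_{i}$, so that $K(x,D_{i})=0$ on $D_i$. The set $A$ meets some $D_{i}$ in positive measure; for $x\in A\cap D_{i}$ (off the null exceptional set) we have $K(x,D_{i})\geq K(x,\{x\})\geq r(x)>0$, a contradiction. Hence $d=1$ and $K$ is aperiodic.

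The main obstacle is the middle step: ruling out the degenerate case $\alpha\equiv 1$, i.e.\ showing that RATTLE genuinely raises the energy on a positive-measure set rather than conserving it exactly. The cleanest route is the involution argument above, which converts a single point of non-conservation into a positive-measure set of strict energy increase, with $\Delta H\not\equiv 0$ guaranteed by the non-exact (second-order) nature of the integrator. Assumptions \ref{discrete-assumption-1}--\ref{discrete-assumption-2} are used to ensure that the proposal map $\Psi^{L}_{\Delta t}$ and its Lagrange multipliers are well defined for small $\Delta t$, so that these measure-theoretic statements are meaningful.
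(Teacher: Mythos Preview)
The paper does not supply its own argument here; it simply cites \cite{Br2012}[Lemma 1]. Your approach---deducing aperiodicity from the $\mu$-irreducibility of Theorem \ref{irreducible} together with a strictly positive rejection probability (hence a self-loop $K(x,\{x\})>0$) on a set of positive measure---is the standard route for Metropolised chains and is almost certainly what the cited lemma does, since Algorithm \ref{alg:3} is the randomized-time analogue of the scheme in \cite{Br2012}. The involution identity $\Delta H(\Psi(z))=-\Delta H(z)$ that you use to upgrade a single point of energy non-conservation to a positive-Liouville-measure set $\{\Delta H>0\}$ is correct and is a clean way to organise the argument.

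The one soft spot is the inference ``RATTLE is only a second-order integrator, hence $\Delta H\not\equiv 0$.'' Order of accuracy alone does not logically preclude exact conservation of $H$: a symplectic integrator conserves a modified Hamiltonian exactly, and for degenerate systems (e.g.\ $U$ constant on a flat manifold) this can coincide with $H$. For a generic $U\in C^{2}(\mathcal{M})$ on a genuinely curved constraint manifold the degeneracy does not arise, and it suffices to exhibit a single $(x,v,\Delta t)$ with $\Delta H(x,v)\neq 0$---after which your involution argument and the continuity of $\Delta H$ deliver the positive-measure set. Either supply such a one-line check, or simply note, as the paper does, that the conclusion is inherited verbatim from \cite{Br2012}.
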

\begin{proof}
Proof given in \cite{Br2012}[Lemma 1].
\end{proof}
\begin{theorem}[Ergodicity]
 Let $U \in C^{2}(\mathcal{M})$ and under Assumptions \ref{discrete-assumption-1} and \ref{discrete-assumption-2} we have for $\mu-$almost all starting values $x$
 \[\lim_{t \to \infty} \int_{\mathcal{M}}\lvert K^{t}(x,y) - \pi_{\mathcal{H}}(y)\vert \sigma_{\mathcal{M}}(dy) = 0. \]
\end{theorem}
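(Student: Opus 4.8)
The plan is to obtain the stated total-variation convergence by assembling the three structural properties already established for the marginal position chain and then invoking a classical ergodicity theorem for $\psi$-irreducible, aperiodic Markov chains. First I would note that, because the velocity is completely resampled at the start of every iteration of Algorithm \ref{alg:3} (step~1), the position sequence $(x_k)$ is itself a time-homogeneous Markov chain on $\mathcal{M}$ with transition kernel $K$; no dependence on the previous velocity survives. Marginalising the $\mu$-invariance established in Proposition \ref{mu-invariant} over the velocity fibre then shows that $\pi_{\mathcal{H}}$ — the position marginal of $\mu$ with respect to $\sigma_{\mathcal{M}}$ — is an invariant probability measure for $K$.

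Next I would collect the two remaining hypotheses in a form matched to the irreducibility measure $\pi_{\mathcal{H}}$. Theorem \ref{irreducible} provides $\pi_{\mathcal{H}}$-irreducibility: for every $x\in\mathcal{M}$ and every measurable $A$ of positive measure there is some $n$ with $K^n(x,A)>0$. Lemma \ref{Aperiodic} supplies aperiodicity. With an invariant probability measure in hand, $\pi_{\mathcal{H}}$-irreducibility alone forces positive recurrence and uniqueness of the invariant law; adding aperiodicity, the standard convergence theorem for such chains (as in Tierney's treatment of MCMC, or Meyn and Tweedie) yields
\[
\lim_{t\to\infty} \bigl\| K^{t}(x,\cdot) - \pi_{\mathcal{H}} \bigr\|_{\mathrm{TV}} = 0
\]
for $\pi_{\mathcal{H}}$-almost every $x$, which is exactly $\mu$-almost every starting value since the two measures are mutually absolutely continuous on $\mathcal{M}$. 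Finally I would rewrite this total-variation distance as the $L^{1}$ norm of the difference of densities against the Hausdorff measure $\sigma_{\mathcal{M}}$; up to the usual factor of two (irrelevant since the limit is $0$) this is precisely the integral $\int_{\mathcal{M}}\lvert K^{t}(x,y)-\pi_{\mathcal{H}}(y)\rvert\,\sigma_{\mathcal{M}}(dy)$ appearing in the statement.

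The genuinely routine parts here are the marginalisation of invariance and the identification of the integral with the total-variation distance; the substantive work has already been discharged in the cited results. Accordingly, the main point requiring care is bookkeeping rather than new estimation: I must ensure that irreducibility (Theorem \ref{irreducible}) and aperiodicity (Lemma \ref{Aperiodic}) are expressed with respect to the \emph{same} reference measure $\pi_{\mathcal{H}}$, and that the reversibility hypothesis underlying Proposition \ref{mu-invariant} is in force so that $\pi_{\mathcal{H}}$ is genuinely stationary for $K$. I expect the only subtlety to be confirming that the $\mu$-almost-everywhere conclusion — rather than convergence from \emph{every} starting point — is the correct and sufficient output of the irreducible-plus-aperiodic-plus-invariant package; it is, since an everywhere statement would additionally require verifying Harris recurrence, which is neither claimed nor needed here.
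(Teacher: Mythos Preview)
Your proposal is correct and follows essentially the same route as the paper: assemble invariance (Proposition \ref{mu-invariant}), $\mu$-irreducibility (Theorem \ref{irreducible}), and aperiodicity (Lemma \ref{Aperiodic}), then invoke Tierney's convergence theorem. The paper's own proof is a single sentence citing these three ingredients and \cite{Ti1994}[Theorem 1]; your version simply makes explicit the marginalisation to the position chain and the identification of the integral with total-variation distance, which the paper leaves implicit.
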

\begin{proof}
Since Algorithm \ref{alg:3} is $\mu-$invariant by Theorem \ref{mu-invariant}, $\mu-$irreducible by Theorem \ref{irreducible} and aperiodic by Theorem \ref{Aperiodic}, the required result holds by \cite{Ti1994}[Theorem 1].
\end{proof}


\section{Numerical results}\label{numerical-results}
We perform numerical simulations of the RT-RMHMC algorithm and compare to the RMHMC algorithm of \cite{Br2012,Gi2011}, specifically exploring the underlying dynamics of the two processes. MCMC schemes are used to approximate expected values of certain functions $f$ over some distribution with pdf $\pi$
\[ \mathbb{E}_{\pi}(f) = \int f(x) \pi(x)dx, \]
where we can estimate this quantity using our MCMC scheme by
\[ 
\overline{f} := \mathbb{E}_{\pi}(f) \approx \frac{1}{M} \sum^{M}_{i=1} f(X^{i}), 
\]
where $X^{i}$ are the Markov chain from our MCMC method.  We  quantify the convergence rate associated to approximation of $\mathbb{E}_{\pi}(f)$ by considering the integrated autocorrelation function and essential sample size.

\subsection{g-BAOAB}
As a comparison method we implemented the g-BAOAB integrator of \cite{Le2016},  a numerical integrator for constrained underdamped Langevin dynamics. Constrained underdamped Langevin dynamics can be described by
\begin{align*}
	\dot{x} &= v \\
	\dot{v} &= -\nabla_{x} U(x) - \gamma v + \sqrt{2\gamma}R(t) - C(x)^{T}\lambda,\\
	\textnormal{such that } 0 &= c(x)\textnormal{ and } 0=  C(x)v,
\end{align*}
where $\gamma$ is a friction coefficient and $R(t)$, is a vector-valued, stationary, zero-mean Gaussian process. The numerical integrator g-BAOAB is a splitting method for such dynamics, which uses similar constrained integrators as that of RT-RMHMC. We note that g-BAOAB is a biased sampling algorithm due to the error in the numerical integrator. For a full description of g-BAOAB and a discussion of the sampling error we refer  to \cite{Le2016}.

\subsection{Test Examples}

We next provide examples of distributions on  implicitly defined manifolds embedded in Euclidean space, with the distributions defined with respect to the Hausdorff measure of the manifold. We will consider two types of constraint manifolds: spheres and Stiefel manifolds.

\subsubsection*{Bingham-Von Mises-Fisher distribution on $S^{n}$}

The first test case is the Bingham-Von Mises-Fisher (BVMF) distribution defined on the $n-$dimensional sphere embedded in $\mathbb{R}^{n+1}$, that is $S^{n}:= \{x \in \mathbb{R}^{n+1} \mid \sum^{n+1}_{i=1}x^{2}_{i} = 1 \}$. The BVMF distribution is the exponential family on $S^{n} \subset \mathbb{R}^{n+1}$ with density of the form
\[
\pi_{\mathcal{H}}(x) \propto \exp{\{ c^{T}x + x^{T}Ax\}},
\]
where $c \in \mathbb{R}^{n+1}$ and $A \in M_{n+1}(\mathbb{R})$ is a symmetric matrix.

We compare the integrated autocorrelation (IAC) of $-\log \pi_{\mathcal{H}}$ of the RT-RMHMC method to that of the RMHMC method introduced in \cite{Gi2011} for a number of distributions with parameters defined in the captions. We also compare the maximum IAC of $x_{i}$ for $i = 1,...,n$ to compare the worst efficiency of the mixing in all dimensions. We compare methods by setting the event rate parameter $\lambda$ of RT-RMHMC to be the deterministic duration parameter of RMHMC (running the dynamics for this duration before momentum randomization). We then compute the integrated autocorrelation of  $-\log \pi_{\mathcal{H}}$ and $x_{i}$ for $i = 1,...,n$ for the two methods for varying choices of $\lambda$ by a Monte Carlo averaging procedure as described in section \ref{IAC}. Regarding the reversibility issue for large choices of stepsize (as discussed in Section \ref{sec:metropolis_adjustment}), for the geometries and distributions chosen, this is shown to exhibit behaviour as in Figure \ref{fig:reversibility-check}, where there is a dramatic change in reversibility failure for a small change in step-size. Before this point all samples generated satisfy reversibility conditions. We simply chose stepsizes which are below this threshold in our simulations.

The results are presented in Figure \ref{IAC-2}. We choose the stepsize in RATTLE to be $\Delta t = 0.001$ and sample $N = 1,000,000$ events  with a burn time of $10\%$ of samples before we compute the Monte Carlo average. We also use lags of up to $M = N/50$, $2$ percent of the number of samples used to estimate  the IAC. As our choice of $\Delta t$ is small, the acceptance rate is high so this process is close to the continous version. The IAC compares the efficiency of the continuous processes.

\begin{figure}
	\centering
	\includegraphics[width=0.4\textwidth]{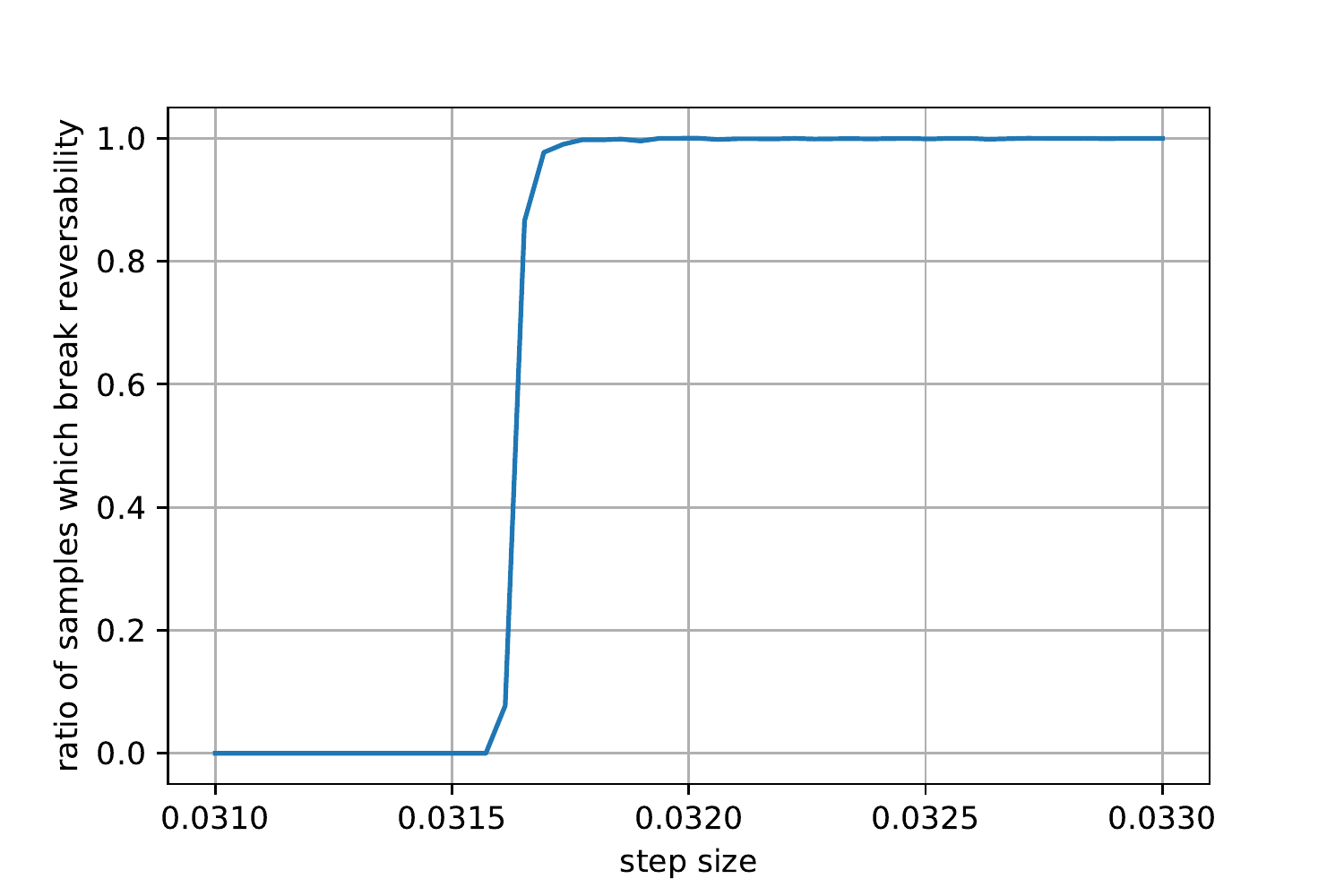}
	\caption{Ratio of samples out of $10^6$ samples which don't satisfy reversibility check for different choices of $\Delta t$ for the BVMF distribution with parameters $A = \text{diag}(-1000,0,1000)$ and $c = (100,0,0)$.}
	\label{fig:reversibility-check}
\end{figure}


\begin{figure}
	\centering
	\begin{subfigure}[b]{0.45\textwidth}
		\centering
		\includegraphics[width = \textwidth]{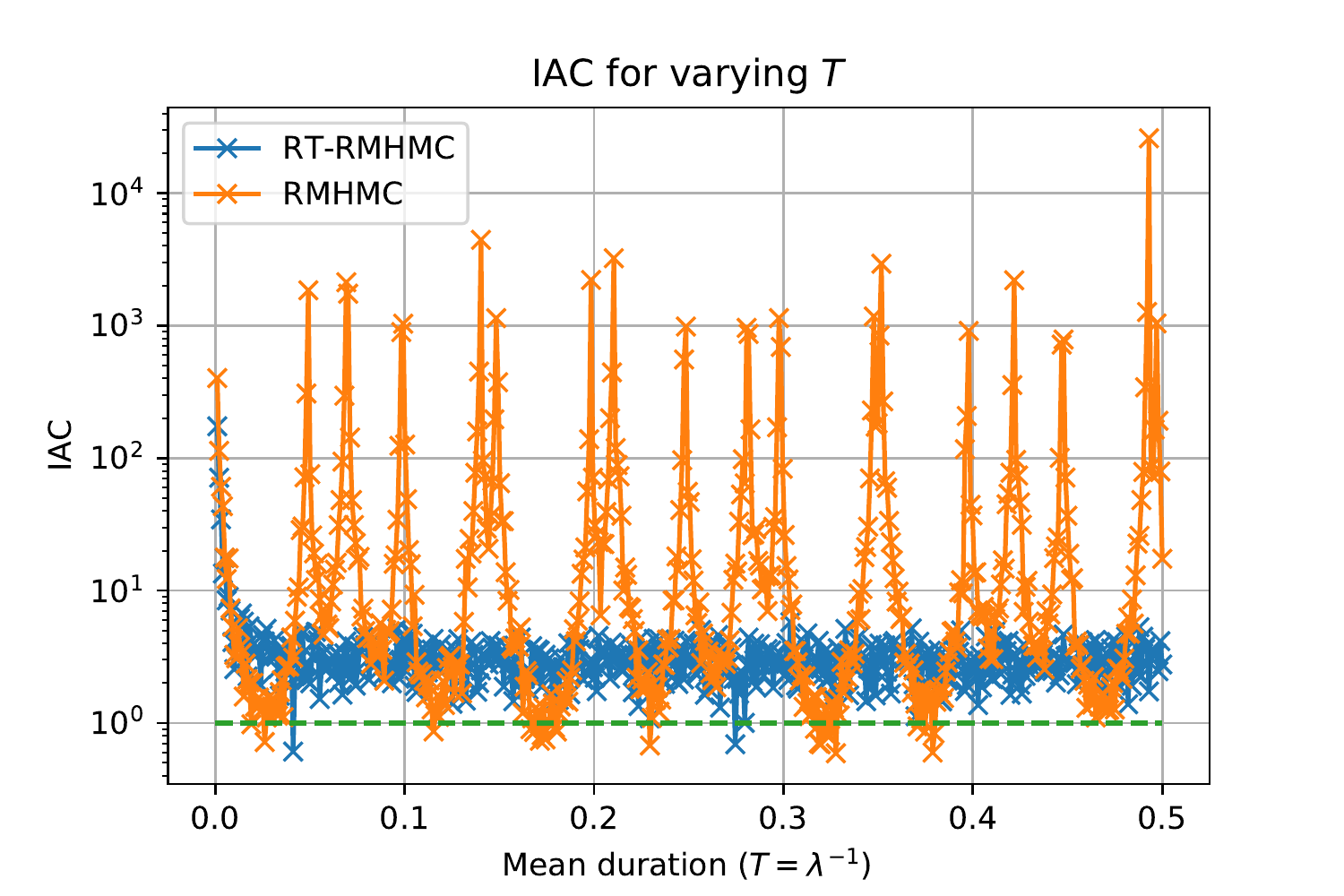}
		\caption{}
		
	\end{subfigure}
	\hfill
	\begin{subfigure}[b]{0.45\textwidth}
		\centering
		\includegraphics[width=\textwidth]{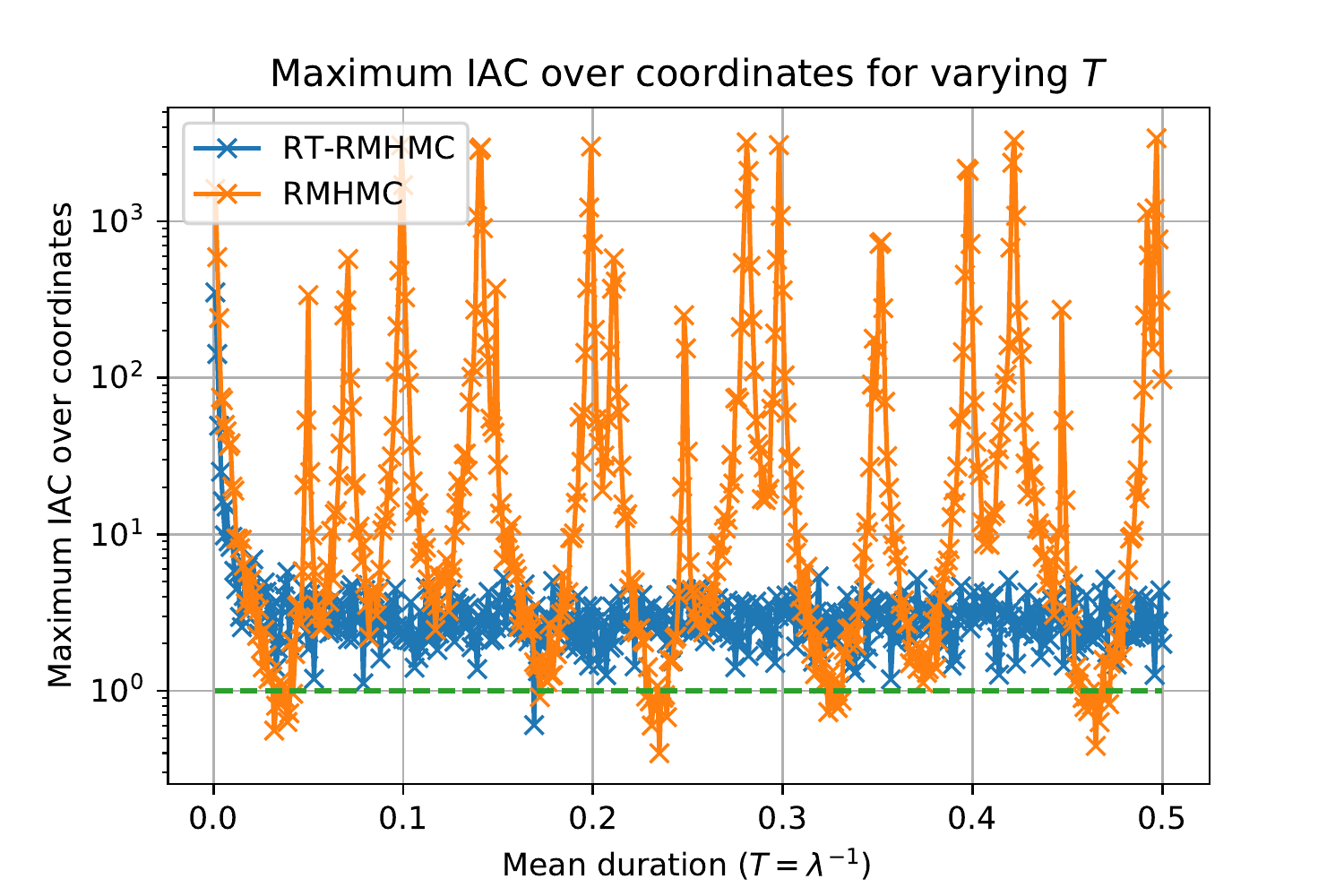}
		\caption{}
		
	\end{subfigure}
	
	\caption{IAC estimates for different choices of $\lambda$ for the BVMF distribution with parameters $A = \text{diag}(-1000,0,1000)$ and $c = (100,0,0)$. Left: IAC of $-\log \pi_{\mathcal{H}}$. Right: Maximum IAC over $x_{1},x_{2}$ and $x_{3}$.}
	\label{IAC-2}
\end{figure}

\begin{figure}
	\centering
	\begin{subfigure}[b]{0.4\textwidth}
		\centering
		\includegraphics[width=\textwidth]{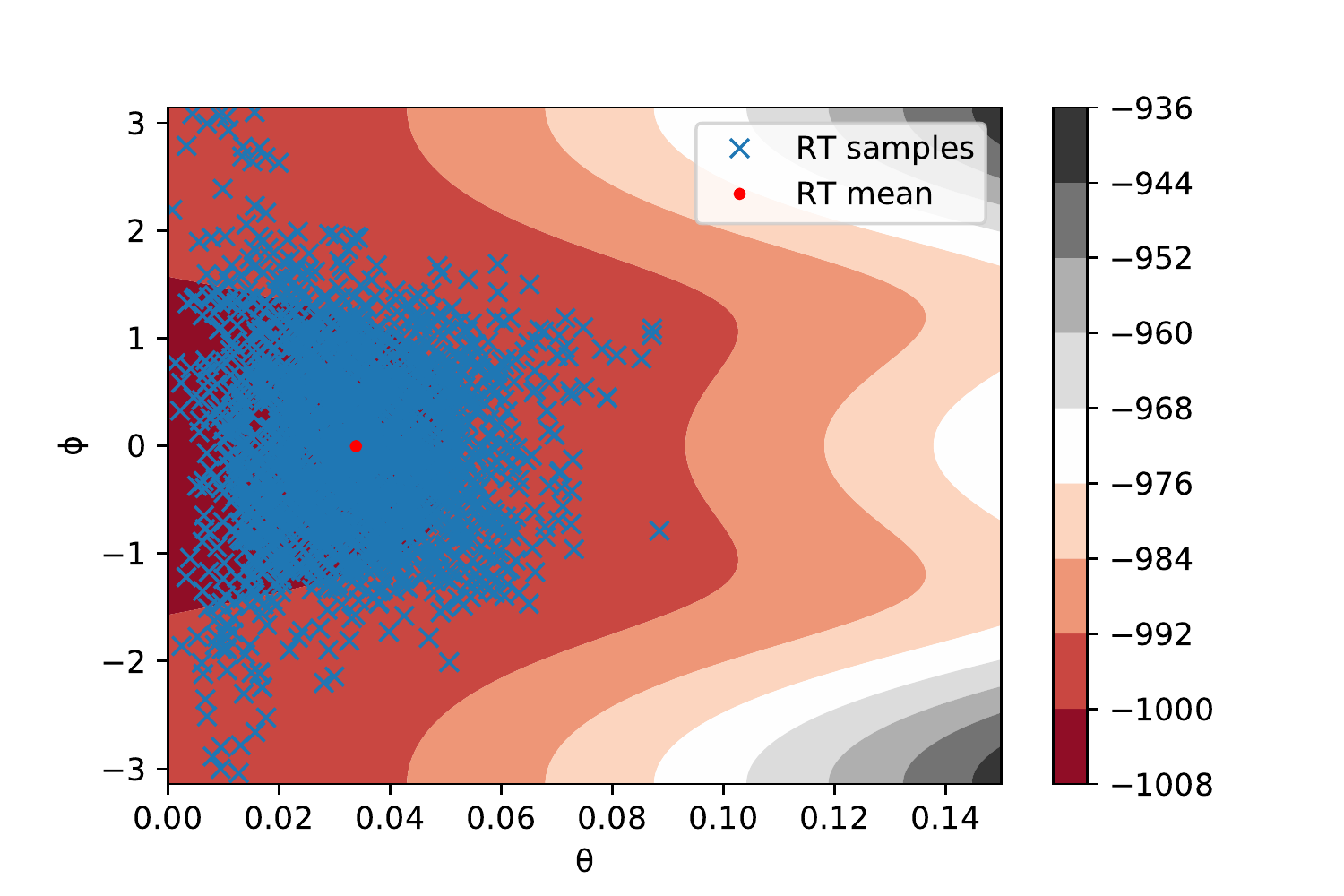}
		\caption{}
		
	\end{subfigure}
	\hspace{1.5cm}
	\begin{subfigure}[b]{0.4\textwidth}
		\centering
		\includegraphics[width=\textwidth]{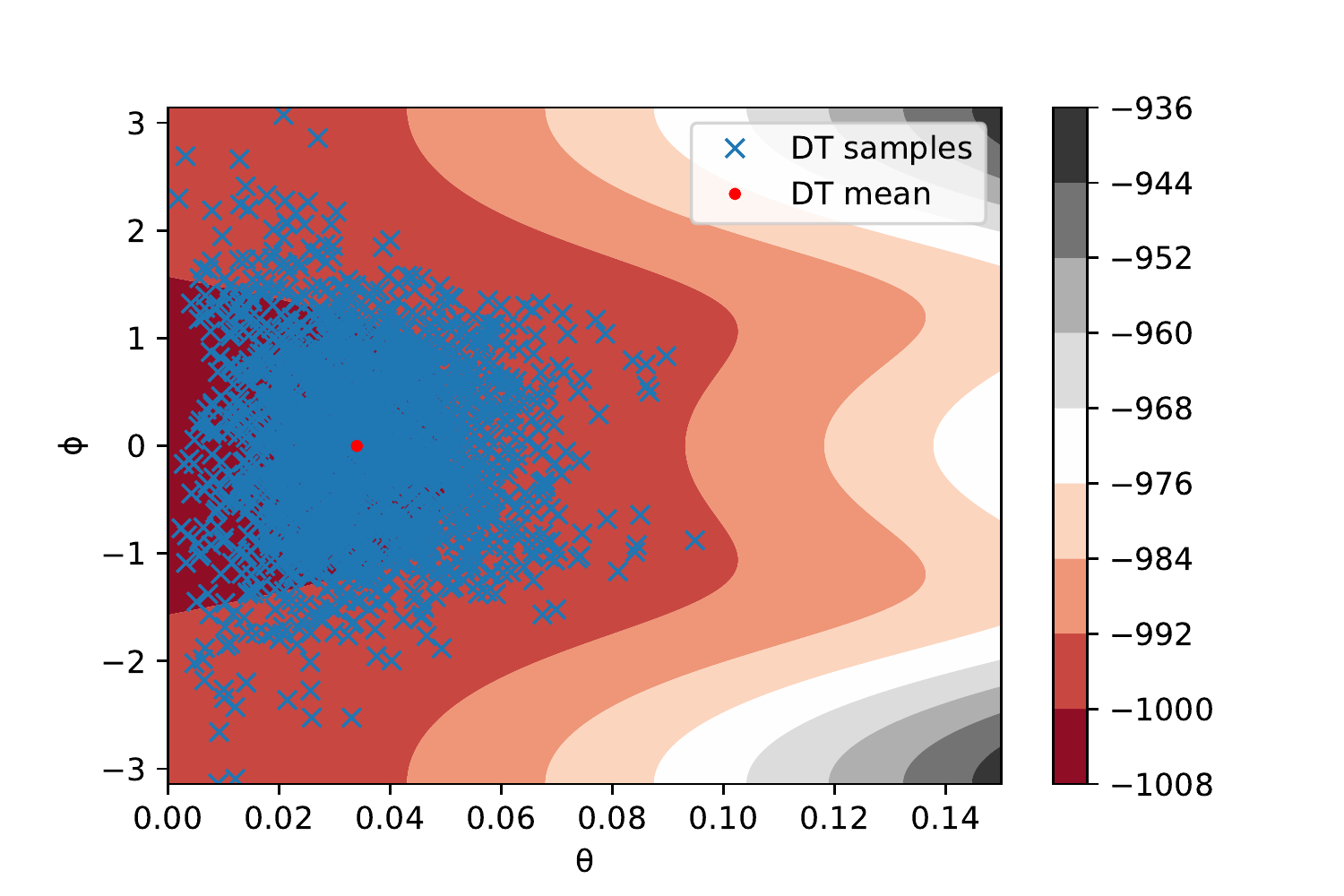}
		\caption{}
		
	\end{subfigure}
	\begin{subfigure}[b]{0.4\textwidth}
		\centering
		\includegraphics[width=\textwidth]{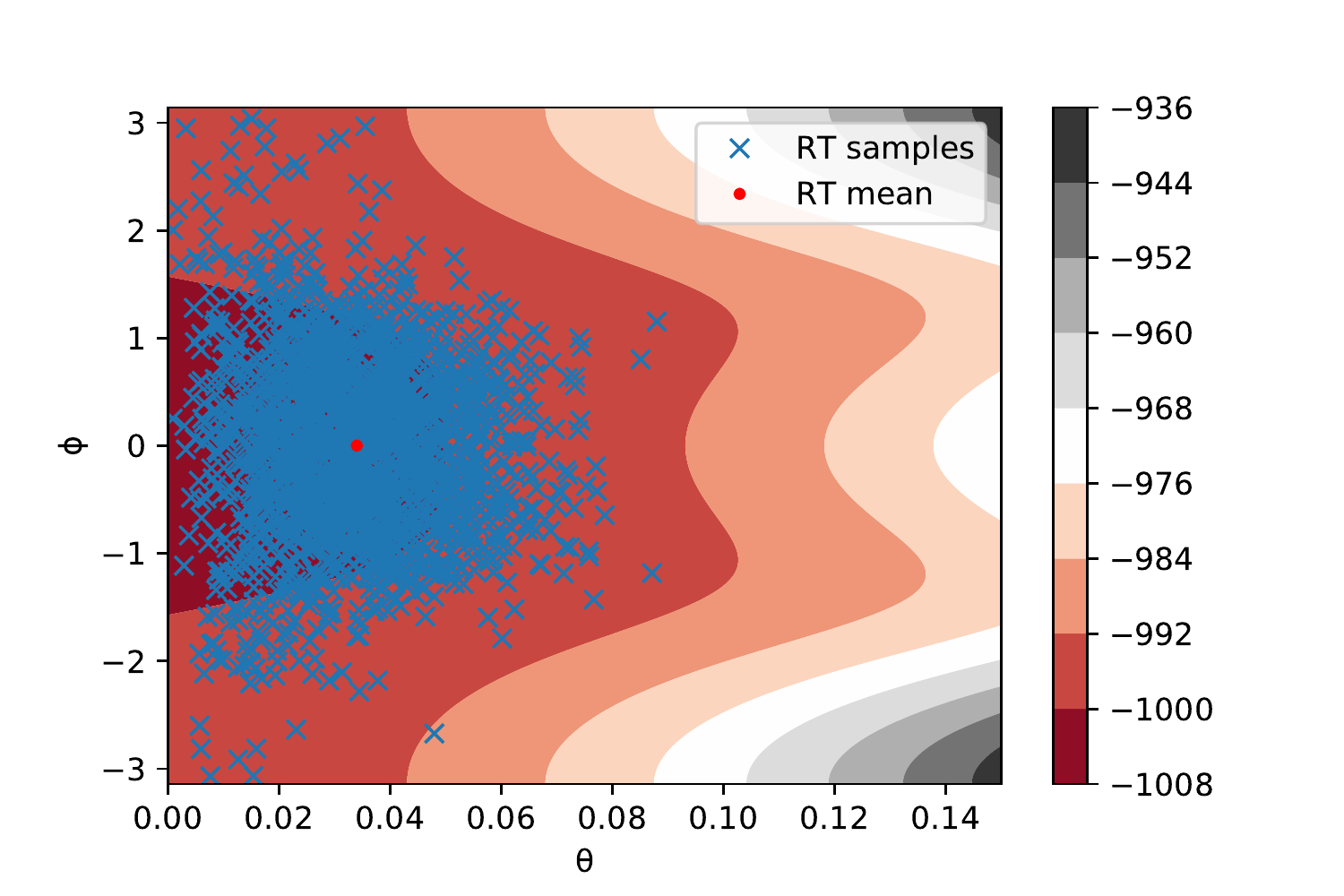}
		\caption{}
		
	\end{subfigure}
	\hspace{1.5cm}
	\begin{subfigure}[b]{0.4\textwidth}
		\centering
		\includegraphics[width=\textwidth]{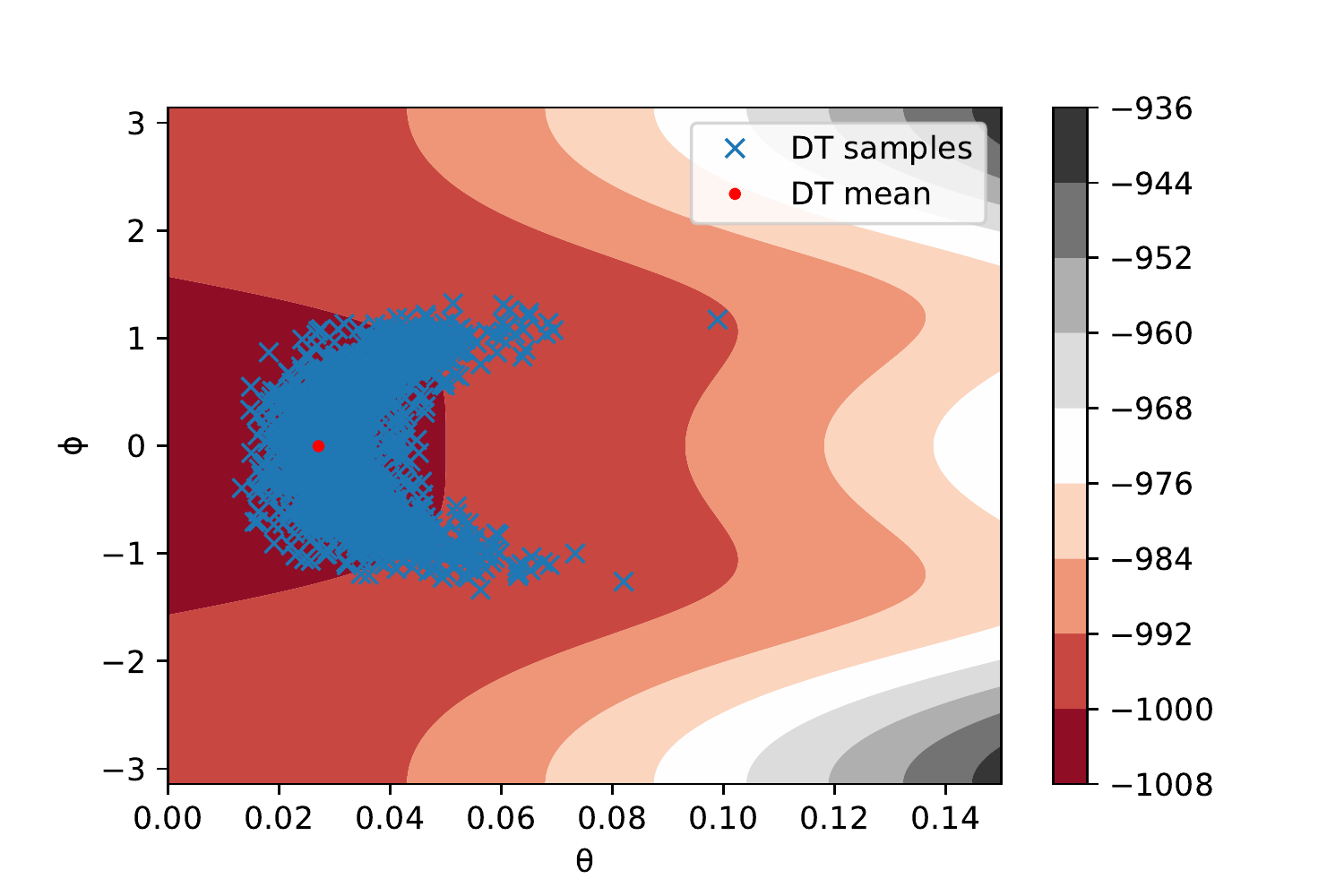}
		\caption{}
	\end{subfigure}
	\caption{Contour plot of $-\log \pi_{\mathcal{H}}$ for the BVMF distribution with parameters $A = \text{diag}(-1000,0,1000)$ and $c = (100,0,0)$. The axis being a 2D parameterisation of $S^2$. The points are 2000 samples after a 8000 sample burn in time. Upper left: RT-RMHMC for $\lambda^{-1} = 0.09$. Upper right: RMHMC for $\lambda^{-1} = 0.09$. Lower left: RT-RMHMC for $\lambda^{-1} = 0.1$. Lower right: RMHMC for $\lambda^{-1} = 0.1$.}
	\label{fig:Contour}
\end{figure}

\begin{figure}[H]
	\centering
	\begin{subfigure}{0.4\textwidth}
		
		\includegraphics[width=\linewidth]{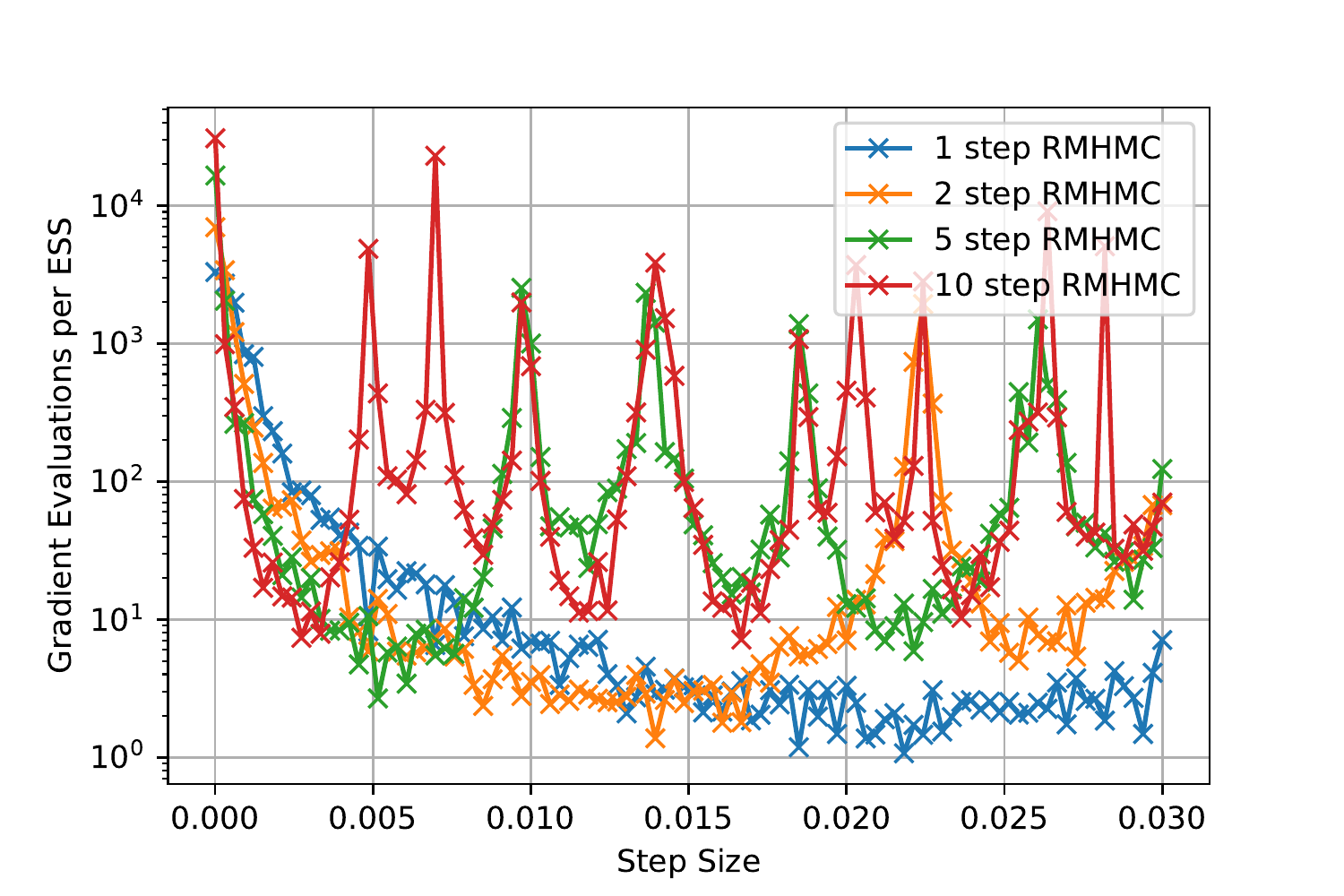}         \caption{}
	\end{subfigure}
	\hfill
	\begin{subfigure}{0.4\textwidth}
		
		\includegraphics[width=\linewidth]{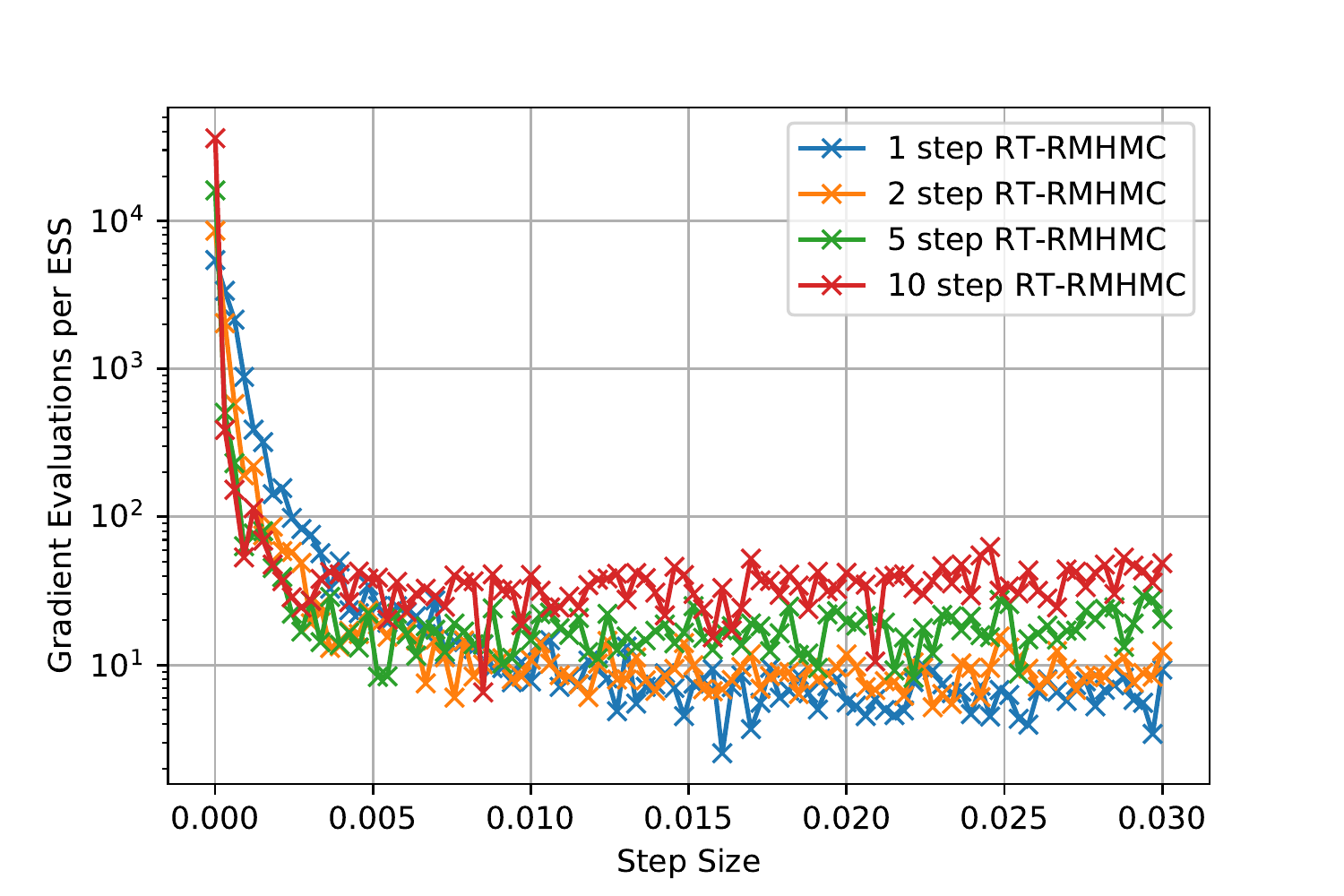}         \caption{}
	\end{subfigure}
	\hfill
	\begin{subfigure}{0.4\textwidth}
		
		\includegraphics[width=\linewidth]{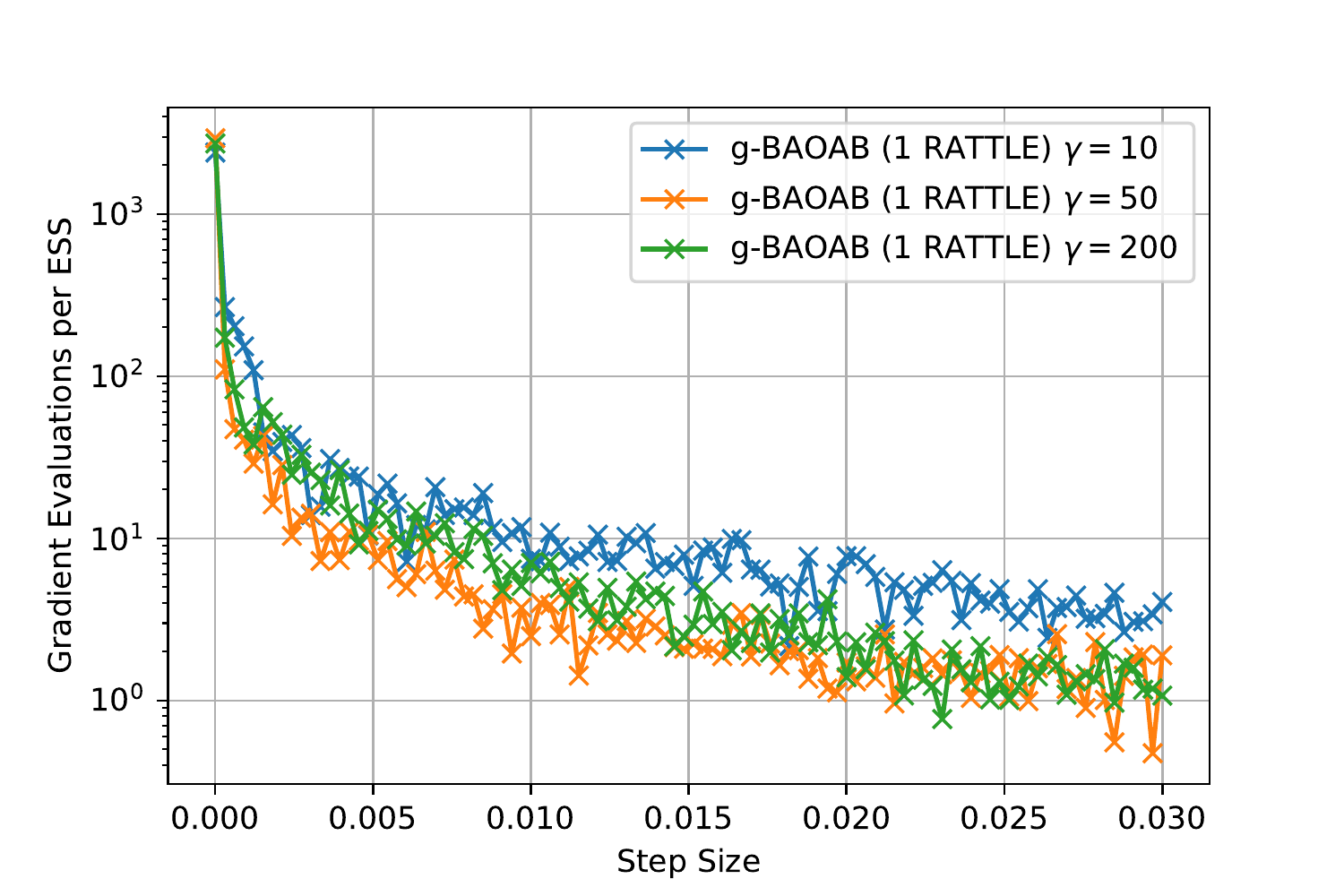}
		\caption{}
		
	\end{subfigure}
	\hfill
	\begin{subfigure}{0.4\textwidth}
		
		\includegraphics[width=\linewidth]{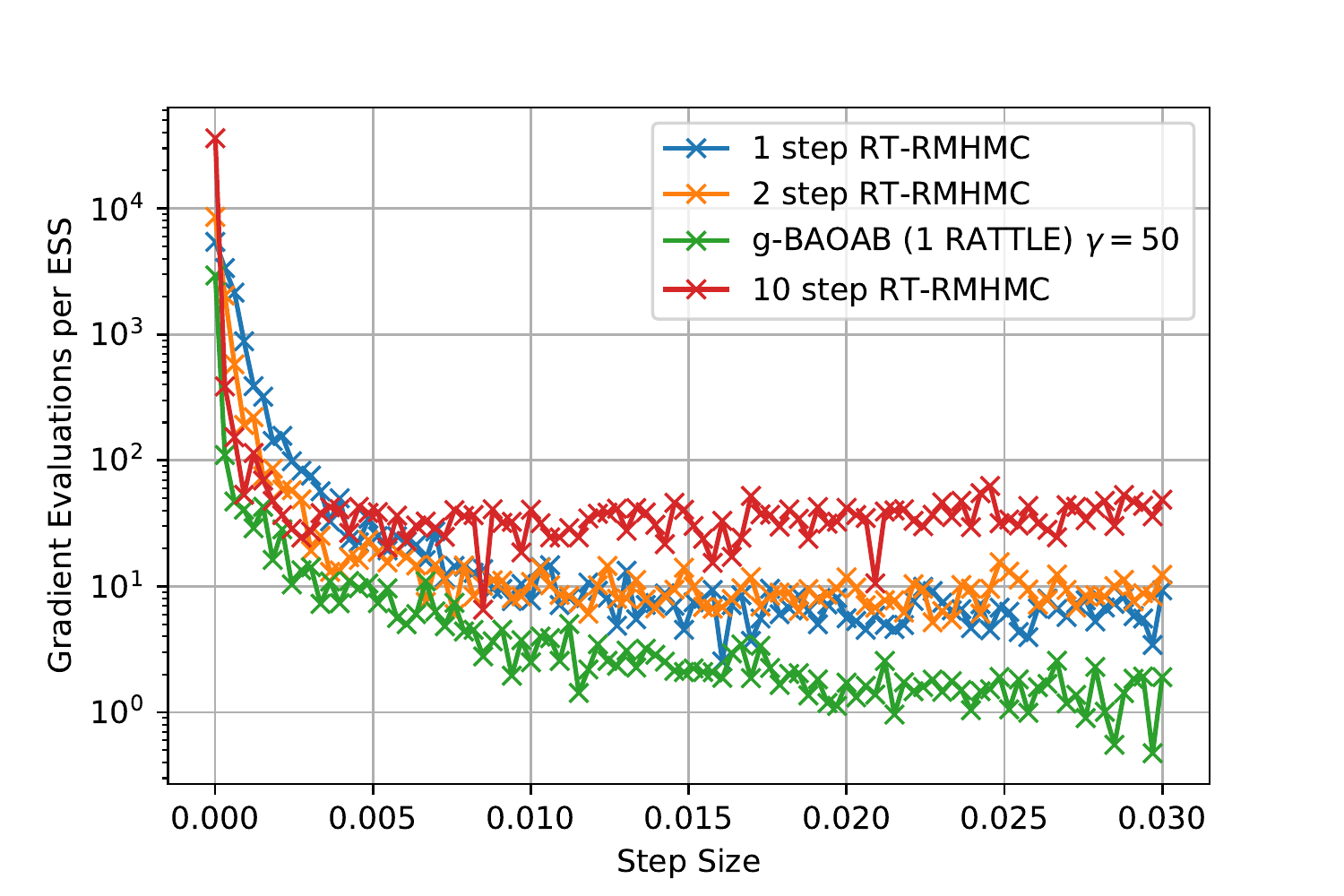}
		\caption{}
		
	\end{subfigure}
	\caption{Gradient evaluation per ESS estimates of $-\log \pi_{\mathcal{H}}$ for the BVMF distribution using 100,000 samples with parameters $A = \text{diag}(-1000,0,1000)$ and $c = (100,0,0)$ and for varying choices of step-size. Upper left: RMHMC. Upper right: RT-RMHMC. Lower left: g-BAOAB. Lower right: g-BAOAB and RT-RMHMC.}
	\label{fig:ESS-BAOAB}
	
\end{figure}

\begin{figure}[H]
	\centering
	\begin{subfigure}{0.4\textwidth}
		
		\includegraphics[width=\linewidth]{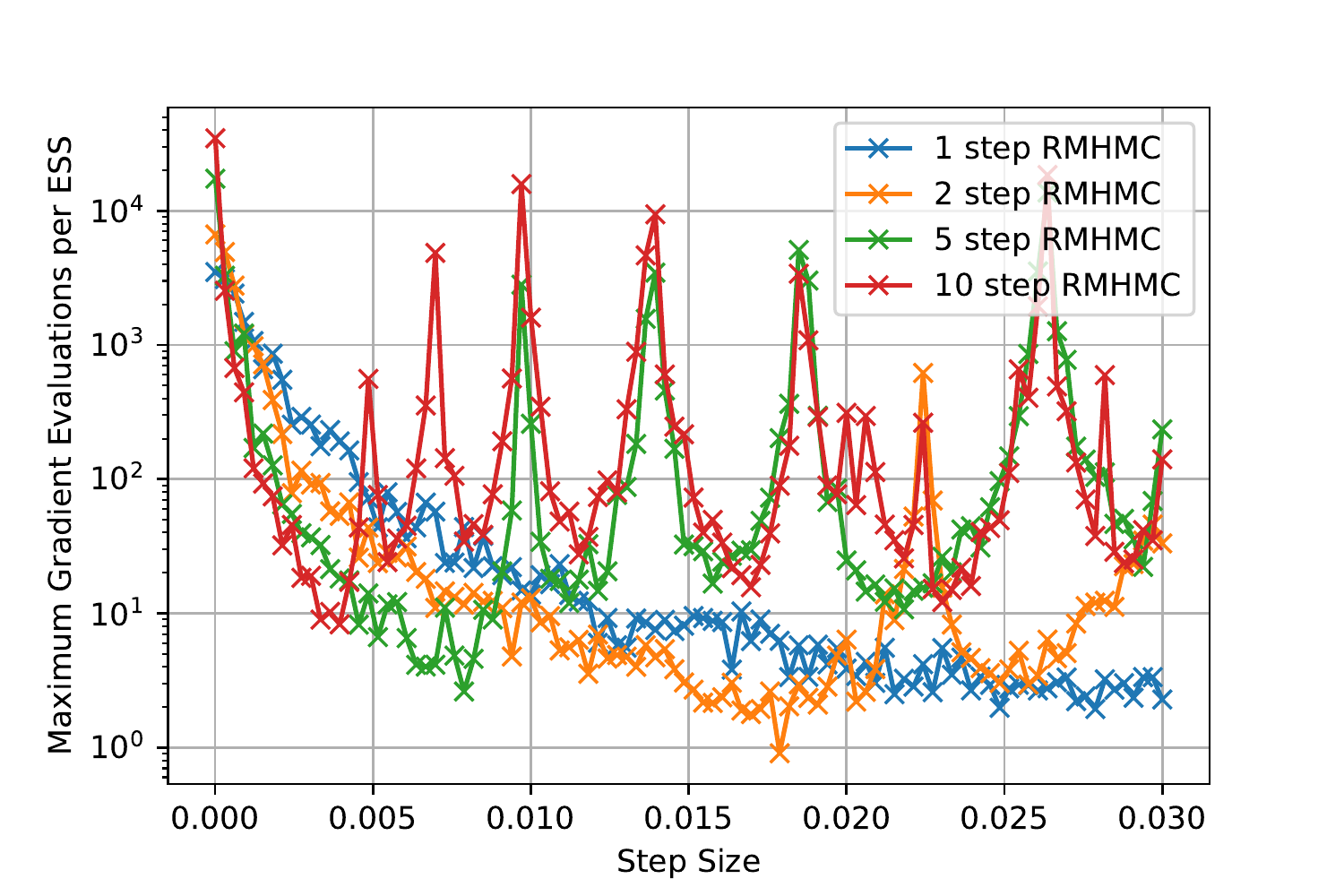}         \caption{}
	\end{subfigure}
	\hfill
	\begin{subfigure}{0.4\textwidth}
		
		\includegraphics[width=\linewidth]{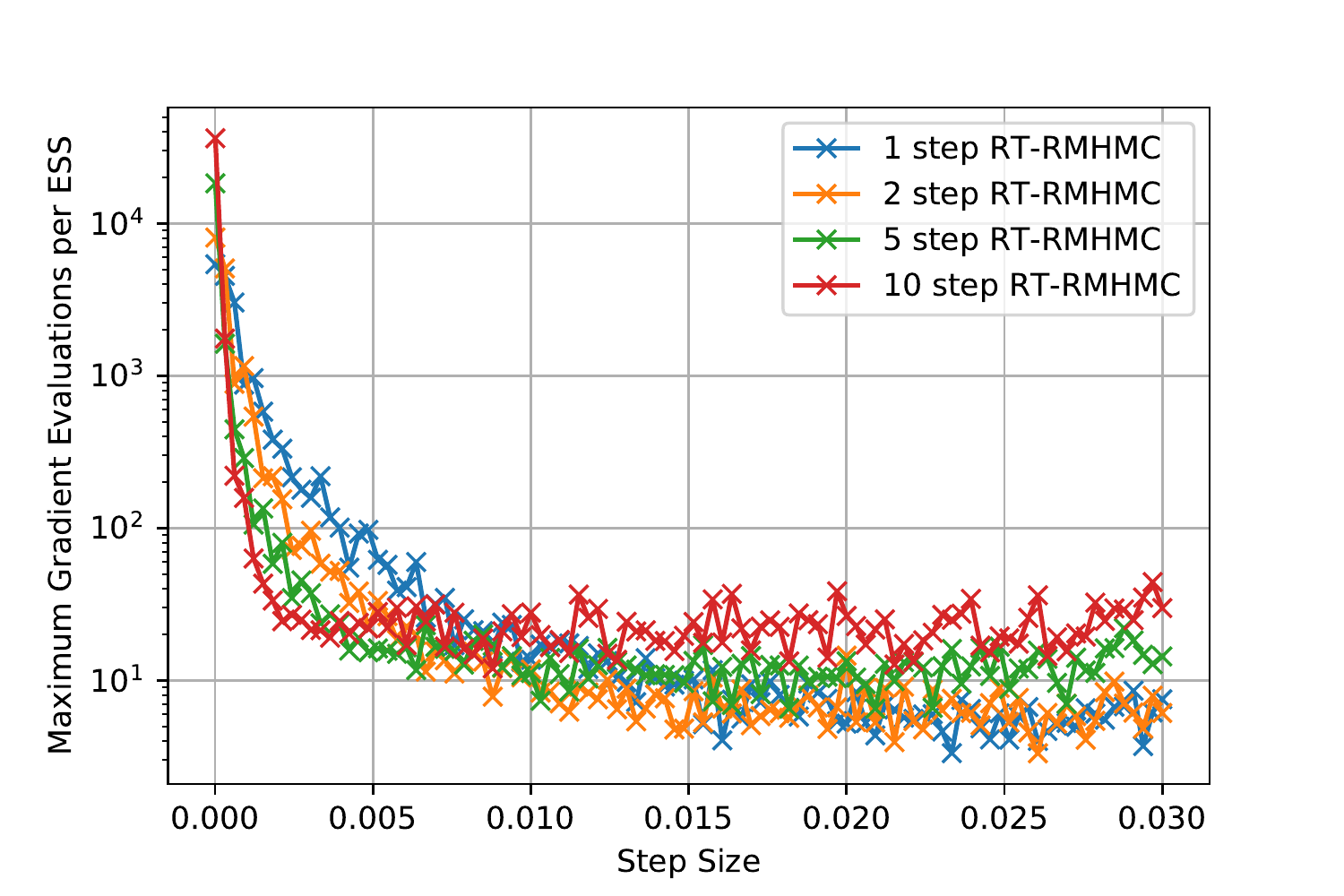}         \caption{}
	\end{subfigure}
	\hfill
	\begin{subfigure}{0.4\textwidth}
		
		\includegraphics[width=\linewidth]{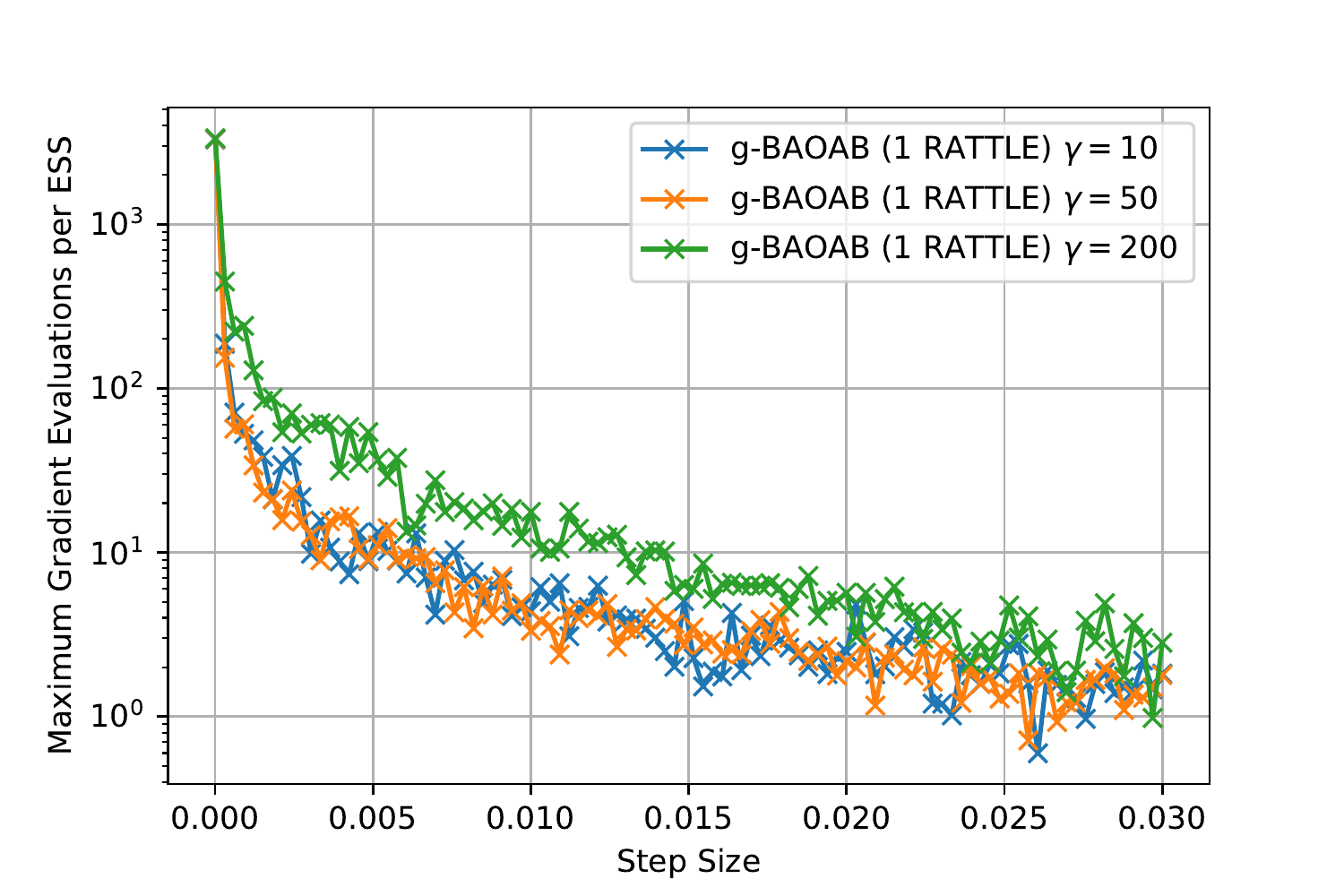}
		\caption{}
		
	\end{subfigure}
	\hfill
	\begin{subfigure}{0.4\textwidth}
		
		\includegraphics[width=\linewidth]{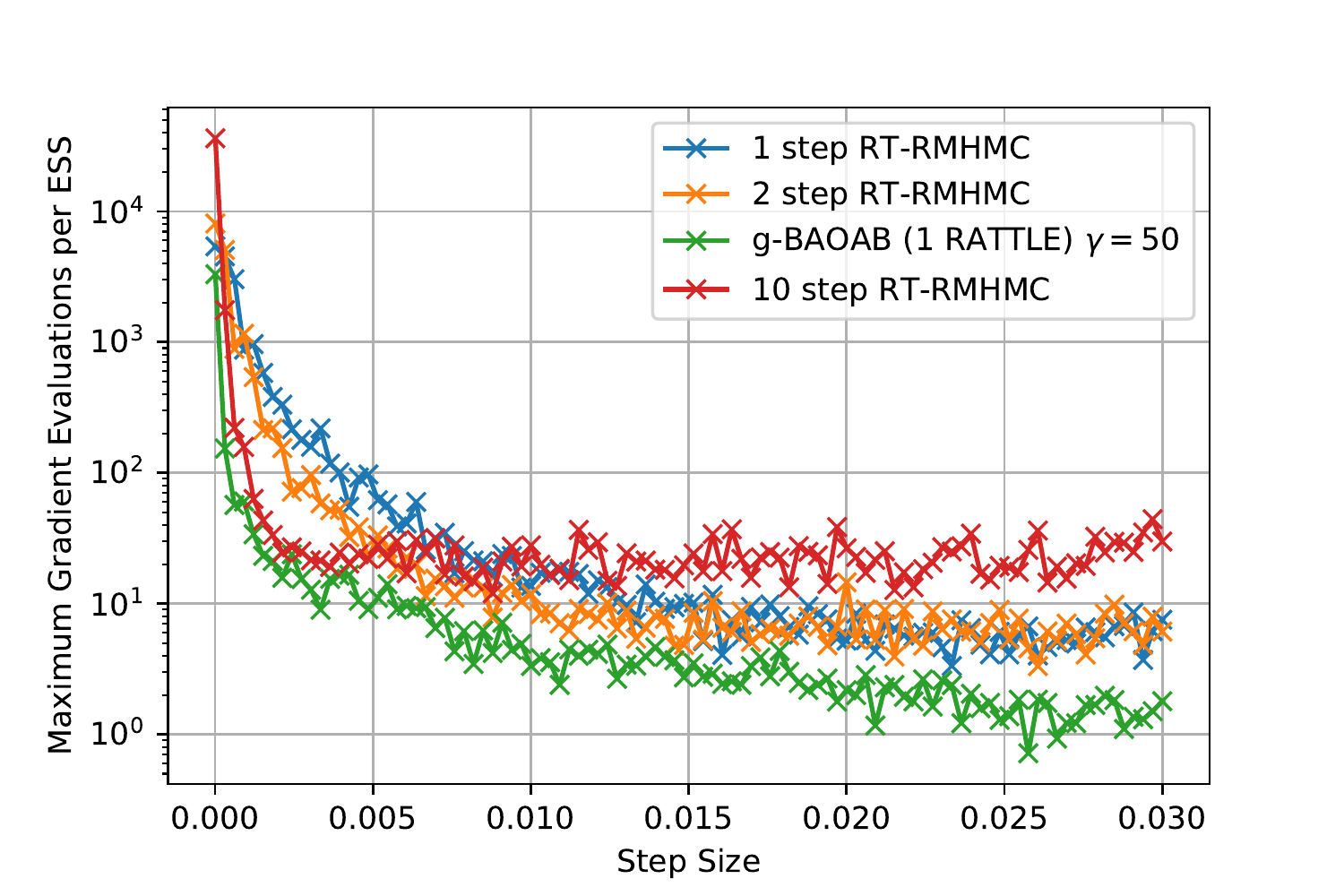}
		\caption{}
		
	\end{subfigure}
	\caption{Maximum gradient evaluation per ESS estimates over $x_1 , x_2$ and $x_3$ for the BVMF distribution using 100,000 samples with parameters $A = \text{diag}(-1000,0,1000)$ and $c = (100,0,0)$ and for varying choices of step-size. Upper left: RMHMC. Upper right: RT-RMHMC. Lower left: g-BAOAB. Lower right: g-BAOAB and RT-RMHMC.}
	\label{fig:Max-ESS-BAOAB}
	
\end{figure}

In our first example and Figure \ref{IAC-2} we can see that the regularity of the quality of samples with respect to the duration parameter is poor when a deterministic duration parameter is used and nearly uniform across a wide interval for a randomized duration with the same expected value. This is illustrated in Figure \ref{fig:Contour}, where a small change in duration parameter causes the dynamics dramatically slows convergence and due to very slow mixing. The fact that RMHMC behaves erratically for large mean duration parameters may not be very surprising to some readers as the theoretical convergence bound for HMC without randomization requires a limit on the duration $T$ (see \cite{Ma2017}). This is due to the fact that when $T$ is set too large, the coupling argument breaks down.

We next compare efficiencies using the metric gradient evaluations per effective sample size, which tells us the number of gradient evaluations needed for one independent sample in estimating our observables. We compare this metric for varying choices of step-size up to the reversibility condition is broken and the numerical integrator becomes unstable. Our observables will be $-\log \pi_{\mathcal{H}}$ and $x_{i}$ for $i = 1,...,n$. We can see as in Figure \ref{fig:ESS-BAOAB} and \ref{fig:Max-ESS-BAOAB} RMHMC (without deterministic time) exhibits the same behaviour as in Figure \ref{IAC-2} for all choices of step-size, which is not the case for RT-RMHMC. We next compare the efficiency of the method with the g-BAOAB constrained Langevin integrator, we find in Figure \ref{fig:ESS-BAOAB} and \ref{fig:Max-ESS-BAOAB} that g-BAOAB outperforms RT-RMHMC for large choices of the friction parameter $\gamma$ (for this example $\gamma = 50$). g-BAOAB has no Metropolis-Hastings adjustment and hence is a biased sampling method. The bias in the samples creates errors in computed observables. For  large choices of $\gamma$, this bias is dramatically reduced, but use of high friction may slow convergence of metastable systems.  

To explore this we next consider a bimodal distribution from \cite{By2013} in Figure \ref{fig:MC-Average}. It is shown in Figure \ref{fig:MC-Average} that g-BAOAB incurs bias for large stepsizes and convergence is slow for large choices of the friction parameter for this metastable system. The Figure also shows that this is not the case for RT-RMHMC. In Figure \ref{fig:ESS-BAOAB} we choose step-sizes up to which the integrator is reversible and stable.

\begin{figure}[H]
	\centering
	\begin{subfigure}[b]{0.32\textwidth}
		\centering
		\includegraphics[width=\textwidth]{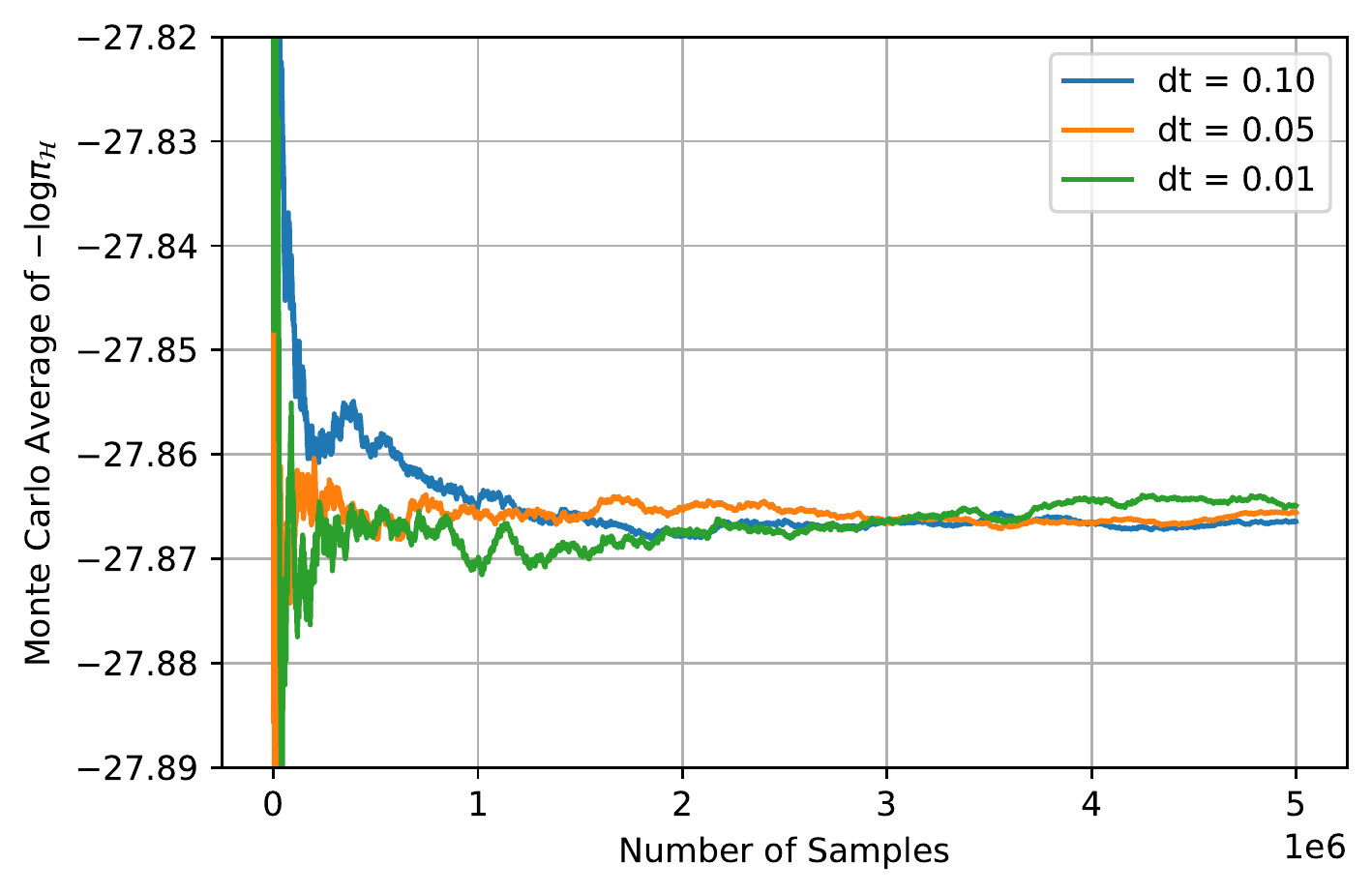}
		\caption{}
		
	\end{subfigure}
	\hfill
	\begin{subfigure}[b]{0.32\textwidth}
		\centering
		\includegraphics[width=\textwidth]{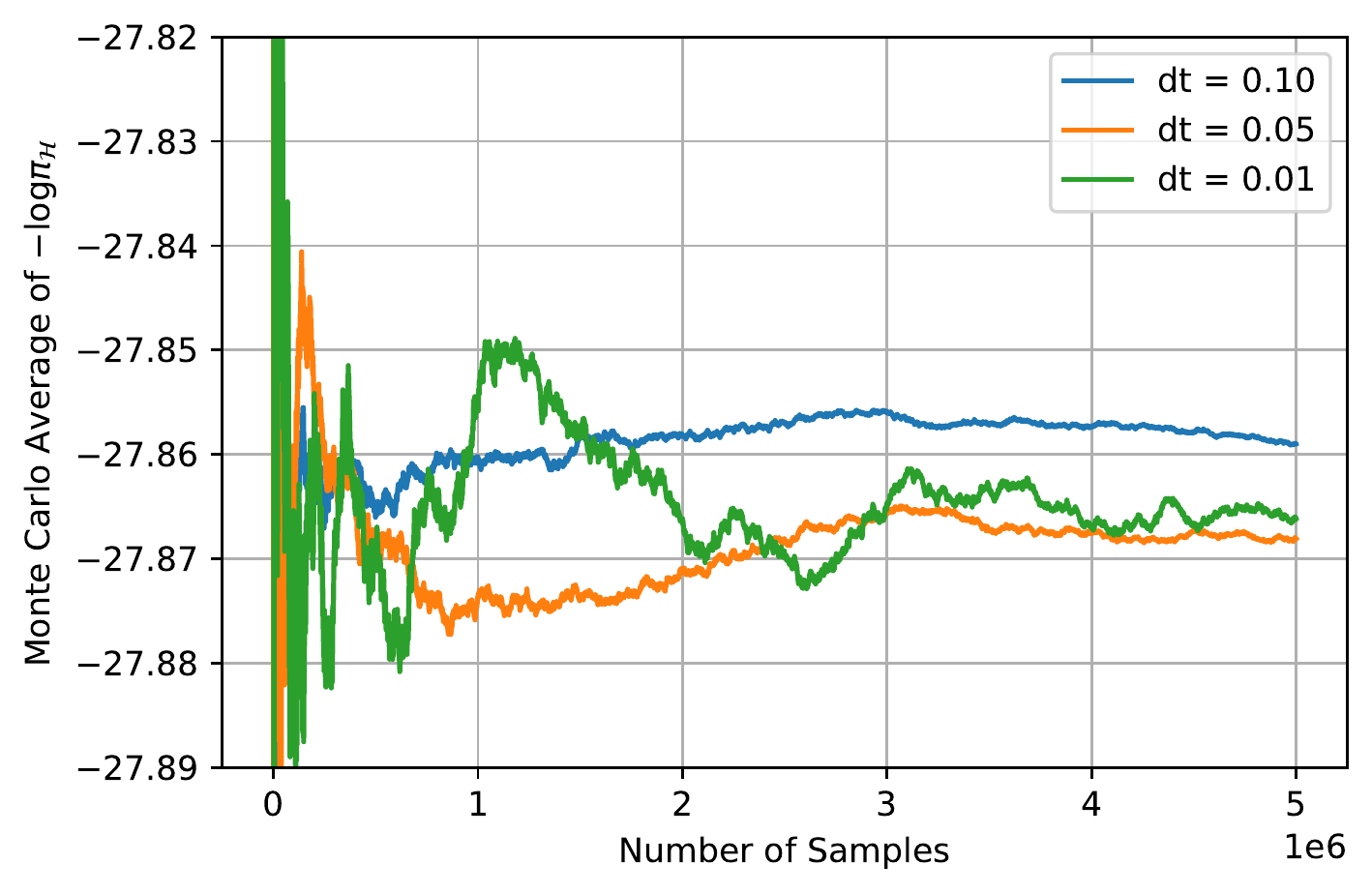}
		\caption{}
		
	\end{subfigure}
	\hfill
	\begin{subfigure}[b]{0.32\textwidth}
		\centering
		\includegraphics[width=\textwidth]{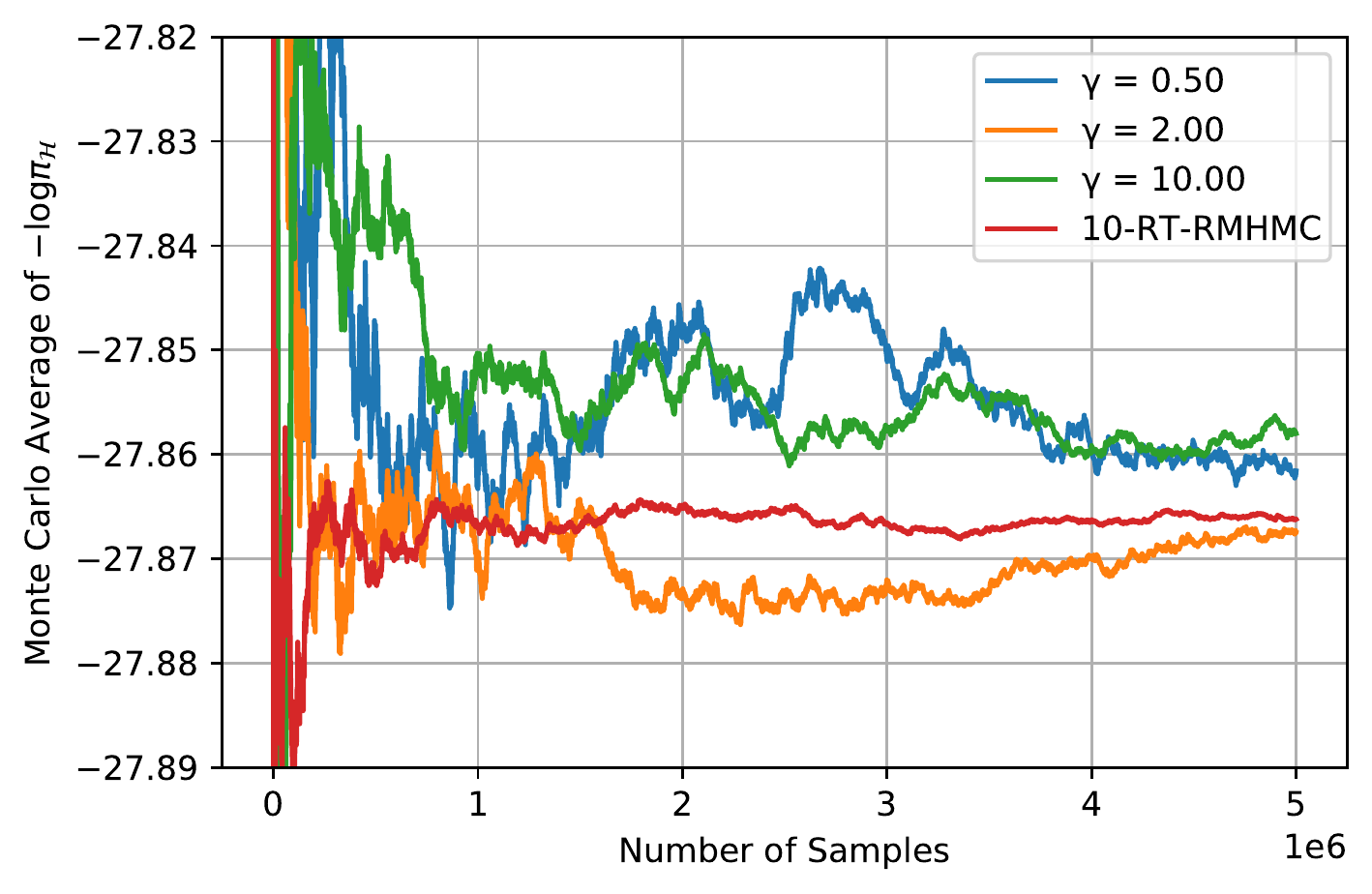}
		\caption{}
		
	\end{subfigure}
	\caption{Monte Carlo average of $-\log \pi_{\mathcal{H}}$ for the BVMF distribution with parameters $A = \textnormal{diag}(-20,-10,0,10,20)$ and $c = (40,0,0,0,0)$ with $N = 5 \times 10^6$ samples. Left: 10 step RT-RMHMC. Middle: g-BAOAB with $\gamma = 2$. Right: g-BAOAB with $dt = 0.01$}
	\label{fig:MC-Average}
\end{figure}

The efficiency of the methods with their optimal choices of parameters is comparable, but RT-RMHMC is much less sensitive with respect to the choice of parameters (stepsize and number of leapfrog steps) compared to RMHMC, so RT-RMHMC is more reliable from this point of view. This is important as it is hard to know an appropriate choice of parameters apriori and the integration length between samples might have to be arbitrarly small for RMHMC to be efficient.

\subsubsection*{Von Mises-Fisher distribution on $\mathbb{V}_{d,p}$}

\begin{definition}
	A \textit{Stiefel manifold} $\mathbb{V}_{d,p}$ is the set of $d \times p$ matrices $X$ such that $X^{T}X = I$.
\end{definition}
These arise in many statistical problems which are discussed in \cite{By2013}. Applications include dimensionality reduction such as is used in factor analysis, principal component analysis (\cite{Jo1986}) and directional statistics (\cite{Ma2000}). These are a generalisation of orthogonal groups. The von Mises-Fisher distribution on Stiefel manifolds is defined by the density 
\[p_{vMF}(X) \propto \exp{(Tr(F^{T}X))} = \exp{(\langle f_{1},x_{1} \rangle + ... + \langle f_{p},x_{p}\rangle)},\]
where $x_{i}$ and $f_{i}$ are the columns of $F$ and $X$. We simulate IAC esimates for two example distributions for varying duration parameters. In the simulations we use a stepsize of $\Delta t = 0.001$ and $100,000$ samples in each IAC estimate. The results are shown in Figure $\ref{fig:IAC-Stiefel}$. We can see similar behaviour as the easier example on the Sphere. Both examples it is clear RMHMC is much more sensitive to the mean duration and hence with respect to stepsize and number of leapfrog steps. We note that $\textnormal{Skew}(2,-45,-4)$ denotes the $3$ by $3$ skew-symmetric matrix with up triangular entries $2,-45$ and $-4$.

\begin{figure}[H]
	\centering
	\begin{subfigure}[b]{0.4\textwidth}
		\centering
		\includegraphics[width=\textwidth]{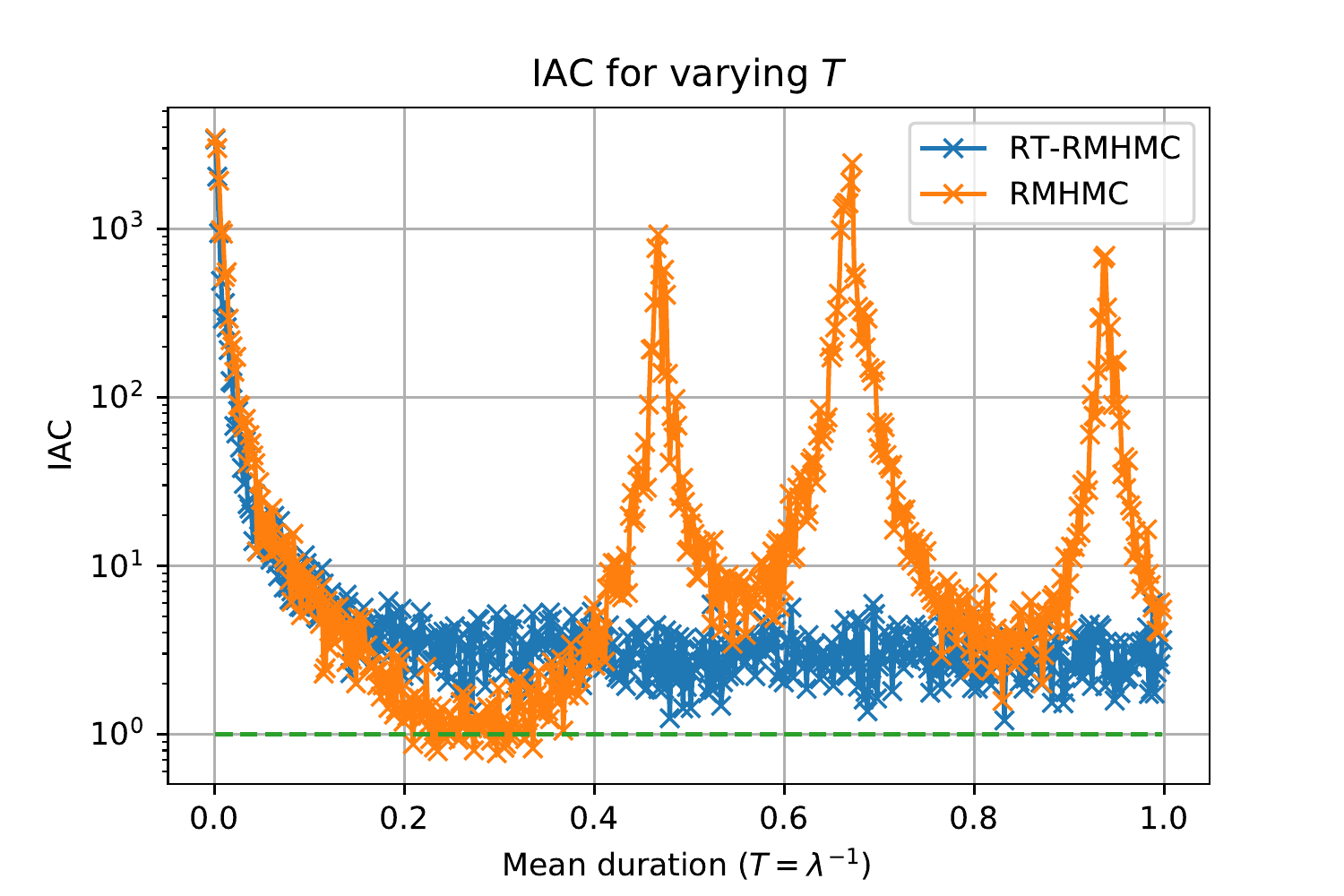}
		\caption{}
		
	\end{subfigure}
	\hspace{1cm}
	\begin{subfigure}[b]{0.4\textwidth}
		\centering
		\includegraphics[width=\textwidth]{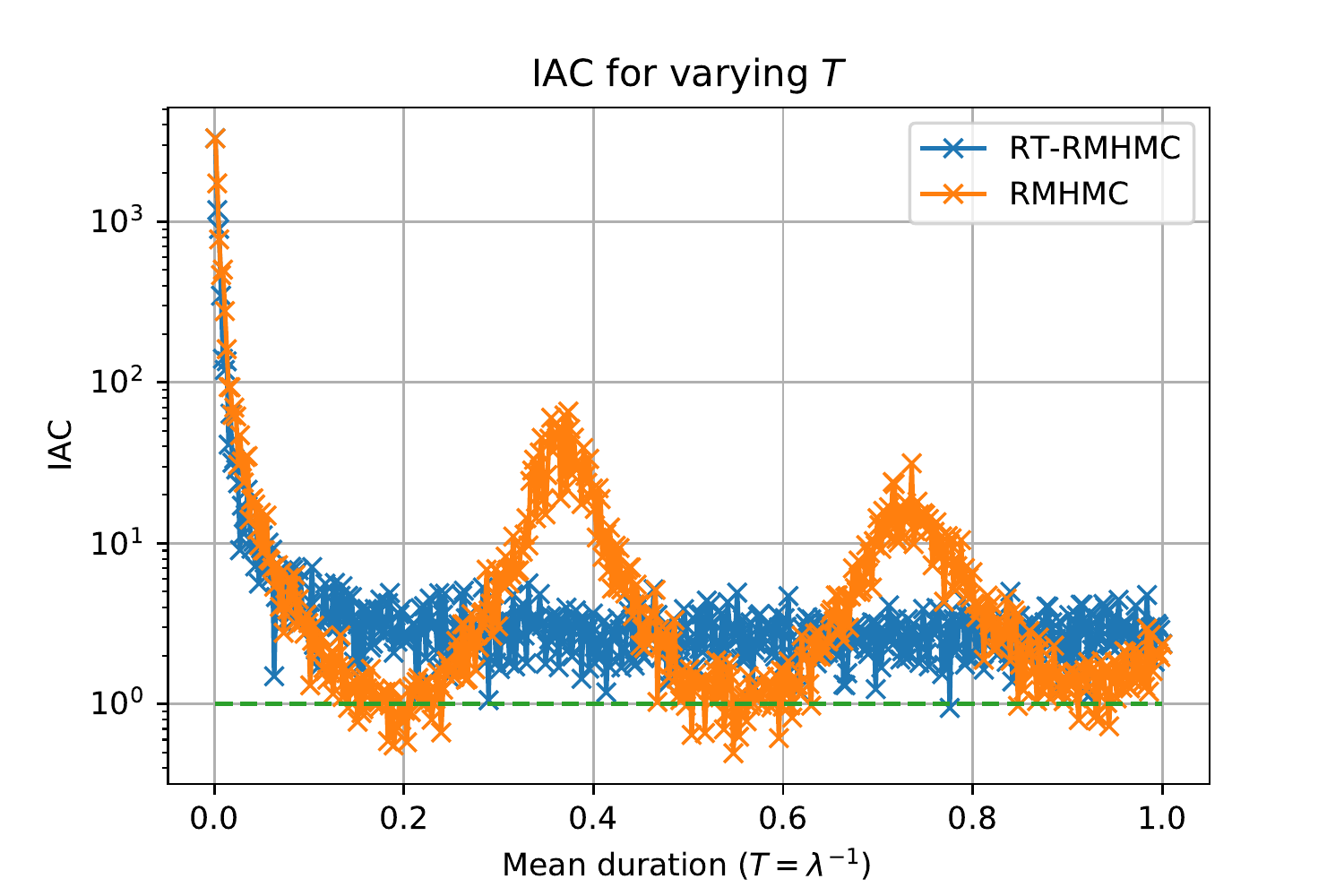}
		\caption{}
		
	\end{subfigure}
	
	\caption{IAC estimates of $-\log \pi_{\mathcal{H}}$ for different choices of $\lambda$ for the VMF distribution on $O(3)$ and $\mathbb{V}_{18,3}$ with parameters given in the captions. Left: VMF distribution on $O(3)$ with parameters $F = A := \text{Skew}(2,-45,-4)$. Right: VMF distribution on $\mathbb{V}_{18,3}$ with parameters $F = [I,-A,I,-A,I,-A]^{T}$.}
	\label{fig:IAC-Stiefel}
\end{figure}

\subsection{High Dimensional Covariance Estimation} \label{sec:high-dim-cov}

In many statistical applications for analysing high dimensional data sets it is necessary to estimate sample covariances. This can be challenging when the number of dimensions is larger than the number of data points, as the sample covariance estimator does not work well in such cases. \cite{La2020} provides a review of high-dimensional covariance estimation and applications in principal component analysis (\cite{Sh2016}), cosmological data analysis (\cite{Jo2017}) and finance (\cite{La2016}). \cite{La2020} focuses on the set up where the matrix dimension is diverging or even larger than the sample size. In this setting one needs to estimate the population covariance matrix $\Sigma$ of a set of $n$, $p-$dimensional data vectors, which we assume are drawn from an underlying distribution. 

One estimator is the sample covariance matrix, which is defined by $\Sigma_{S} = \nicefrac{1}{n}\sum^{n}_{i=1}(\mathbf{x}_{k} - \overline{\mathbf{x}})(\mathbf{x}_{k} - \overline{\mathbf{x}})^{T}$, where $\overline{\mathbf{x}} = \nicefrac{1}{n}\sum^{n}_{k = 1}\mathbf{x}_{k}$ is the sample mean. However this is a poor estimator of $\Sigma$ when $p$ is large compared to the sample size $n$ (due to rank deficiency). A way to combat this is to consider regularised covariance matrix estimators which include structural assumptions on the covariance matrix $\Sigma$. 

One such method which has been proposed assumes the structure of a low rank matrix plus a sparse matrix (see \cite{Ro2013} and \cite{La2020}). This structure is known as a spiked covariance structure and has been studied in \cite{Bouch2020},\cite{La2020} and \cite{Ca2015}. 
There have been interesting applications to finance (\cite{fan2008high}), chemometrics(\cite{kritchman2008determining}), and astronomy (\cite{Jo2017}).
The covariance matrix $\Sigma$ is assumed to be expressible in the form
\[
\Sigma = X D_{1} X^{T} + D_2,
\]
where $X$ is a Stiefel manifold of dimension $p \times m$ for $p >> m$ and $D_{i}$ for $i = 1, 2$ are diagonal matrices of dimensions $m \times m$ and $p \times p$ respectively. 
The motivation for this structure is that we assume that lower dimensional variables $\mathbf{y}_{i}$ can describe the data $\mathbf{x}_{i}$ such that $\mathbf{x}_{i} = X\mathbf{y}_{i} + \boldsymbol{\epsilon}_{i}$, where $X$ is a $p \times m$ matrix with orthogonal columns. 
We have that
\[
\Sigma = X\Sigma_y X^{T} + \Sigma_{\epsilon}, 
\]
which we interpret as a low rank matrix (rank $m$) plus a sparse matrix. We take $\Sigma_y$ and $\Sigma_{\epsilon}$ to be diagonal, which is an approximation of the spiked covariance structure \cite{Ch1982}.

Assume a uniform prior on $X$ with respect to the Hausdorff measure on $X$. Further assume a half-normal prior of the diagonal entries of $D_{i}$ for $i = 1,2$ to ensure positive definiteness. We also consider the following likelihood for the covariance estimation
\[
\mathcal{L}(\Sigma \mid \mathbf{x}_{1},...,\mathbf{x}_{p} ) = (2\pi)^{-np/2} \prod^{n}_{i=1} \det{(\Sigma)}^{-1/2} \exp{\left(-\frac{1}{2}(\mathbf{x}_{i} - \overline{\mathbf{x}})^{T}\Sigma^{-1}(\mathbf{x}_{i} - \overline{\mathbf{x}})\right)}. 
\]
We introduce the posterior distribution $p(\Sigma \mid \mathbf{x}_{1},...,\mathbf{x}_{p}) \propto \mathcal{L}( \Sigma \mid \mathbf{x}_{1},...,\mathbf{x}_{p}) p(\Sigma)$, where $p(\Sigma) = p(X)p(D_{1})p(D_{2})$ for $X \sim \mathcal{U}(\mathbb{V}_{p,m})$, $D_{1_{jj}} \sim \mathcal{N}_{+}(0,\sigma^{2}_{1})$ for $j = 1,...,m$ and $D_{2_{jj}} \sim \mathcal{N}_{+}(0,\sigma^{2}_{2})$ for $j = 1,...,p$ and $\mathcal{N}_{+}$ denotes the half-normal distribution. Define the potential $U: \mathbb{V}_{p,m} \times \mathbb{R}^{m} \times \mathbb{R}^{p} \mapsto \mathbb{R}$ by $U(X,\mathbf{d}_{1},\mathbf{d}_{2}) = -\log{\mathcal{L}( \Sigma(X,\mathbf{d}_{1},\mathbf{d}_{2}) \mid \mathbf{x}_{1},...,\mathbf{x}_{p})} -\log{p(X)} - \log{p(\mathbf{d}_{1})}-\log{p(\mathbf{d}_{2})}$ with forces given by
\[
\frac{\partial U}{\partial X_{ij}} = \frac{\partial U}{\partial \Sigma_{kl}}\frac{\partial\Sigma_{kl}}{\partial X_{ij}} = \left( \frac{1}{2}n(\Sigma^{-1})^{T}_{kl} + \frac{1}{2}\sum^{n}_{r = 1}(\mathbf{x}_{r} - \overline{\mathbf{x}})^{T}B^{kl}(\mathbf{x}_{r} - \overline{\mathbf{x}})\right) \frac{\partial\Sigma_{kl}}{\partial X_{ij}}, 
\]
where $[B^{kl}]_{ij} = -(\Sigma^{-1})_{ik}(\Sigma^{-1})_{lj}$ and 
\[
\frac{\partial\Sigma_{kl}}{\partial X_{ij}} = \begin{cases} X_{kj}D^{1}_{jj} & k \neq i, l = i,\\
D^{1}_{jj}X_{lj} & k = i, l \neq i, \\
2X_{ij}D^{1}_{jj} & k = i, l = i,\\
0 & \textnormal{otherwise.}
\end{cases}
\]
We also have that
\[
\frac{\partial U}{\partial d_{i_{j}}} = \frac{\partial U}{\partial \Sigma_{kl}}\frac{\partial\Sigma_{kl}}{\partial d_{i_{j}}} + \frac{ d_{i_{j}}}{\sigma^{2}_{1}},
\]
for $i = 1,2$ and where 
\[
\frac{\partial\Sigma_{kl}}{\partial d_{1_{j_1}}} = X_{kj_1}X_{lj_1} \quad \text{and} \quad \frac{\partial\Sigma_{kl}}{\partial d_{2_{j_2}}} = \begin{cases} 1 & \text{if } k = l = j_2\\
0 & \text{otherwise}
\end{cases}
\]
for $j_1 = 1,...,m$ and $j_2 = 1,...,p$. We will use the likelihood and its gradient for implementing our RT-RMHMC algorithm for such models.
\subsubsection*{Covariance estimation for cosmological data}

We consider an application of high dimensional covariance estimation in cosmological data analysis introduced in \cite{Jo2017} and discussed in \cite{La2020}. The data is taken from \cite{Jo2017} and consists of covariances of two-point correlation functions of cosmic weak lensing. It is simulated using coupled log-normal random fields from angular power spectra. For further information we refer the reader to \cite{Jo2017}. We will test this method using $2n/3$ data vectors, where $n$ is the dimension of the data vectors. Therefore we are in the setting where the dimension of the covariance matrix is larger than the number of samples. For our low rank plus sparse structure we choose $m = p/6 < < p$ and to ensure fast convergence we will normalize the data entry-wise and initialise our Markov chain via a eigenvalue decomposition of the sample covariance. We initialise our Markov chain as $\Sigma_{0} = X^{T}D_{1}X + D_{2} \approx \Sigma_{S}$, using the sample covariance matrix $\Sigma_{S}$ and where $D_2$ is the diagonal of $\Sigma_{S}$ and $X^{T}D_{1}X$ corresponds to the eigenvalue decomposition of $\Sigma_{S} - D_{2}$, but with the $p$ largest eigenvectors. After the covariance of the normalized data is estimated it can easily be rescaled to match the real data via entry-wise multiplication with the outer product of the entry-wise standard deviations. We compare our method to a maximum a posterior (MAP) estimate of the covariance matrix which uses a simple constrained gradient descent algorithm with Lagrange multipliers to ensure that the ``low-rank plus sparse'' structure is maintained. We compare the Bayesian and MAP approaches using a relative Frobenius norm and a covariance metric introduced by \cite{Fo2003} which is defined by
\[
d(A,B) = \sqrt{\sum^{n}_{i=1} \ln^{2}{\lambda_{i}(A,B)}} ,
\]
where $A$ and $B$ are covariance matrices and $\lambda_{i}(A,B)$ are the generalized eigenvalues from ${\rm det}(\lambda A - B)= 0$. As pointed out in \cite{Fo2003}, this covariance metric is affine invariant and invariant to inversion.

\begin{figure}
	\centering
	\begin{subfigure}{0.32\textwidth}
		
		\includegraphics[width=\linewidth]{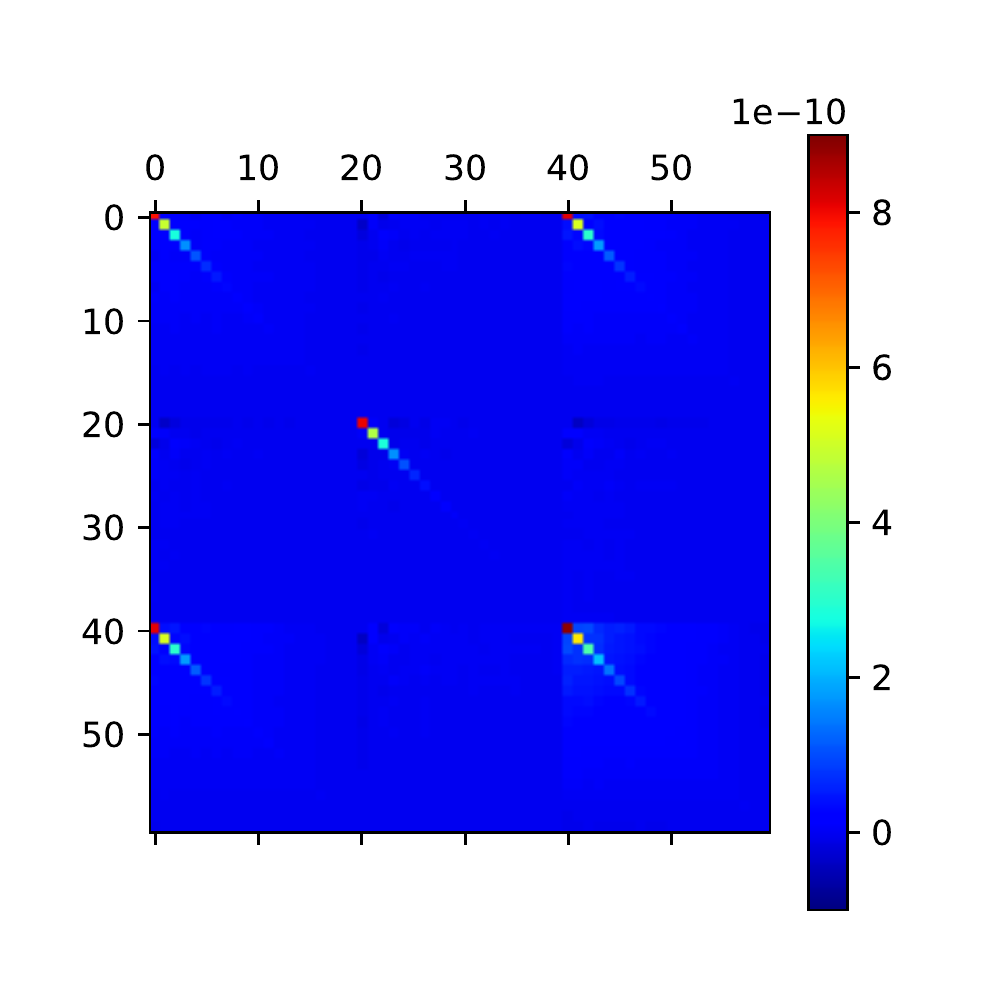}
		\caption{}
	\end{subfigure}
	\begin{subfigure}{0.32\textwidth}
		
		\includegraphics[width=\linewidth]{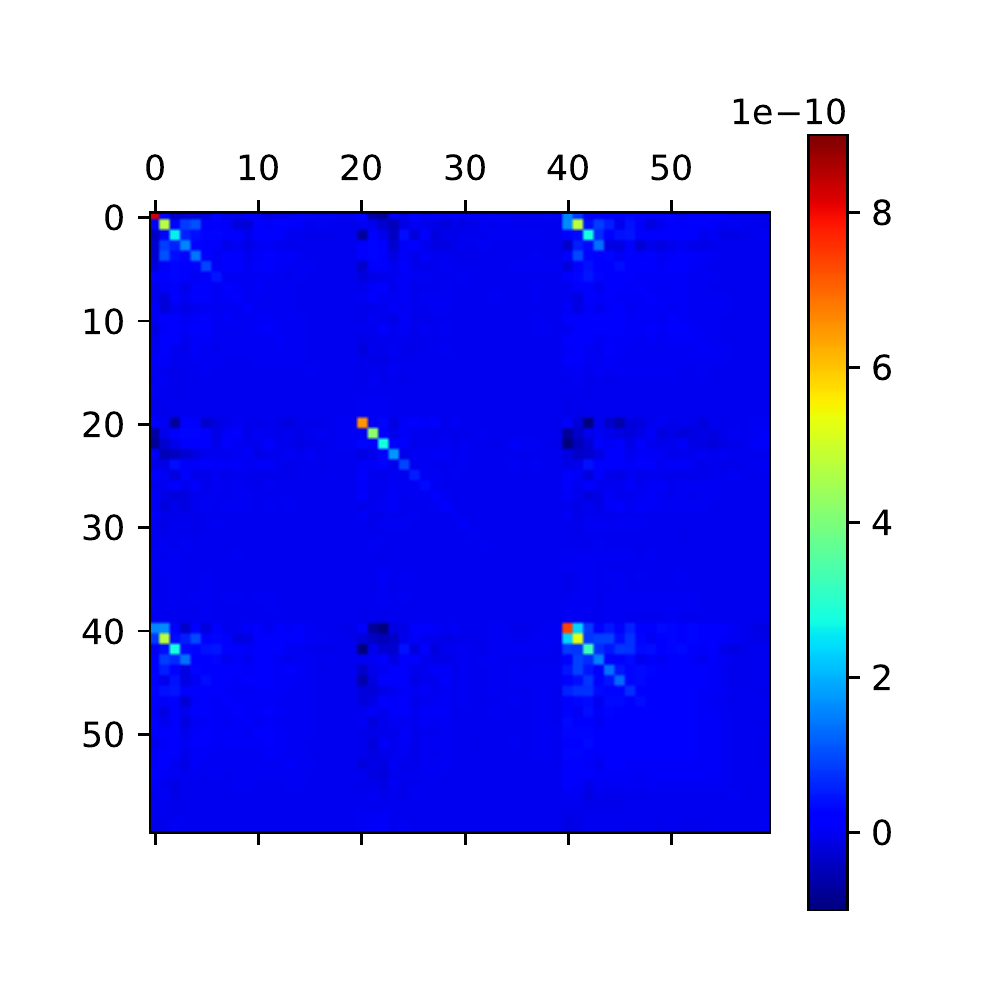}
		\caption{}
		
	\end{subfigure}
	\begin{subfigure}{0.32\textwidth}
		
		\includegraphics[width=\linewidth]{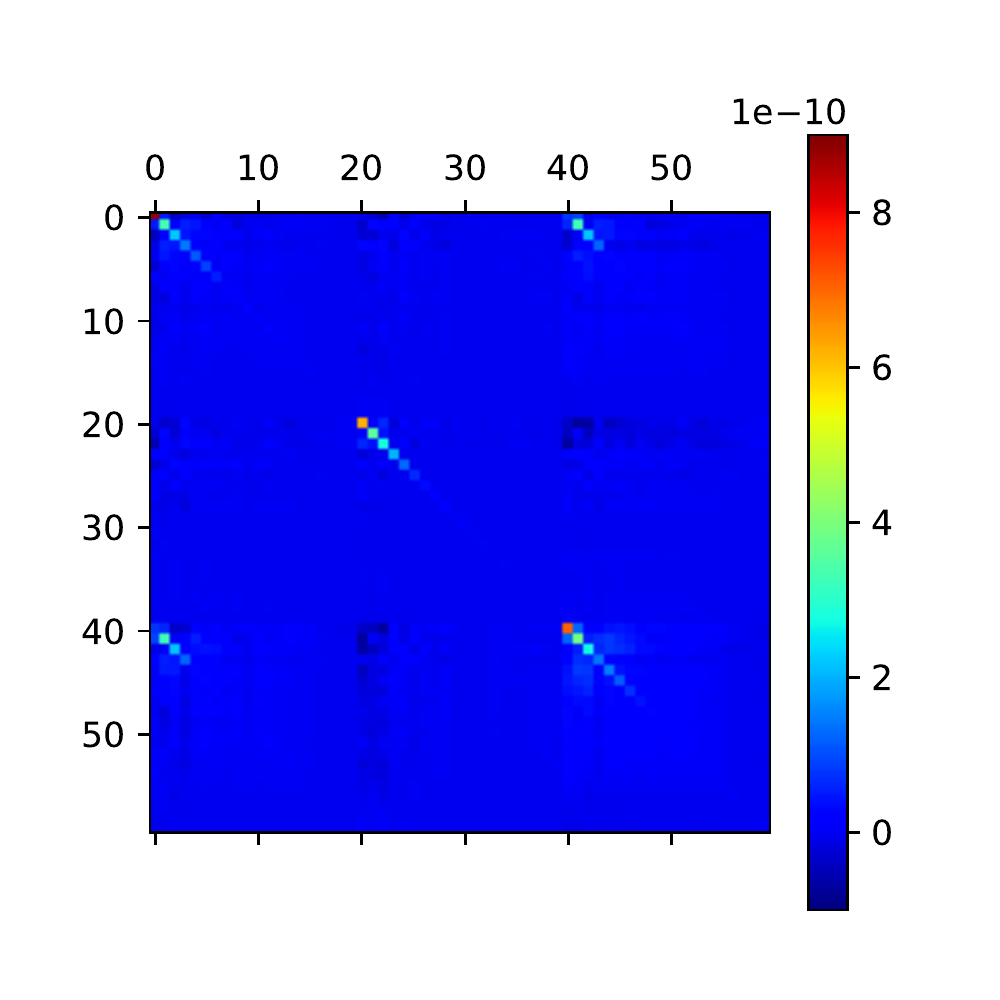}
		\caption{}
		
	\end{subfigure}
	
	\begin{subfigure}{0.32\textwidth}
		
		\includegraphics[width=\linewidth]{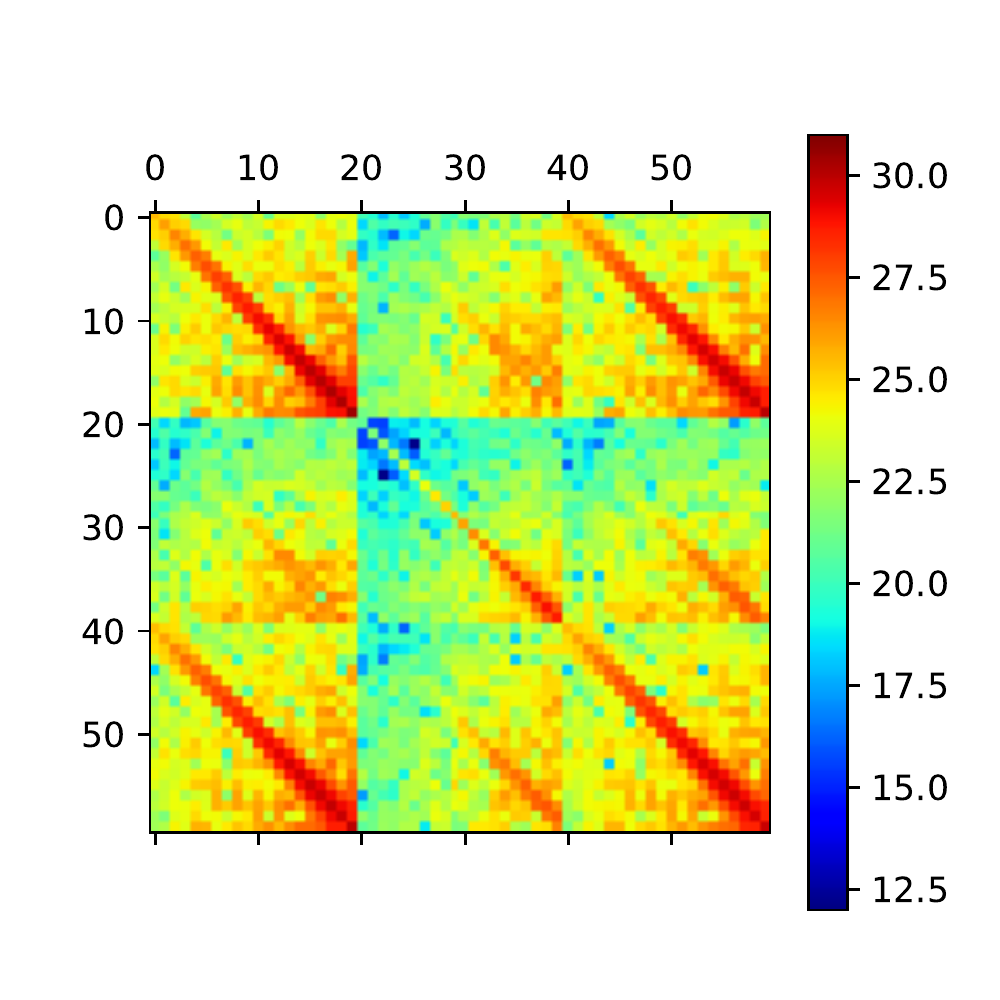}         
		\caption{}
	\end{subfigure}
	\begin{subfigure}{0.32\textwidth}
		
		\includegraphics[width=\linewidth]{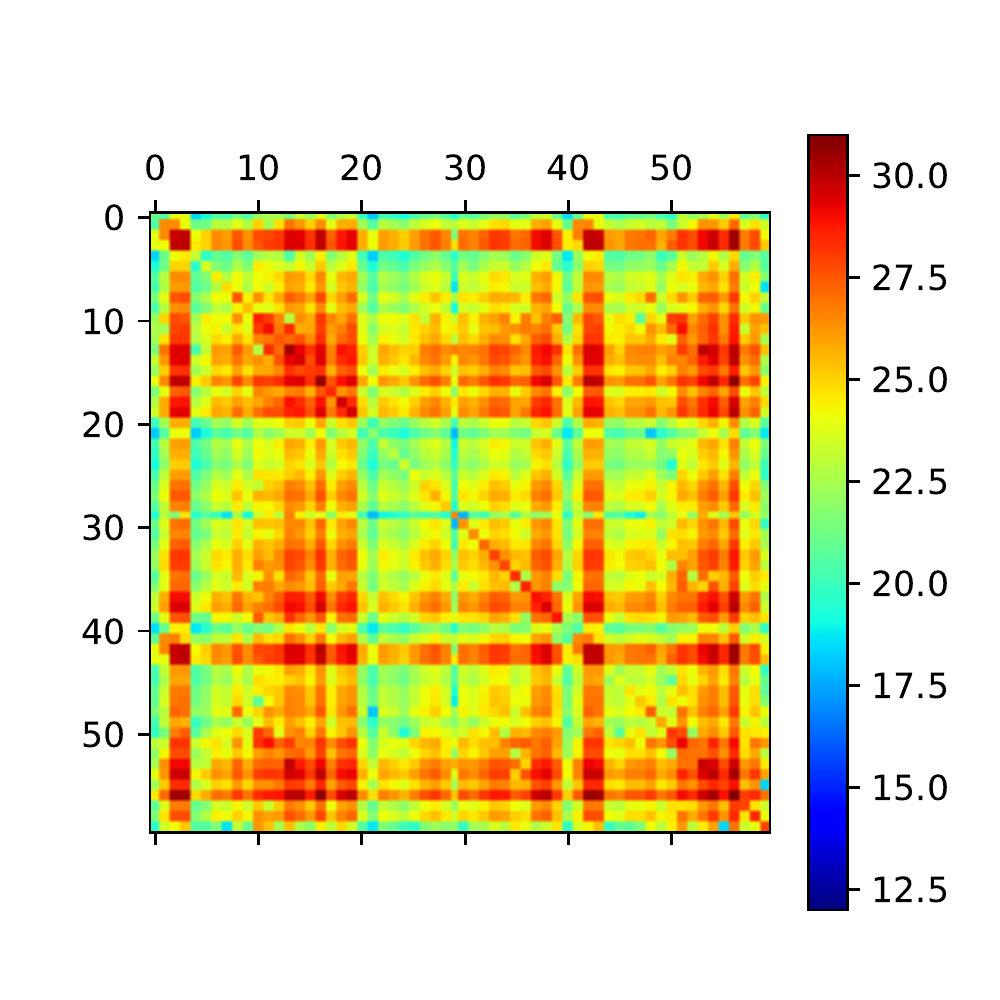}         
		\caption{}
	\end{subfigure}
	\begin{subfigure}{0.32\textwidth}
		
		\includegraphics[width=\linewidth]{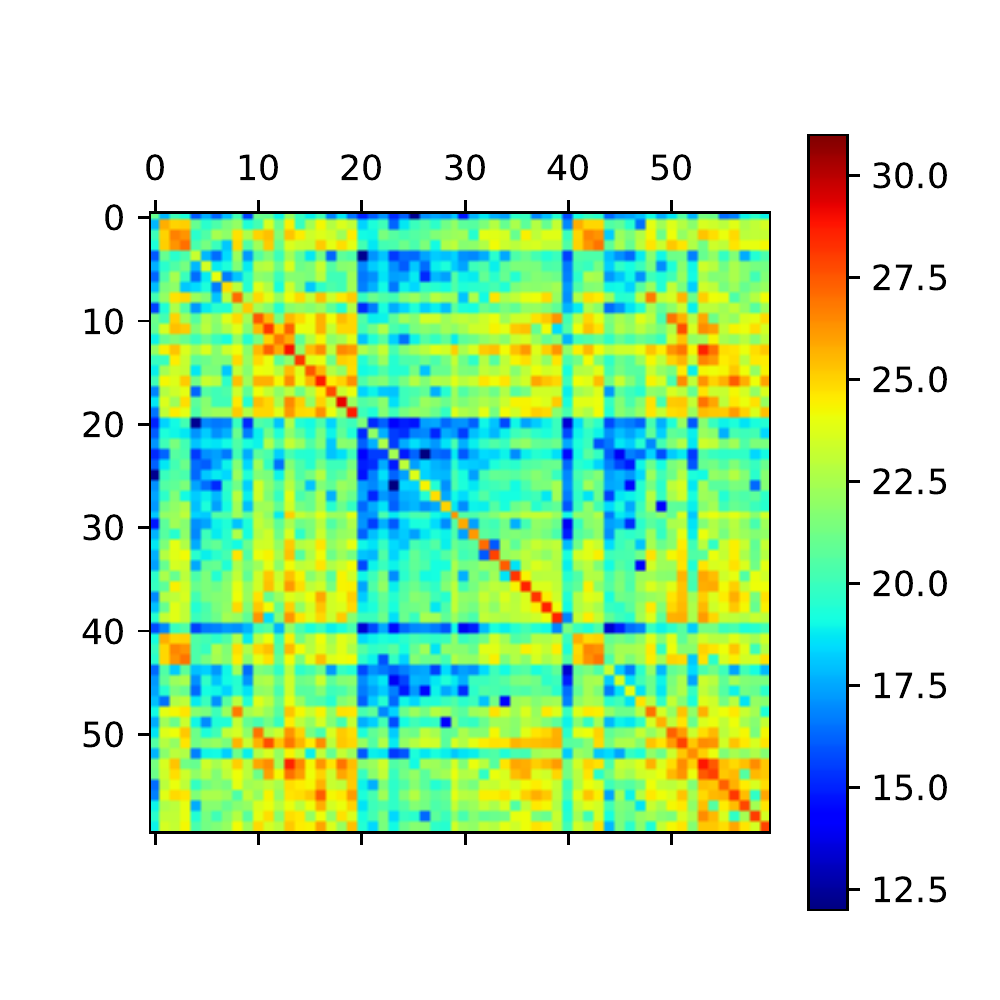}
		\caption{}
		
	\end{subfigure}
	\caption{Covariance estimates of astronomical data of $60$-dimensional data vectors. $\Sigma$ is the true covariance using all $2000$ data points. $\Sigma_{MAP}$ is a maximum a posteriori covariance estimate and $\hat{\Sigma}$ is a posterior expectation estimate of $\Sigma$ using $40$ data points. $\hat{\Sigma}^{-1}$ is a posterior expectation estimate of $\Sigma^{-1}$ using $40$ data points . The posterior expectation estimate uses priors of $\sigma_1 = \sigma_2 = 2$ after normalisation and $500,000$ samples from RT-RMHMC. Top left: $\Sigma$. Top middle: $\Sigma_{MAP}$. Top right: $\hat{\Sigma}$. Bottom left: $\ln{\lvert\Sigma^{-1}\vert}$. Bottom middle: $\ln{\lvert\Sigma_{MAP}^{-1}\vert}$. Bottom right: $\ln{\lvert\hat{\Sigma}^{-1}\vert}$.}
	\label{fig:Covariance}
\end{figure}

\begin{table*}
    \caption{A comparison using three metrics between the MAP estimate and the posterior expectation estimate using 500,000 samples from RT-RMHMC. The relative frobenius norm of the estimate and its inverse and the covariance metric introduced by \cite{Fo2003}.}
	\centering
	\label{fig:Covariance-table}
	\begin{tabular}{@{}lrr@{}}
		\hline
		 \multicolumn{1}{c}{Metric}
		& \multicolumn{1}{c}{MAP} & \multicolumn{1}{c}{Posterior Expectation}\\
		\hline
		$\lVert\Sigma - \Sigma_{EST}\Vert_{F}/\lVert\Sigma\Vert_{F}$  &  $0.4862$   & $0.5155$\\
		\hline
		$\lVert\Sigma^{-1} - \Sigma^{-1}_{EST}\Vert_{F}/\lVert\Sigma^{-1}\Vert_{F}$   & $1.8503$   & $0.8946$\\
		\hline
		$d(\Sigma,\Sigma_{EST})$  &  $15.2515$  & $14.7547$  \\
		\hline
	\end{tabular}
\end{table*}

Note that in Figure \ref{fig:Covariance} we have not included the sample estimate with $40$ samples because the sample covariance matrix is rank deficient and hence it is not possible to invert the matrix. We notice from Figure \ref{fig:Covariance} and table \ref{fig:Covariance-table} that the MAP and Posterior Expectation perform well when estimating the covariance matrix for only using 40 data points, but lose accuracy under inversion. The posterior expectation using a sampling method seems to retain more structure when inverted and provides a more accurate estimate according to the metric of \cite{Fo2003}. 

It is clear from Table \ref{fig:Covariance-table} that using the posterior means do not sacrifice accuracy compared to using the MAP estimators. An additional benefit of the Bayesian approach is that we can compute posterior standard deviations for each component of the covariance estimator, which gives error estimates. This is illustrated in Figure \ref{fig:Standard-Deviations}. By comparing these standard deviations with the covariance estimates, we can get a sense of the relative error we are making. This information can be useful when deciding on the number of data samples we need to get a satisfactory level of accuracy in estimating the covariance matrix.
Since in practice we do not have access to the true covariance matrix, there is no straightforward way to compute error estimates based on the MAP estimator, and it is challenging to see whether we have reached sufficient accuracy.
\begin{figure}
	\centering
	\begin{subfigure}[b]{0.32\textwidth}
		\centering
		\includegraphics[width=\textwidth]{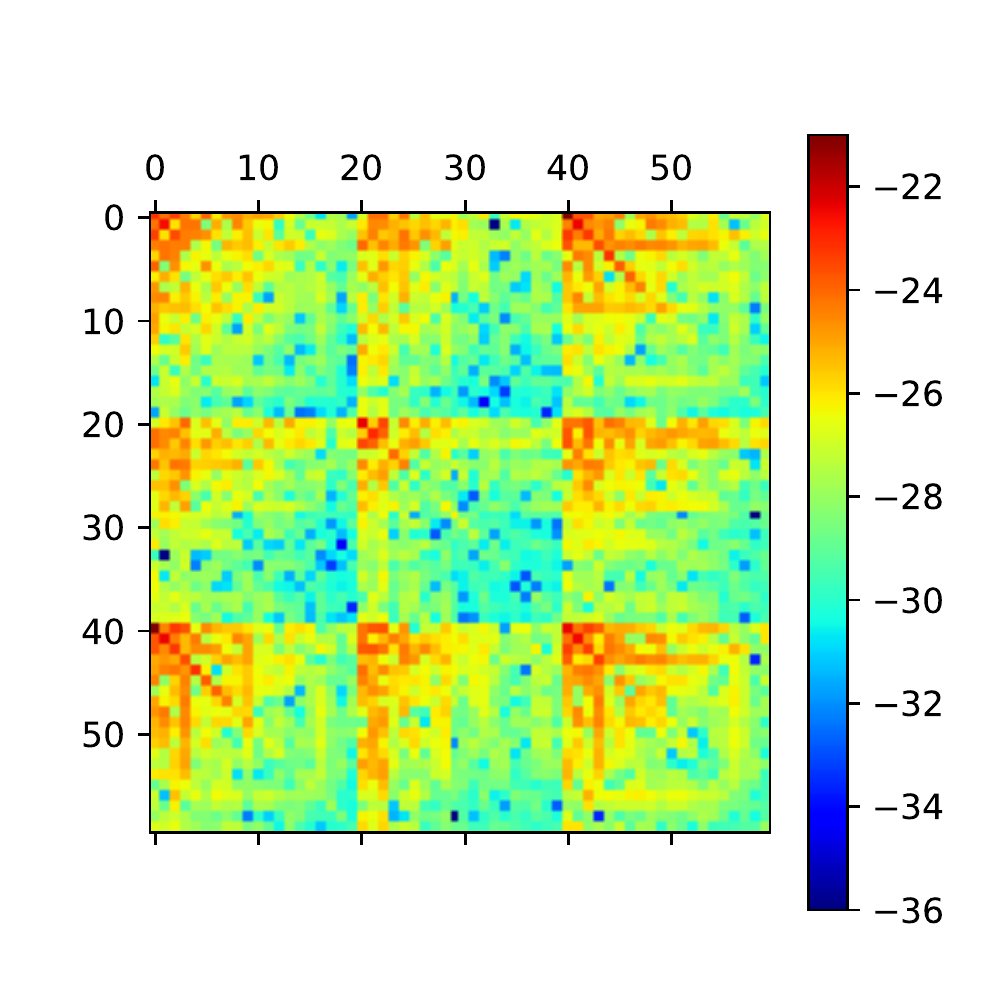}
		\caption{}
		
	\end{subfigure}
	\hspace{1cm}
	\begin{subfigure}[b]{0.32\textwidth}
		\centering
		\includegraphics[width=\textwidth]{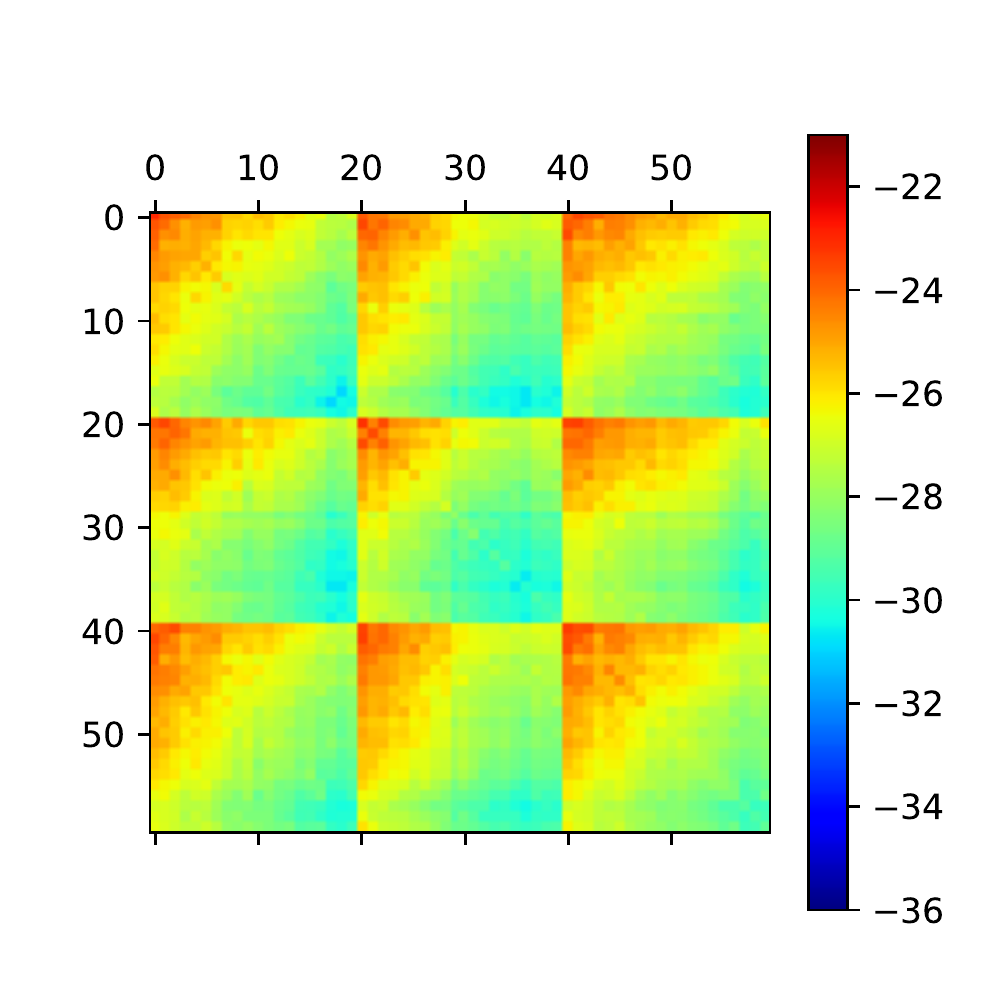}
		\caption{}
		
	\end{subfigure}
	
	\caption{A comparison between the log of the absolute error and the log of the posterior standard deviations in each component. Left: $\ln\lvert\Sigma - \hat{\Sigma}\vert$. Right: $\ln{\Sigma_{SD}}$.}
	\label{fig:Standard-Deviations}
\end{figure}


\section{Conclusion and Future Work}

In this work we have introduced a Randomized Time Riemannian Manifold Hamiltonian Monte Carlo (RT-RMHMC), which is a robust alternative to Riemannian Manifold Hamiltonian Monte Carlo methods introduced by \cite{Gi2011} and \cite{Br2012}. We establish invariance of the desired measure under a compactness assumption in the continuous (small stepsize limit) setting. We provide an Metropolis adjusted version of RT-RMHMC in the discrete setting and prove invariance and ergodicity of the adjusted discretized algorithm. We show that RT-RMHMC is a more robust method with respect to parameter choice on a number of numerical examples arising in applications and  provide an example to demonstrate that our Riemannian manifold sampling method can be used for high-dimensional covariance estimation. We expect the stability with respect to choice of parameters is especially needed in poorly conditioned problems, where RMHMC would require very short time steps for stability but this may lead to some random walk behaviour and highly inefficient mixing in some principal directions.

In terms of future developments for RT-RMHMC, the next step would be to establish invariance of the measure in the non-compact setting and further to this establish (geometric) ergodicity  of RT-RMHMC, which is already established in \cite{Bo2017} for the Euclidean setting. Then one could find optimal choices of integration parameters and step-size. Another possibility would be to establish mixing time guarantees for RT-RMHMC by a coupling argument like \cite{Bo2020b} and \cite{Ma2018b}. In \cite{Ma2018b} they establish rapid mixing guarantees for a geodesic walk algorithm on manifolds with positive curvature, which is RMHMC for the uniform distribution. One may be able to use a similar coupling argument to guarantee mixing times for RT-RMHMC for manifolds with positive curvature.


\backmatter





\bmhead{Acknowledgments}

The authors acknowledge the support of the Engineering and Physical Sciences Research Council Grant EP/S023291/1 (MAC-MIGS Centre for Doctoral Training).










\begin{appendices}

\section{Generator of RT-RMHMC} \label{PDMPs-and-their-invariant-measures}
To prove that the generator of this stochastic process takes the form of equation \eqref{generator} and that the measure (equation \eqref{measure}) is invariant under RT-RMHMC we use the framework of \cite{Du2021} viewing RT-RMHMC as a piecewise deterministic Markov process (PDMP) defined on $T \mathcal{M}$. A $\mathcal{Z}-$valued continuous-time PDMP $(\varphi,\lambda,Q)$ consists of the following components:
\begin{itemize}
    \item a differential flow $\varphi$ on $\mathcal{Z}$ which satisfies the semi group property and is measurable. Moreover, is continuously differentiable with respect to time and a $C^{1}$-diffeomorphism of $\mathcal{Z}$.  
    \item an event rate $\lambda: \mathcal{Z} \to \mathbb{R}^{+}$, which is measurable and locally bounded.
    \item a inhomogeneous Markov transition kernel $Q: \mathbb{R}_{+} \times \mathcal{Z} \times \mathcal{B}(\mathcal{Z}) \to [0,1]$, such that for all $A \in \mathcal{B}(\mathcal{Z})$, $(t,z) \mapsto Q(t,z,A)$ is measurable and for all $(t,z) \in \mathbb{R}_{+} \times T\mathcal{M}$, $Q(t,z,\cdot) \in \mathcal{P}(\mathcal{Z})$,
\end{itemize}
where $\mathcal{B}(\mathcal{Z})$ denotes the $\sigma$-algebra on the space $\mathcal{Z}$ and $\mathcal{P}(\mathcal{Z})$ denotes the space of probability measures on the space $\mathcal{Z}$. For RT-RMHMC we consider $\mathcal{Z} = T\mathcal{M}$.
\begin{definition}
For a PDMP $Z = (Z_{t})_{t\geq 0}$, we call $\tau_{\infty}(Z) = \inf \{t \geq 0 \mid Z_{t} = \infty \}$ the explosion time of the process $(Z_t)_{t \geq 0}$. A process $(Z_{t})_{t \geq 0}$ is said to be non-explosive if $\tau_{\infty}(Z) = + \infty$ almost surely. PDMP characteristics are said to be non-explosive if for all initial distribution the associated PDMP is non-explosive.
\end{definition}

Due to the event rate $\lambda$ of RT-RMHMC being constant and bounded we have that RT-RMHMC is non-explosive. As RT-RMHMC is a non-explosive PDMP we can use the theory of \cite{Du2021}[Section 7 and 8] to estabilish the generator and invariance of the desired measure. 

Under the assumption that the expected number of events in any unit time interval $[0,t]$ is finite, it is shown in \cite{Da1993}[Theorem 26.14] that for a non-explosive PDMP with generator $\mathcal{A}$ with domain $D(\mathcal{A})$ that all $f \in D(\mathcal{A})$ and $x \in T\mathcal{M}$,
\[\mathcal{A}f(x) = D_{\varphi}f(x) + \lambda(x)(Qf(x) - f(x)). \]

If we let $N_{t}$ denote the number of events in the interval $[0,t]$ then we have for RT-RMHMC $\mathbb{E}_{x}(N_{t}) = \lambda t < \infty$, $\mathbb{E}_{x}$ denoting the expected value given the stochastic process starts with initial condition $x$. Therefore all the assumptions are satsified of \cite{Da1993}[Theorem 26.14] and \cite{Du2021}[Section 7] and we have that the generator of RT-RMHMC is given by
\[\mathcal{L}f(x) = X_{H}(f) + \lambda(Qf(x) - f(x)), \]
where $X_{H}$ is the Hamiltonian vector field and $Q$ is the transition kernel for the Gaussian distribution induced by the metric $G(x)$ on the tangent space of $x \in \mathcal{M}$.

\section{Invariant measure}\label{invariant-measure}

To prove that $\mu$ is an invariant measure of RT-RMHMC it is sufficent to show that $\int \mathcal{L}f(x) d\mu = 0$ for all $f \in D(\mathcal{L})$. As it is difficult to consider $D(\mathcal{L})$, one approach is to show that $C^{1}_{c}(T\mathcal{M})$ is a core of the generator and that $\int \mathcal{L}f(x)d\mu = 0$ for all $f \in C^{1}_{c}(T\mathcal{M})$, where $C^{k}_{c}(T\mathcal{M})$ denotes the space of $k$ times differentiable functions $f: T\mathcal{M} \to \mathbb{R}$ with compact support. 

\begin{theorem}[Infinitesimal Invariance of RT-RMHMC] \label{Infinitesimally-Invariant}
Let $\mathcal{M}$ be a smooth Riemannian manifold with metric $g$ and let $(P_{t})_{t \geq 0}$ be the semigroup of RT-RMHMC defined on $\mathcal{M}$ with potential $U \in C^{2}(\mathcal{M})$ and Hamiltonian $H = U + K \in C^{2}(T\mathcal{M})$. Let $\mu$ be the measure on $(T\mathcal{M}, \mathcal{B}(T\mathcal{M}))$ defined by
\[ \mu(dz) \propto e^{-H(x,v)}d\lambda_{T \mathcal{M}}(z),\]
where $d\lambda_{T\mathcal{M}}$ is the Liouville measure of $T\mathcal{M}$.
Then for all $f \in C^{1}_{c}(T\mathcal{M})$
\[\int_{T\mathcal{M}}\mathcal{L}f(x,v) \mu(dz) = 0, \]
where $\mathcal{L}$ is the generator of RT-RMHMC.
\end{theorem}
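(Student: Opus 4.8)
The plan is to split the generator into its two constituent parts and to show that each integrates to zero against $\mu$ separately. Writing $\mathcal{L}f = X_H(f) + \lambda(Qf - f)$ and recalling that $\lambda > 0$ is a constant, it suffices to establish the two identities
\[
\int_{T\mathcal{M}} X_H(f)\, d\mu = 0 \qquad \text{and} \qquad \int_{T\mathcal{M}} (Qf - f)\, d\mu = 0 .
\]

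For the transport term I would exploit the two classical invariance properties of the Hamiltonian flow on $T^{*}\mathcal{M}$ (identified with $T\mathcal{M}$ via the metric): energy conservation, $X_H(H) = 0$, and Liouville's theorem, which states that $X_H$ is divergence-free with respect to the Liouville measure $\lambda_{T\mathcal{M}}$. In canonical coordinates $(q,p)$ both are immediate, from $\{H,H\} = 0$ and $\partial_q\partial_p H - \partial_p\partial_q H = 0$ respectively. Combining them, the divergence of $X_H$ with respect to the weighted measure $\mu \propto e^{-H}\, d\lambda_{T\mathcal{M}}$ is
\[
\mathrm{div}_{\mu}(X_H) = \mathrm{div}_{\lambda_{T\mathcal{M}}}(X_H) + X_H(-H) = 0,
\]
so that $\mu$ is invariant under the Hamiltonian flow $\varphi_t$. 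An integration by parts (the divergence theorem), valid because $f$ has compact support and therefore contributes no boundary terms, then yields $\int X_H(f)\, d\mu = -\int f\, \mathrm{div}_{\mu}(X_H)\, d\mu = 0$. Equivalently, one may simply differentiate the map $t \mapsto \int f\circ \varphi_t\, d\mu$, which is constant in $t$ by flow-invariance, at $t = 0$.

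For the refreshment term I would use the product decomposition $\mu(dz) = \pi(x)\,\sigma_{\mathcal{M}}(dx)\,\psi(x)(dv)$ recorded after \eqref{measure}, together with the observation that $Qf(x,v)$ is the average of $f(x,\cdot)$ against the Gaussian $\psi(x)$ and is hence independent of $v$. Integrating $Qf$ against $\mu$, the trivial $v$-integration against the probability measure $\psi(x)(dv)$ contributes a factor of one, leaving $\int Qf\, d\mu = \int f(x,\xi)\,\pi(x)\,\sigma_{\mathcal{M}}(dx)\,\psi(x)(d\xi) = \int f\, d\mu$. Thus $\int(Qf - f)\, d\mu = 0$: the velocity marginal of $\mu$ at each $x$ is precisely the Gaussian that $Q$ resamples, so the refreshment preserves $\mu$ exactly.

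The main obstacle is the transport term, and specifically making the divergence-free claim and the subsequent integration by parts rigorous in the intrinsic geometric setting rather than merely in a canonical chart: one must verify that Liouville's theorem and energy conservation hold globally on $T\mathcal{M}$ under the metric identification, and that the compact support of $f$ genuinely removes all boundary contributions on the (possibly non-compact) manifold. The refreshment term is comparatively routine once the product decomposition of $\mu$ is in hand.
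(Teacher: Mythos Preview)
Your proposal is correct and follows essentially the same approach as the paper: split the generator into the Hamiltonian transport term and the refreshment term, dispatch the first by Liouville's theorem together with energy conservation (so that $\mu$ is flow-invariant and the compact support of $f$ kills boundary terms), and handle the second via the product decomposition $\mu(dz) = \pi(x)\,\sigma_{\mathcal{M}}(dx)\,\psi(x)(dv)$ to see that $Qf$ and $f$ have the same $\mu$-integral. The paper's proof is slightly terser on the transport term and slightly more explicit on the refreshment term (writing out the Gaussian densities and observing the $\xi \leftrightarrow v$ symmetry), but the structure and key ingredients are the same.
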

\begin{proof}
We have that
\begin{align*}
    \int_{T \mathcal{M}} \mathcal{L}f(x,v) \mu(dz) &= \int_{T \mathcal{M}}X_{H}(f)d \mu + \lambda\int_{T \mathcal{M}}(Q-I)f d\mu.
\end{align*}
We will now consider these two integrals separately. Considering the first integral, due to the fact that $\mu$ is a Liouville measure, $\mu$ is invariant under the Hamiltonian flow by Liouville's theorem and hence the first integral is identically zero. Now considering the second integral we have
\begin{align*}
    \int [Q&f(x,v) - f(x,v)] \mu(dz) =  \int \! \int C(x)  e^{-\frac{1}{2}\xi^{T}G(x)^{-1}\xi}(f(x,\xi)-f(x,v))d\xi e^{-H(x,v)}d\lambda_{TM}(z)\\
    &= \int \! \int C(x) e^{-\frac{1}{2}\xi^{T}G(x)^{-1}\xi}e^{-U(x) -\frac{1}{2}v^{T}G(x)^{-1}v}(f(x,\xi)-f(x,v))d \xi d\lambda_{TM}(z)\\
    &=  \int C(x)e^{-U(x)}\int \! \int e^{-\frac{1}{2}\xi^{T}G(x)^{-1}\xi}e^{-\frac{1}{2}v^{T}G(x)^{-1}v}(f(x,\xi)-f(x,v))d \xi dvd\sigma_{\mathcal{M}}(x)\\
    &= 0,
\end{align*}
where $C(x)$ is a varying constant depending on $x$.
 Therefore we have that
 \[\int_{T\mathcal{M}}\mathcal{L}f(x,v) \mu(dz) = 0, \]
 and $\mu$ is an infinitesimally invariant measure.
\end{proof}

We will next demonstrate that $C^{1}_{c}(T\mathcal{M})$ is a core of $D(A)$ by showing that certain conditions established in \cite{Du2021} hold under the assumption that $\mathcal{M}$ is compact. To show that $C^{1}_{c}(T\mathcal{M})$ is a core of $D(A)$ we use the approach of compactly approximating RT-RMHMC by a more well-behaved PDMP, which has PDMP characteristics $(\varphi,\lambda,Q^{\epsilon})$ satisfying the Assumption A3 from \cite{Du2021} and has a Feller transition semigroup $(P_{t})_{t \geq 0}$. We then use this approximation to show that RT-RMHMC is Feller and $C^{1}_{c}(T\mathcal{M})$ is a core of the strong generator of RT-RMHMC, whose transition semigroup $(P_{t})_{t \geq 0}$ is seen as a semigroup on $C_{0}(T\mathcal{M})$. Note that $C_{0}(T\mathcal{M})$ denotes the space of continuous functions $f : T\mathcal{M} \to \mathbb{R}$ that vanish at infinity and $C_{0}(T\mathcal{M})$ is a Banach space when equipped with the $\lVert\cdot \Vert_{\infty}$ norm.

We first approximate our PDMP $(\varphi,\lambda,Q)$ (RT-RMHMC) with the PDMP with characteristics $(\varphi,\lambda, Q^{\epsilon})$ in the sense that
\[ 
\sup_{z \in T\mathcal{M}, A \in \mathcal{B}(T\mathcal{M})} \{\lambda^{\epsilon}(z) \wedge \lambda(z) \lvert Q^{\epsilon}(z,A) - Q(z,A)\vert + \lvert \lambda^{\epsilon}(z) - \lambda(z) \vert\} \leq \epsilon,
\]
where $Q^{\epsilon}$ is constructed as a Markov kernel corresponding to a consistently truncated Gaussian distribution on each tangent space as follows.

Define $G: \mathcal{M} \times \mathbb{R} \to \mathbb{R}$ by 
\[G(x,a) = \int_{B(0,a)} \psi(x)(dv) - (1-\nicefrac{\epsilon}{2\lambda}),\]
where $\psi(x)$ denotes the probability density function of the Gaussian distribution on $T_{x}\mathcal{M}$ defined by $\psi(x)(dv) \propto \exp{(-\frac{1}{2}v^{T}G(x)^{-1}v)}dv$, known as the Maxwellian distribution. Then we have that $\nicefrac{\partial G}{\partial a} \neq 0$ due to the fact that $G(x, \cdot)$ is strictly increasing. By the implicit function theorem there exists a unique continuously differentiable function $M: \mathcal{M} \to \mathbb{R}$ such that $G(x,M(x)) = 0$ for all $x \in \mathcal{M}$.
We define the transition kernel  as follows:

\[ Q^{\epsilon}(z,dz') = \lambda \delta_{x}(dx')\psi_{\epsilon}(x)(dv'),\]
where
\[\psi_{\epsilon}(x)(dv') = \begin{cases}
\frac{1}{1-\nicefrac{\epsilon}{2\lambda}}\psi(x)(dv') & \text{for } \lvert v'\vert_{g} \leq M(x)\\
0 & \text{otherwise}
\end{cases} \]
is the truncated Maxwellian distribution. Then we have that for any $(x,v) \in T\mathcal{M}$ and $A \in \mathcal{B}(T\mathcal{M})$
\begin{align*}
    &\lvert Q^{\epsilon}((x,v),A) - Q((x,v),A)\vert =\\ &=\lambda\lvert(\frac{1}{1-\nicefrac{\epsilon}{2\lambda}}-1)\int_{A \cap B(0,M(x))}\psi(x)(dv') - \int_{A \cap B(0,M(x))^{c}}\psi(x)(dv')\vert \\
    &\leq \lambda(\frac{1}{1-\nicefrac{\epsilon}{2\lambda}}-1)\lvert\int_{A \cap B(0,M(x))}\psi(x)(dv')\vert + \lambda\lvert\int_{A \cap B(0,M(x))^{c}}\psi(x)(dv')\vert\\
    &\leq \lambda(\frac{1}{1-\nicefrac{\epsilon}{2\lambda}}-1)(1-\nicefrac{\epsilon}{2\lambda}) + \lambda(1 - (1-\nicefrac{\epsilon}{2\lambda}))\\
    &=\epsilon.
\end{align*}

\begin{lemma}[Continuity of Semigroup]\label{Continuity of Semigroup}
Let $(\mathcal{M},g)$ be a smooth Riemannian manifold, and let $U \in C^{1}(\mathcal{M})$ and hence $H \in C^{1}(T\mathcal{M})$. Let $(P_{t})_{t \geq 0}$ be the transition semigroup of $(\varphi,\lambda,Q^{\epsilon})$, then 
\[ 
\lvert P_t f - f \vert  \to 0 \text{ as } t \to 0, \text{ for all } f \in C_{0}(T \mathcal{M}).
\]
\end{lemma}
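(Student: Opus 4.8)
The plan is to exploit the fact that the event rate is the constant $\lambda$, so that the jump times form a homogeneous Poisson process and the number of events $N_t$ in $[0,t]$ is Poisson distributed with mean $\lambda t$. Writing $P_t f(z) = \mathbb{E}_z[f(Z_t)]$ and conditioning on whether any event has occurred, I would split
\[
P_t f(z) = \mathbb{E}_z\!\left[f(Z_t)\mathbf{1}_{\{N_t = 0\}}\right] + \mathbb{E}_z\!\left[f(Z_t)\mathbf{1}_{\{N_t \ge 1\}}\right] = e^{-\lambda t} f(\varphi_t(z)) + R_t(z),
\]
where on $\{N_t = 0\}$ the process is simply the deterministic flow $\varphi_t$, and where $\lvert R_t(z)\rvert \le \lVert f\rVert_\infty\,(1 - e^{-\lambda t})$ uniformly in $z$ because $Q^{\epsilon}$ is a Markov kernel and hence $\lvert f(Z_t)\rvert \le \lVert f \rVert_\infty$. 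Consequently
\[
\lVert P_t f - f\rVert_\infty \le e^{-\lambda t}\lVert f\circ\varphi_t - f\rVert_\infty + (1-e^{-\lambda t})\lVert f\rVert_\infty + \lVert R_t\rVert_\infty,
\]
and the last two terms vanish as $t\to 0$. Note that the precise form of the jump kernel plays no role here, only that $\lambda$ is constant and finite; the lemma therefore reduces to the strong continuity of the flow on $C_0(T\mathcal{M})$, namely $\lVert f\circ\varphi_t - f\rVert_\infty \to 0$.

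The central step is to establish this flow continuity, and the main obstacle is that $T\mathcal{M}$ is noncompact in the velocity fibres, so that uniform (in $z$) control is not immediate even when $\mathcal{M}$ is compact. I would overcome this using conservation of energy. Since $\mathcal{M}$ is compact and $G(x)^{-1}$ is uniformly positive definite, $H(x,v) = U(x) + \frac{1}{2} v^{T} G(x)^{-1} v$ is coercive, so each sublevel set $\Omega_E := \{z \in T\mathcal{M} : H(z) \le E\}$ is compact; in particular the flow is complete and $\varphi_t$ is well defined. Moreover $\Omega_E$ is invariant under $\varphi_t$ because the Hamiltonian flow preserves $H$. Given $\eta > 0$, since $f \in C_0(T\mathcal{M})$ the set $\{\lvert f\rvert \ge \eta\}$ is compact and hence contained in some $\Omega_{E_0}$, so $\lvert f(z)\rvert < \eta$ whenever $H(z) > E_0$. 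For $z$ with $H(z) > E_0$, both $z$ and $\varphi_t(z)$ satisfy $H > E_0$ by energy conservation, giving $\lvert f(\varphi_t z) - f(z)\rvert < 2\eta$. For $z \in \Omega_{E_0}$, flow-invariance keeps $\varphi_t(z) \in \Omega_{E_0}$; on the compact set $\Omega_{E_0}$ the function $f$ is uniformly continuous, while $\sup_{z \in \Omega_{E_0}} d(\varphi_t z, z) \to 0$ as $t \to 0$ by joint continuity of the flow together with compactness. Hence $\sup_{z\in\Omega_{E_0}}\lvert f(\varphi_t z) - f(z)\rvert < \eta$ for $t$ small.

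Combining the two regions gives $\lVert f\circ\varphi_t - f\rVert_\infty \le 2\eta$ for all sufficiently small $t$, and since $\eta > 0$ is arbitrary this yields $\lVert f\circ\varphi_t - f\rVert_\infty \to 0$; feeding this into the displayed bound from the first paragraph completes the proof. The only genuinely delicate point is the noncompactness in the velocity directions, and the whole argument hinges on packaging it through the compact, flow-invariant energy sublevel sets $\Omega_E$ provided by coercivity of $H$ and conservation of energy; everything else is a routine decomposition of the PDMP into its no-jump and jump contributions.
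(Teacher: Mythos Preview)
Your decomposition is exactly the one in the paper: condition on whether the first jump time $S_1$ falls in $[0,t]$, bound the jump contribution by $\lVert f\rVert_\infty(1-e^{-\lambda t})$, and reduce the no-jump contribution to continuity of $t\mapsto f(\varphi_t(z))$ at $t=0$. Where you diverge from the paper is in the treatment of that last term. The paper reads the statement as \emph{pointwise} convergence $\lvert P_tf(z)-f(z)\rvert\to 0$ for each fixed $z$, for which it suffices that $\varphi_t(z)\to z$ by standard ODE theory when $H\in C^1$; no more is said. You instead upgrade to the sup-norm statement $\lVert P_tf-f\rVert_\infty\to 0$ and supply the missing uniformity via the compact, flow-invariant energy sublevel sets $\Omega_E$, invoking coercivity of $H$ and conservation of energy. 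This is a genuinely stronger conclusion and a cleaner argument, but it relies on compactness of $\mathcal{M}$, which the lemma as stated does not assume (though the surrounding section does). What your approach buys is that strong continuity of the semigroup on $C_0(T\mathcal{M})$ is obtained directly; the paper instead establishes only pointwise continuity here and then appeals to an external result (B\"ottcher--Schilling--Wang) to promote Feller plus pointwise continuity to strong continuity. Both routes are valid; yours is more self-contained at the cost of the extra compactness hypothesis.
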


\begin{proof}
Let $(Z_{t})_{t \geq 0}$ denote a sample path of $(\varphi,\lambda,Q^{\epsilon})$. We have that
\begin{align*}
    \lvert P_{t}f(z) - f(z)\vert &= \lvert\mathbb{E}_{z}(f(Z_{t})-f(z)\vert\\
    &= \lvert\mathbb{E}_{z}(f(Z_{t})\mathbbm{1}(S_{1} \leq t) + f(Z_{t})\mathbbm{1} (S_{1} > t))-f(z)\vert\\
    &\leq \lVert f\Vert_{\infty}\mathbb{P}(S_{1} \leq t) + \lvert f(\varphi_{t}(z))e^{-\lambda t}-f(z)\vert\\
    &=\lVert f\Vert_{\infty}(1-e^{-\lambda t}) + \lvert f(\varphi_{t}(z)-f(z)\vert \to 0,
\end{align*}
where $S_{1}$ is the time of the first event and $\varphi_{t}(z)$ is the solution of the Hamiltonian flow. If $H$ is continuously differentiable everywhere then $\varphi_{t}(z)$ is well defined for all $t>0$, and $\varphi_{t}(z) \to z$ as $t \to 0$ (see for example \cite{Ch2006}[Theorem 1.186]).
\end{proof}

\begin{lemma} \label{compactness-result}
Let $(\mathcal{M},g)$ be a compact Riemannian manifold, and let $U \in C^{1}(\mathcal{M})$. Let $(\varphi,\lambda,Q^{\epsilon})$ be the PDMP approximation of RT-RMHMC defined above. The set of all possible sample paths of $(\varphi,\lambda,Q^{\epsilon})$  with initial condition $(X_{0},V_{0})$ is contained in a compact set.
\end{lemma}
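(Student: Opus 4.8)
The plan is to exhibit a single fixed compact set $\mathcal{K}\subset T\mathcal{M}$ that contains the image of every admissible trajectory, by showing that the energy $H$ remains bounded along the whole path. Two ingredients drive this: the truncated refreshment kernel $Q^{\epsilon}$ can only ever produce velocities whose $g$-norm is controlled by $M(x)$, and the Hamiltonian flow $\varphi$ conserves $H$, so energy is injected \emph{only} at refreshment events, where it is bounded.

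First I would bound the energy immediately after a refreshment. The function $M:\mathcal{M}\to\mathbb{R}$ constructed above is continuous (indeed $C^{1}$, being the function furnished by the implicit function theorem), so compactness of $\mathcal{M}$ yields a finite maximum $M_{\max}:=\max_{x\in\mathcal{M}}M(x)$. After any refreshment event at a point $x$ the resampled velocity $v$ is supported on $\lvert v\rvert_{g}\le M(x)\le M_{\max}$, so the kinetic energy satisfies $K(x,v)=\tfrac12\lvert v\rvert_{g}^{2}\le \tfrac12 M_{\max}^{2}$. Since $U\in C^{1}(\mathcal{M})$ is continuous on the compact manifold $\mathcal{M}$ it is bounded, with $U_{\min}:=\min_{\mathcal{M}}U$ and $U_{\max}:=\max_{\mathcal{M}}U$ finite; hence right after every refreshment $H(x,v)=U(x)+K(x,v)\le U_{\max}+\tfrac12 M_{\max}^{2}=:E_{\mathrm{ref}}$.

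Next I would propagate this bound using conservation of $H$ under $\varphi$. On the first flow segment $H$ is constant at $H(X_{0},V_{0})$, and on each later segment it equals its value at the preceding refreshment, which is at most $E_{\mathrm{ref}}$. Setting $E^{\star}:=\max\{H(X_{0},V_{0}),E_{\mathrm{ref}}\}$, every point visited by the sample path lies in the sublevel set $\Omega:=\{(x,v)\in T\mathcal{M}\mid H(x,v)\le E^{\star}\}$. Finally I would check that $\Omega$ is compact: from $H\le E^{\star}$ and $U\ge U_{\min}$ one gets $\tfrac12\lvert v\rvert_{g}^{2}\le E^{\star}-U_{\min}$, so $\Omega$ sits inside the closed disk bundle $D_{R}:=\{(x,v)\mid x\in\mathcal{M},\ \lvert v\rvert_{g}\le R\}$ with $R:=\sqrt{2(E^{\star}-U_{\min})}$. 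Covering $\mathcal{M}$ by finitely many trivialising charts, on each of which $D_{R}$ is a compact piece of $\mathcal{M}$ times a closed Euclidean ball (using that $G$ is continuous and positive definite, so the $g$-norm and the Euclidean norm are uniformly equivalent on each chart), shows $D_{R}$ is compact; as $\Omega$ is closed in $D_{R}$ by continuity of $H$, it too is compact, and we may take $\mathcal{K}=\Omega$.

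The step requiring the most care is ensuring the trajectory is genuinely defined for all time and truly stays on its energy surface, i.e.\ completeness of the Hamiltonian flow. I would resolve this by the same confinement argument: energy conservation keeps each flow segment inside the compact set $D_{R}$, on which the $C^{1}$ vector field $X_{H}$ is complete, so no finite-time blow-up can occur between events; this dovetails with the non-explosiveness already recorded for the constant-rate PDMP. Everything else is routine bookkeeping.
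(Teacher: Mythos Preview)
Your proposal is correct and follows essentially the same route as the paper: bound the truncation radius $M(x)$ by compactness of $\mathcal{M}$, use conservation of $H$ under the Hamiltonian flow so that the energy along any segment equals its value at the preceding refreshment (or at the initial condition), and conclude that the trajectory sits in a closed disk bundle of fixed radius, which is compact. The paper absorbs the initial velocity into the choice of the bound $M_{\epsilon}$ on $M(x)$ and then invokes its separate lemma on compactness of disk bundles, whereas you handle $H(X_{0},V_{0})$ explicitly via the maximum $E^{\star}$ and sketch the disk-bundle compactness directly via trivialising charts; your added remark on completeness of the flow is a nice point the paper leaves implicit, but none of this amounts to a genuinely different argument.
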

\begin{proof}
Let $M(x)$ denote the continuous function in the definition of $Q_{\epsilon}$ which controls the truncation of the Gaussian distribution. $M(x)$ is a continuous function on a compact set and hence bounded by $M_{\epsilon}$. We further choose $M_{\epsilon}$ such that $\lvert V_{0}\vert_{g} \leq M_{\epsilon}$. Define the set
\[ U_{\epsilon} =  \{(x,v) \mid x \in \mathcal{M}, \lvert v\vert_{g} \leq M_{\epsilon}\} \subset T\mathcal{M}.\]
Due to the fact that $\mathcal{M}$ is compact it follows that $U_{\epsilon}$ is a compact subset of $T\mathcal{M}$ by Lemma \ref{compactness theorem}.  We have that $H$ restricted to $U_{\epsilon}$ is bounded by $M_{H}$ as it's continuous on a compact set. We also have that $H$ is constant between event times of the PDMP, by the definition of Hamiltonian flow. Therefore the Hamiltonian defined on the PDMP $(X_{t},V_{t})$ takes values which are defined by the image of $(X_{t_{i}},V_{t_{i}})$, for events $t_{i}$  $i = 1,2,...$. At event time $t_{i} \sim \exp{\lambda}$, we have that $(X_{t_{i}},V_{t_{i}}) \in U_{\epsilon}$, where $(X_{t_{i}},V_{t_{i}}) \sim Q(X_{t_{i}-},V_{t_{i}-}, \cdot)$. Therefore we can bound the Hamiltonian by $M_{H}$ on $\{(X_{t},V_{t})\mid t \geq 0 \}$. Now we have that 
\begin{align*}
    M_{H} \geq H(X_{t},V_{t}) &= U(X_{t}) + \lvert V_{t}\vert_{g} \\
    &\geq m_{U} + \lvert V_{t}\vert_{g}
\end{align*}
for all $t$. Therefore 
\[
\{(X_{t},V_{t}) \mid t \geq 0 \} \subset \{(x,v) \mid x \in \mathcal{M}, \lvert v\vert_{g} \leq M_{V} := M_{H} - m_{U} \},
\]
which is compact by Lemma \ref{compactness theorem}.
\end{proof}
We have the following assumption from \cite{Du2021}, which we use to establish Proposition \ref{Feller-Core-Approx}.
\begin{definition} \cite{Du2021}[Definition 16]
We say that a homogeneous differential flow $\varphi$ on $T\mathcal{M}$ and a homogeneous Markov kernel $Q$ on $T\mathcal{M}$ are compactly compatible if for all compact sets $K \subset T\mathcal{M}$ and $T \geq 0$, there exists a compact set $\Tilde{K} \subset T \mathcal{M}$ satisfying: for all $n \in \mathbb{N}^{*}$, $(t_{i})_{i \in \llbracket  1,n \rrbracket} \in \mathbb{R}^{n}_{+}, \sum^{n}_{i=1}t_{i} \leq T,$ there exists a sequence $(K_{i})_{i \in \llbracket 1,n \rrbracket}$ of compact sets of $T\mathcal{M}$ such that, setting $K_{0} = K,$
\begin{enumerate}
    \item for all $i \in \llbracket 1 , n  \rrbracket,$ $K_{i}$ only depends on $(t_{j})_{j \in \llbracket 1,n \rrbracket}$ and $\cup^{n}_{i=0} K_{i} \subset \Tilde{K}$;
    \item for all $i \in \llbracket 0, n-1 \rrbracket$, $s_{i+1} \in [0,t_{i+1}]$ and $s_{n+1} \in [0,T - \sum^{n}_{j=1} t_{j}],$
    \[\bigcup_{x \in K_{i}}\textnormal{supp}\{Q(\varphi_{t_{i+1}}(x),\cdot) \}\subset K_{i+1}, \qquad \varphi_{s_{i+1}}(K_{i}) \subset \Tilde{K}, \qquad \varphi_{s_{n+1}}(K_{n}) \subset \Tilde{K}. \]
\end{enumerate}
\end{definition}

\begin{assumption}\label{assumption-3}\cite{Du2021}[A3] The homogeneous characteristics $(\varphi,\lambda,Q)$ satisfy
\begin{enumerate}
    \item the flow $\varphi$ and the Markov kernel $Q$ are compactly compatible;\\
    \item $\lambda \in C^{1}(T\mathcal{M})$ and for all $f \in C^{1}(T\mathcal{M})$, $\lambda Q^{\epsilon}f \in C^{1}(T\mathcal{M})$ and there exists a locally bounded function $\Psi:T\mathcal{M} \to \mathbb{R}_{+}$ such that for all $x \in K$,
\[\lambda\lVert\nabla(Q^{\epsilon}f)(x) \Vert \leq \lVert\Psi\Vert_{\infty,K}\sup\{\lvert f(y)\vert+\lVert\nabla f(y) \Vert : y \in \textnormal{supp}\{Q^{\epsilon}(x,\cdot) \}\}; \]
    \item $(t,x) \mapsto \varphi_{t}(x) \in C^{1}(\mathbb{R}_{+} \times T\mathcal{M})$ and for all compact  $K \subset T\mathcal{M}$  and  $t \geq 0$,
\[\sup{\{\lVert\nabla \varphi_{s}(x)\Vert \mid s \in [0,t], x \in K \}} < +\infty. \]
\end{enumerate}
\end{assumption}

\begin{proposition}[Feller and Core of Generator]\label{Feller-Core-Approx}
Let $(P_{t})_{t \geq 0}$ be the transition semigroup of $(\varphi,\lambda,Q^{\epsilon})$ on $T\mathcal{M}$, where $(\mathcal{M},g)$ is a compact smooth Riemannian manifold and $\varphi$ is the Hamiltonian flow associated to the Hamiltonian $H \in C^{2}(T\mathcal{M})$. Then, $(P_{t})_{t \geq 0}$ is Feller and $C^{1}_{c}(T\mathcal{M})$ is a core for the strong generator of $(P_{t})_{t \geq 0}$ seen as a semigroup on $C_{0}(T\mathcal{M})$.
\end{proposition}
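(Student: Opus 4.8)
The plan is to verify that the approximating characteristics $(\varphi,\lambda,Q^{\epsilon})$ satisfy Assumption \ref{assumption-3} (namely A3 of \cite{Du2021}) and then invoke the corresponding theorem of \cite{Du2021}, which states that characteristics satisfying A3 generate a Feller semigroup for which $C^{1}_{c}(T\mathcal{M})$ is a core of the strong generator. With Lemma \ref{Continuity of Semigroup} supplying the strong continuity ingredient, the entire burden then reduces to checking the three items of Assumption \ref{assumption-3} for our specific flow $\varphi$ and truncated kernel $Q^{\epsilon}$.

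First I would verify that $\varphi$ and $Q^{\epsilon}$ are compactly compatible (item 1). The key point is that $Q^{\epsilon}$ leaves the base point $x$ unchanged and resamples the velocity inside the ball $B(0,M(x))$, where $M$ is continuous and hence bounded on the compact manifold $\mathcal{M}$. Combined with conservation of $H$ along the Hamiltonian flow, this is exactly the mechanism behind Lemma \ref{compactness-result}: starting from a compact set $K$ and evolving for total time at most $T$, every reachable state lies in a single fixed compact set $\tilde{K} \subset T\mathcal{M}$. One then takes each $K_{i}$ to be a compact enlargement of the image of $K$ under the first $i$ flow-and-resample steps; since the truncation radius is bounded uniformly in $x$, these can be chosen so that $\bigcup_{i=0}^{n} K_{i} \subset \tilde{K}$ and the support inclusions for $Q^{\epsilon}$ hold, giving the required nested family.

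Next I would check item 2. Since $\lambda$ is constant it is trivially $C^{1}$. Writing $Q^{\epsilon}f(x,v) = (1-\nicefrac{\epsilon}{2\lambda})^{-1}\int_{B(0,M(x))} f(x,v')\,\psi(x)(dv')$, the $x$-dependence enters through the smooth metric $G(x)$ in the Gaussian density, through the truncation radius $M(x)$ (which is $C^{1}$ by the implicit function theorem, as established just before Lemma \ref{Continuity of Semigroup}), and through the explicit base-point argument of $f$. Differentiating under the integral sign, with a Leibniz boundary contribution from $M(x)$ that is controlled because $\psi$ and $f$ are continuous, shows $Q^{\epsilon}f \in C^{1}(T\mathcal{M})$ and yields the gradient bound of item 2 with a locally bounded $\Psi$ assembled from sup-norms of $\nabla G$, $\nabla M$ and the Gaussian density over the relevant compact region.

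Finally, item 3 follows from standard ODE theory: because $H \in C^{2}(T\mathcal{M})$ the Hamiltonian vector field $X_{H}$ is $C^{1}$, so $(t,x)\mapsto \varphi_{t}(x)$ is $C^{1}$ and its Jacobian $\nabla\varphi_{s}(x)$ depends continuously on $(s,x)$, hence is bounded on any compact time–space set. With all three items of Assumption \ref{assumption-3} verified, the conclusion follows from the cited theorem of \cite{Du2021}. I expect the main obstacle to be item 2: handling the differentiation of the truncated Gaussian integral, where both the integration domain (through $M(x)$) and the integrand (through $G(x)$ and the base point) vary with $x$, and then repackaging these contributions into the single uniform gradient bound demanded by Assumption \ref{assumption-3}.
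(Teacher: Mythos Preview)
Your proposal is correct and follows essentially the same route as the paper: verify the three items of Assumption~\ref{assumption-3} for $(\varphi,\lambda,Q^{\epsilon})$ and then invoke \cite{Du2021}[Theorem 17] together with Lemma~\ref{Continuity of Semigroup} (and \cite{Bot2013}, \cite{Et1986}) to obtain the Feller property and the core statement. The paper carries out item~2 explicitly via spherical coordinates and Leibniz's rule exactly as you anticipate, and for items~1 and~3 it likewise reduces to the confinement argument of Lemma~\ref{compactness-result}; the only small point you leave implicit is that completeness of the Hamiltonian flow on the non-compact phase space $T\mathcal{M}$ is guaranteed by conservation of $H$ together with compactness of $\mathcal{M}$, which the paper makes explicit when handling item~3.
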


\begin{proof}
If we prove that $(\varphi,\lambda,Q^{\epsilon})$ satisfies Assumption \ref{assumption-3}, then from \cite{Du2021}[Theorem 17] $(P_{t})_{t \geq 0}$ satisfies the Feller property. Once the Feller property is established by Lemma \ref{Continuity of Semigroup} and due to the fact that $T\mathcal{M}$ is a complete metric space we have by \cite{Bot2013}[Lemma 1.4] strong continuity of $(P_{t})_{t \geq 0}$ and that $(P_{t})_{t \geq 0}$ is Feller. Further to this $C^{1}_{c}(T\mathcal{M})$ is a core for the strong generator of $(P_{t})_{t \geq 0}$ seen as a semigroup on $C_{0}(T\mathcal{M})$ is a consequence of \cite{Du2021}[Theorem 17] and \cite{Et1986}[Proposition 3.3,Chapter 1].
We will now establish Assumption \ref{assumption-3}.

For any compact set $K \subset T\mathcal{M}$, as $K$ is compact,  $\lvert v\vert_{g} \leq M_{K}$ for some constant $M_{K} \geq 0$ and for all $v$ such that $(\cdot,v) \in K$. Then by the same argument to that of Lemma \ref{compactness-result}, but choosing $M_{\epsilon}$ larger than $M_{K}$ we have that all PDMPs starting in $K$ are contained in a compact set $\Tilde{K}$. We can define $K_{0} = K$ and $K_{i} = \Tilde{K}$ for all $i \geq 1.$ Then we have the flow $\varphi$ and $Q_{\epsilon}$ are compactly compatible and hence Assumption \ref{assumption-3}i) holds.

We show Assumption \ref{assumption-3}ii) as follows. Trivially we have $\lambda \in C^{1}(T\mathcal{M})$. We have taken the metric to be smooth and hence, as the truncated Gaussian distribution has a smooth transition kernel, we have that $Q^{\epsilon}f \in C^{1}(T\mathcal{M})$.   Firstly we note that 
\[
\textnormal{supp}\{Q^{\epsilon}(x,\cdot)\} = \{(x,v) \mid \lvert v\vert_{g} \leq M(x)\} ,
\]
which is compact by Lemma \ref{compactness theorem}.
For all continuously differentiable functions $f: T\mathcal{M} \to \mathbb{R}$, with $(x,y) \in T\mathcal{M}$, we define
\[A(x,y) := \lambda Q^{\epsilon}f(x,y) =  \frac{\lambda}{1-\epsilon/2\lambda} \int_{B(0,M(x))}f(x,y') \psi(x) (dy'). \]
Therefore it is sufficient to show that for all compact sets $K \subset T \mathcal{M}$, and for all $(x,y) \in K$,
\[
\lVert\nabla A(x,y)\Vert \leq \sup_{(w,z) \in K}\{\Psi(w,z) \} \sup \{ \lvert f\vert(x,y') + \lVert\nabla f(x,y')\Vert \mid \lvert y'\vert_{g} \leq M(x)  \},
\]
where $\Psi: T \mathcal{M} \to \mathbb{R}$ is bounded on compact sets of $T \mathcal{M}$. 
Define $\lVert\cdot\Vert_{\infty,M(x)} \equiv \lVert\cdot\Vert_{\infty,B(0,M(x))}$. We have that for all $(x,y) \in T \mathcal{M}$, 
since all functions considered are $C^1$ and hence bounded on all compact sets of $T \mathcal{M}$ we have the following computation which uses the dominated convergence theorem, a Leibniz's integral rule and a spherical coordinate system:
\begin{align*}
\lVert\nabla A(x,y)\Vert &= \frac{\lambda}{1-\epsilon/2\lambda}\lVert\nabla_{x} \int_{B(0,M(x))} f(x,y')\psi(x,y')dy'\Vert\\
&= C\lVert\nabla_{x}\int^{M(x)}_{0} \int_{S^{n-1}}f(x,r,\sigma) \psi(x,r,\sigma) r^{n-1}d\sigma dr\Vert\\
&=C\lVert\int^{M(x)}_{0}\int_{S^{n-1}}\nabla_{x}(f(x,r,\sigma)\psi(x,r,\sigma)r^{n-1})d\sigma dr +\\
& \qquad \qquad \qquad \int_{S^{n-1}}f(x,M(x),\sigma)\psi(x,M(x),\sigma)M(x)^{n-1}d\sigma \cdot \nabla_{x}M(x)\Vert\\
&\leq C \lVert\int_{B(0,M(x))}\nabla_{x}(f(x,y')\psi(x,y'))dy'\Vert +\\
& \qquad \quad C\lVert\int_{S^{n-1}}f(x,M(x),\sigma)\psi(x,M(x),\sigma)d\sigma\Vert \cdot\lVert M(x)^{n-1}\nabla_{x}M(x)\Vert_{\infty}\\
&\leq C_1 \lVert\int_{B(0,M(x))}\nabla_{x}(f(x,y'))\psi(x,y')dy'\Vert + \\
& \qquad \qquad \qquad C_{1}\lVert\int_{B(0,M(x))}f(x,y')\nabla_{x}(\psi(x,y'))dy'\Vert + C_{2}\lVert f(x,\cdot)\Vert_{\infty,M(x)}\\
&\leq C_{1}\lVert\nabla_{x}f(x,\cdot)\Vert_{\infty,M(x)} + C_{2}\lVert f(x,\cdot)\Vert_{\infty,M(x)} + \\
&\qquad \qquad \qquad \qquad C_3 \lVert f(x,\cdot)\Vert_{\infty,M(x)} \lVert\int_{B(0,M(x))}\nabla_{x}\psi(x,y')dy'\Vert\\
& \leq C_{1}\lVert\nabla_{x}f(x,\cdot)\Vert_{\infty,M(x)} + C_{2}\lVert f(x,\cdot)\Vert_{\infty,M(x)} + C_3 \lVert f(x,\cdot)\Vert_{\infty,M(x)} (M(x)^{n}) \\
& \leq C_{1}\lVert\nabla f(x,\cdot)\Vert_{\infty,M(x)} + C_2 \lVert f(x,\cdot)\Vert_{\infty,M(x)}\\
&\leq (C_{1} + C_{2})\lVert\hspace{2mm}\lVert\nabla f(x,\cdot)\Vert + \lvert f(x,\cdot)\vert\hspace{2mm}\Vert_{\infty,M(x)},
\end{align*}
where $C,C_1,C_2$ and $C_3$ are general constants carrying line by line and $\nabla_{x}$ denotes the differential operator with respect to position on $\mathcal{M}$ and we have bounded $\nabla_{x}\psi$ universally on $\{ (x,y) \mid x \in \mathcal{M}, \lvert y\vert_{g} \leq M(x) \} \subset T \mathcal{M}$. Therefore we have the required result by setting $\Psi = C_{1} + C_{2}$. 
Finally we have to show Assumption \ref{assumption-3}iii), where we use the fact that $\varphi$ is continuously differentiable, when $U \in C^{2}(\mathcal{M})$ and for any compact set $K \subset T \mathcal{M}$ we have that $\lvert v\vert_{g} \leq M_{K}$ for all $(x,v) \in T\mathcal{M}$. Then by the same argument as that of Lemma \ref{compactness-result} we can define a larger constant such that all PDMPs starting in $K$ have bounded velocity and hence are contained in a compact set $\Tilde{K}$. Hence Assumption \ref{assumption-3}iii) holds by the fact that a continuous function on a compact set is bounded.
\end{proof}

\begin{theorem}[RT-RMHMC Feller and Core]
Let $(P_{t})_{t \geq 0}$ be the transition semigroup of $(\varphi,\lambda,Q)$ on $T\mathcal{M}$, where $(\mathcal{M},g)$ is a compact smooth Riemannian manifold and $\varphi$ is the Hamiltonian flow associated to the Hamiltonian $H \in C^{1}(T\mathcal{M})$. Then, $(P_{t})_{t \geq 0}$ is Feller and $C^{1}_{c}(T\mathcal{M})$ is a core for the strong generator of $(P_{t})_{t \geq 0}$ seen as a semigroup on $C_{0}(T\mathcal{M})$.
\end{theorem}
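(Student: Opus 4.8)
The plan is to deduce the statement for the true characteristics $(\varphi,\lambda,Q)$ from the corresponding result for the approximations $(\varphi,\lambda,Q^{\epsilon})$ by letting $\epsilon \to 0$. By Proposition \ref{Feller-Core-Approx}, for every $\epsilon > 0$ the transition semigroup $(P^{\epsilon}_{t})_{t \geq 0}$ of $(\varphi,\lambda,Q^{\epsilon})$ is Feller and $C^{1}_{c}(T\mathcal{M})$ is a core for its strong generator $\mathcal{L}^{\epsilon}$, seen on $C_{0}(T\mathcal{M})$. Together with the approximation bound established above, which controls $(\varphi,\lambda,Q^{\epsilon})$ against $(\varphi,\lambda,Q)$ uniformly in $z$ and $A$, this places us precisely in the perturbation setting of \cite{Du2021}, whose approximation theorem I would invoke to transfer the Feller property and the core over to $(\varphi,\lambda,Q)$.

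The quantitative ingredient that makes the limit work is that the generators converge to $\mathcal{L} = X_{H} + \lambda(Q - I)$ in operator norm along the common domain $C^{1}_{c}(T\mathcal{M})$. Since the transport field $X_{H}$ and the rate $\lambda$ are identical for the two families, for any $f \in C^{1}_{c}(T\mathcal{M})$ we have
\[
\mathcal{L}^{\epsilon}f - \mathcal{L}f = \lambda(Q^{\epsilon} - Q)f.
\]
The approximation bound gives $\lambda \sup_{A \in \mathcal{B}(T\mathcal{M})}\lvert Q^{\epsilon}(z,A) - Q(z,A)\vert \leq \epsilon$ for every $z \in T\mathcal{M}$, i.e. control of $Q^{\epsilon}(z,\cdot) - Q(z,\cdot)$ in total variation, so that bounding the integral against $f$ yields
\[
\lVert \mathcal{L}^{\epsilon}f - \mathcal{L}f \Vert_{\infty} \leq 2\epsilon \lVert f \Vert_{\infty} \to 0 \quad \text{as } \epsilon \to 0.
\]
Hence $\mathcal{L}^{\epsilon}f \to \mathcal{L}f$ in $C_{0}(T\mathcal{M})$ for each $f \in C^{1}_{c}(T\mathcal{M})$, uniformly over $\lVert\cdot\Vert_{\infty}$-bounded families.

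With this in hand I would apply the Trotter--Kato approximation theorem (\cite{Et1986}[Chapter 1]), which underlies the cited result of \cite{Du2021}: since each $\overline{\mathcal{L}^{\epsilon}}$ generates a strongly continuous contraction (Feller) semigroup and $\mathcal{L}^{\epsilon}f \to \mathcal{L}f$ for all $f$ in the common core $C^{1}_{c}(T\mathcal{M})$, the closure $\overline{\mathcal{L}}$ generates a strongly continuous contraction semigroup on $C_{0}(T\mathcal{M})$, this semigroup coincides with $(P_{t})_{t \geq 0}$ of $(\varphi,\lambda,Q)$ (both solve the same martingale problem, and the non-explosivity established earlier guarantees conservativeness), and $C^{1}_{c}(T\mathcal{M})$ is a core for $\overline{\mathcal{L}}$. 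The Feller property of $(P_{t})_{t \geq 0}$ then follows as a uniform limit of Feller semigroups, with strong continuity near $t = 0$ supplied as in Lemma \ref{Continuity of Semigroup}.

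The main obstacle is the range (density) condition underlying the core conclusion: to know that $\overline{\mathcal{L}}$ is genuinely a generator and that $C^{1}_{c}(T\mathcal{M})$ survives as a core in the limit, one needs $(\alpha - \mathcal{L})C^{1}_{c}(T\mathcal{M})$ to be dense in $C_{0}(T\mathcal{M})$ for some $\alpha > 0$. This is exactly what the uniform resolvent estimates coming from the Feller property of each approximation, combined with the norm convergence $\lVert \mathcal{L}^{\epsilon}f - \mathcal{L}f\Vert_{\infty} \to 0$, deliver, and it constitutes the content of the approximation result of \cite{Du2021} whose hypotheses have been verified via Proposition \ref{Feller-Core-Approx} and the bound above. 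A secondary subtlety is the relaxed regularity $H \in C^{1}(T\mathcal{M})$ stated here versus the $C^{2}$ level used in Proposition \ref{Feller-Core-Approx}: in the $C^{1}$ case one must rely on the existence and continuity of the Hamiltonian flow $\varphi$ as employed in Lemma \ref{Continuity of Semigroup} (via \cite{Ch2006}) to ensure the transport part remains well defined.
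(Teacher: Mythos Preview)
Your proposal is correct and follows essentially the same approach as the paper: both use the approximation $(\varphi,\lambda,Q^{\epsilon})$, invoke Proposition~\ref{Feller-Core-Approx} for the approximations, and then transfer Feller and core to $(\varphi,\lambda,Q)$ via the perturbation/approximation framework of \cite{Du2021}. The paper's proof is terser---it simply cites \cite{Du2021}[Theorems 11, 17] and the argument of \cite{Du2021}[Theorem 21]---whereas you unpack the underlying Trotter--Kato mechanism and explicitly note the range-density obstacle and the $C^{1}$ versus $C^{2}$ regularity mismatch, but the strategy is the same.
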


\begin{proof}
By construction of $Q^{\epsilon}$ we have the property that
\[ 
\sup_{x \in \mathcal{M}, A \in \mathcal{B}(\mathcal{M})} \{\lambda^{\epsilon}(x) \wedge \lambda(x) \lvert Q^{\epsilon}(x,A) - Q(x,A)\vert + \lvert  \lambda^{\epsilon}(x) - \lambda(x) \} \leq \epsilon.
\]
Using \cite{Du2021}[Theorem 11], Proposition \ref{Feller-Core-Approx}, \cite{Du2021}[Theorem 17] and the same argument as \cite{Du2021}[Theorem 21] we have the required result.
\end{proof}

\begin{corollary}[Invariant measure for RT-RMHMC]\label{cor:invmeasureRTRMHMC}
Let $(P_{t})_{t \geq 0}$ be the transition semigroup of $(\varphi,\lambda,Q)$ on $T\mathcal{M}$, where $(\mathcal{M},g)$ is a compact smooth Riemannian manifold and $\varphi$ is the Hamiltonian flow associated to the Hamiltonian $H \in C^{2}(T\mathcal{M})$. Let $\mu$ be the measure on $(T\mathcal{M}, \mathcal{B}(T\mathcal{M}))$ given by
\[ \mu(dz) \propto e^{-H(x,v)}d\lambda_{T \mathcal{M}}(z),\]
where $d\lambda_{T\mathcal{M}}$ is the Liouville measure of $T\mathcal{M}$.
Then $\mu$ is invariant for RT-RMHMC.
\end{corollary}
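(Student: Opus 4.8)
The plan is to combine the two results proved immediately above — the infinitesimal invariance of Theorem~\ref{Infinitesimally-Invariant} and the preceding theorem asserting that $(P_t)_{t\geq 0}$ is Feller with $C^1_c(T\mathcal{M})$ a core for its strong generator $\mathcal{L}$ on $C_0(T\mathcal{M})$ — into a soft functional-analytic argument. The one structural fact supplied by the compactness of $\mathcal{M}$ that I would isolate first is that $\mu$ is a \emph{finite} measure: integrating out the Gaussian velocity in each tangent space collapses $e^{-H}\,d\lambda_{T\mathcal{M}}$ to (a constant multiple of) $e^{-U(x)}\sigma_{\mathcal{M}}(dx)$, which is finite because $U$ is continuous on the compact base $\mathcal{M}$. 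Finiteness of $\mu$ is precisely what lets sup-norm convergence pass to convergence of integrals against $\mu$, and it is used at every subsequent step.

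First I would upgrade the infinitesimal invariance from the core to the whole domain. Given $f\in D(\mathcal{L})$, the core property furnishes $(f_n)\subset C^1_c(T\mathcal{M})$ with $\|f_n-f\|_\infty\to 0$ and $\|\mathcal{L}f_n-\mathcal{L}f\|_\infty\to 0$. Since $\mu(T\mathcal{M})<\infty$, Theorem~\ref{Infinitesimally-Invariant} gives
\[
\int_{T\mathcal{M}}\mathcal{L}f\,d\mu=\lim_{n\to\infty}\int_{T\mathcal{M}}\mathcal{L}f_n\,d\mu=0,
\]
so that $\int\mathcal{L}f\,d\mu=0$ for \emph{every} $f\in D(\mathcal{L})$.

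Next I would use the semigroup to promote this infinitesimal statement to genuine invariance. For $f\in D(\mathcal{L})$ the fundamental identity $P_tf-f=\int_0^t \mathcal{L}P_sf\,ds$ holds as a $C_0(T\mathcal{M})$-valued Bochner integral, with $P_sf\in D(\mathcal{L})$ and $\mathcal{L}P_sf=P_s\mathcal{L}f$. Integrating against $\mu$ and applying Fubini — legitimate since $\int_0^t\int_{T\mathcal{M}}|\mathcal{L}P_sf|\,d\mu\,ds\le t\|\mathcal{L}f\|_\infty\,\mu(T\mathcal{M})<\infty$ using the contraction bound $\|P_s\mathcal{L}f\|_\infty\le\|\mathcal{L}f\|_\infty$ — yields
\[
\int_{T\mathcal{M}}(P_tf-f)\,d\mu=\int_0^t\!\left(\int_{T\mathcal{M}}\mathcal{L}(P_sf)\,d\mu\right)ds=0,
\]
since each inner integral vanishes by the previous step, $P_sf\in D(\mathcal{L})$. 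Hence $\int P_tf\,d\mu=\int f\,d\mu$ for all $t\ge 0$ and all $f\in D(\mathcal{L})$. Because $D(\mathcal{L})$ is dense in $C_0(T\mathcal{M})$ and $\|P_t\|_{C_0\to C_0}\le 1$ (the process being non-explosive, hence conservative), this equality extends by continuity to every $f\in C_0(T\mathcal{M})$; and since $C_0(T\mathcal{M})$ separates finite Borel measures on the locally compact Hausdorff space $T\mathcal{M}$, the resulting identity is exactly $\mu P_t=\mu$, i.e. $\mu$ is invariant.

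The genuinely hard work — showing $(P_t)$ is Feller and that $C^1_c(T\mathcal{M})$ is a core, via the truncated-Gaussian approximation $Q^\epsilon$, Proposition~\ref{Feller-Core-Approx}, and the perturbation/transfer argument — has already been discharged upstream, and compactness of $\mathcal{M}$ is exactly what makes that machinery apply. Consequently the only real obstacle left at this stage is bookkeeping: verifying that each passage (core $\to D(\mathcal{L})$, then $D(\mathcal{L})\to C_0(T\mathcal{M})$) is licensed, all of which reduce to finiteness of $\mu$ together with the contraction property of the Feller semigroup. I expect the Fubini interchange in the integrated identity above to require the most care, but it is routine once the uniform bound $\|P_s\mathcal{L}f\|_\infty\le\|\mathcal{L}f\|_\infty$ is invoked.
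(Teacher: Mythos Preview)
Your proposal is correct and is exactly the standard argument the paper leaves implicit: the corollary is stated without proof, immediately after Theorem~\ref{Infinitesimally-Invariant} and the Feller/core theorem, and your write-up simply fills in the routine functional-analytic steps (extend $\int\mathcal{L}f\,d\mu=0$ from the core to $D(\mathcal{L})$ using finiteness of $\mu$, integrate the semigroup identity, pass to $C_0$ by density). There is nothing to add or correct.
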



\section{Proof of invariance and $\mu$-irreducibility for the Met\-ropolized algorithm}
\label{sec:proofMetropolis}

\begin{proof}[Proof of Proposition \ref{mu-invariant}]
Let $P_{1}$ be the Markov kernel corresponding to the first step. It is clear that resampling from the Gaussian measure on the tangent space keeps $\pi_{\mathcal{H}}$ invariant as it is independent and also keeps $\psi(x)$ invariant, and therefore keeps $\mu$ invariant.

Let $P_{2}$ be the Markov Kernel corresponding the second step (the combination of the sampling the time duration, deterministic step by $\Psi$ and the Metropolis-Hastings accept-reject step. Let $L$ be an arbitrary number of RATTLE steps we will check that $\mu$ is reversible with respect to $P_{2}$ and hence also invariant.

$P_{2}$ is reversible with respect to $\mu$ if for every measurable bounded function $f :T \mathcal{M} \times T \mathcal{M} \to \mathbb{R}$
\[\int \! \int f(z_1, z_2) \mu(dz_1 ) K( z_1 , dz_2 ) = \int \! \int f(z_1 , z_2 ) \mu (d z_2 ) K( z_2 , dz_1 ).  \]

For $P_2$ we have that $P_{2}(z_{1},dz_{2} )$ is non-zero if and only if $z_{2} = \Psi (z_{1} )$ and $z_{2} = z_{1}$, hence we have that
\begin{align*}
    \int \! \int f(z_1 , z_2 ) \mu(dz_1 ) P_2 (z_1 ,dz_2 ) &= \int \! \int f(z_1 , \Psi(z_1)) \min{[1,\exp{{(H(z_1 ) - H(\Psi(z_1 ))}}]}\mu(dz_1) +\\
    &+ \int \! \int f(z_1 , z_1)(1 - \min{[1,\exp{{(H(z_1 ) - H(\Psi(z_1 ))}}]})\mu(dz_1 ).
\end{align*}
Now let $z_2 = \Psi(z_1),$ then due to the momentum reversal map $N$, we have that $z_{1} = \Psi(z_{2}) = \Psi(\Psi(z_1 )),$ and by the volume preserving property of $\Psi$ (preserving the Liouville measure), we have that
\[\mu(dz_{2}) = \mu(dz_{1}) \cdot \frac{\exp{(-H(z_{2}))}}{\exp{(-H(z_{1}))}} = \mu(d z_1 ) \cdot \exp{(H(z_{1}) - H(z_{2}))},\]
and using this property we have that the first part of the above sum can be written as 
\begin{align*}
  &\int \! \int f(z_{1}, \Psi(z_{1})) \min{[1,\exp{(H(z_1 ) - H(\Psi (z_{1}) ))}]}\mu(dz_{1})\\
  &= \int \! \int f(\Psi(z_{2}),z_{2}) \min{[1, \exp{(H(\Psi(z_{2})) - H(z_{2}))}]} \cdot \exp{(H(z_{2})-H(\Psi(z_{2})))} \mu(dz_{2})\\
  &= \int \! \int f(\Psi(z_{2}),z_{2}) \min{[1, \exp{(H(z_{2}) - H(\Psi(z_{2})))}]} \mu(dz_{2}).
\end{align*}
Now considering the second part of the sum, through a change of variables and combining these two equations we have the required result. We therefore have that $\mu$ is reversible with respect to $P_{2}$ and by the same argument and considering $f$ to be the identity we have invariance $P_{2}$ with respect to $\mu$. Due to the fact that this calculation was independent of time we have that $\mu$ is invariant with respect to the Markov kernel of this algorithm.
\end{proof}

\begin{proof}[Proof of Theorem \ref{irreducible}]
Based on \cite{Br2012}[Theorem 3]. Fix $\Delta t > 0$ sufficiently small such that our assumption holds. For a measurable set $A \subset \mathcal{M}$, we can say $A$ is contained in a compact set $K$, which can be covered by $\{B_{r/2}(x) \mid x \in K \}$. Then we have that for some $x' \in K$, $B_{r/2}(x') \cap A$ has positive measure. We can connect $x$ and $x'$ by a sequence of points $x_{0},...,x_{i},...,x_{n}$ for $0 \leq i \leq n$, defined on the geodesic between $x_{0} = x$ and $x_{n} = x'$ such that $d(x,x_{n}) \leq r/2$. We can find unique $v_{0},...,v_{n}$ such that $(x_{i+1},v_{i+1}) = \Psi^{L}_{\Delta t}(x_{i},v_{i})$ by Theorem \ref{accessibility}. We have that \[ K(x_{i},x_{i+1}) > 0\]
due to the Theorem \ref{accessibility} and the fact that $\phi(x_{i})(v_{i}) > 0 $.  Considering the final step we have due to the triangle inequality $\lvert x_{n-1} - \Tilde{x}\vert < r$ for all $\Tilde{x} \in B_{r/2}(x') \cap A$. Hence by the same reasoning and Theorem \ref{accessibility} we have that $K(x_{n-1}, \Tilde{x}) > 0$ for all $\Tilde{x} \in B_{r/2}(x') \cap A$.  Using the fact that $K(x_{i},x_{i+1}) > 0$ for all $0 \leq i \leq n-2$, and  $K(x_{n-1}, \Tilde{x}) > 0$ for all $\Tilde{x} \in B_{r/2}(x') \cap A$ we have that $K^{n}(x,\Tilde{x}) > 0$ for all $\Tilde{x} \in B_{r/2}(x') \cap A$ and
\begin{align*}
    K^{n}(x,A) \geq K^{n}(x, B_{r/2}(x') \cap A) = \int_{B_{r/2}(x') \cap A} K^{n}(x,y) \sigma_{\mathcal{M}}(dy) > 0.
\end{align*}
\end{proof}

%
%
\section{Additional Results}

\begin{lemma}\label{compactness theorem}
	Let $(\mathcal{M},g)$ be a smooth $k$-dimensional Riemannian manifold, let $K \subset\mathcal{M}$ be compact and let $R \in C^{1}(\mathcal{M})$ such that $R(x)>0$ for all $x \in \mathcal{M}$, then the set
	\[\Tilde{K}_{R} := \{(x,v) \mid x \in K, v \in T_{x}\mathcal{M}, \lvert v\vert_{g} \leq R(x)   \} \]
	is a compact subset of $T\mathcal{M}$.
\end{lemma}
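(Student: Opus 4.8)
The plan is to realize $\tilde K_R$ as a closed subset of a compact subset of $T\mathcal M$, after which compactness follows immediately from the fact that a closed subset of a compact set is compact. Throughout, let $\pi_{\mathcal M}\colon T\mathcal M \to \mathcal M$ denote the canonical bundle projection. The two things to establish are (i) that $\tilde K_R$ is closed in $T\mathcal M$, and (ii) that $\tilde K_R$ is contained in some compact subset of $T\mathcal M$.

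First I would verify closedness. Since the metric $g$ is smooth, the map $(x,v) \mapsto \lvert v \rvert_g^2 = g_x(v,v)$ is continuous on $T\mathcal M$, and as $R \in C^1(\mathcal M)$ the function $F(x,v) := g_x(v,v) - R(x)^2$ is continuous on all of $T\mathcal M$. Hence $\{F \leq 0\} = \{(x,v) : \lvert v \rvert_g \leq R(x)\}$ is closed, and $\pi_{\mathcal M}^{-1}(K)$ is closed as the preimage of the compact (hence closed) set $K$ under the continuous projection $\pi_{\mathcal M}$. Their intersection is exactly $\tilde K_R$, which is therefore closed.

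It remains to exhibit a compact set containing $\tilde K_R$, and this step carries the real content. Set $R_{\max} := \max_{x \in K} R(x) < \infty$, which exists by continuity of $R$ on the compact set $K$; then $\tilde K_R \subseteq \{(x,v) : x \in K,\ \lvert v \rvert_g \leq R_{\max}\}$. For each $p \in K$ I would choose a chart over which $T\mathcal M$ is trivialized, giving a diffeomorphism $\pi_{\mathcal M}^{-1}(U_p) \cong U_p \times \mathbb R^k$ in which $v$ acquires fiber coordinates $w \in \mathbb R^k$ and $\lvert v \rvert_g^2 = w^{\top} A(x) w$ for a continuous, positive-definite matrix field $A(x)$. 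Shrinking to a smaller neighbourhood $V_p \ni p$ with $\overline{V_p} \subset U_p$ and $\overline{V_p}$ compact, I would use that the smallest eigenvalue of $A(x)$ is a continuous, strictly positive function of $x$, hence bounded below on $\overline{V_p}$ by some $\lambda_p > 0$; consequently $\lvert v \rvert_g \leq R_{\max}$ forces $\lvert w \rvert^2 \leq R_{\max}^2/\lambda_p$, a bound independent of $x$. Thus, read through the trivialization, $\tilde K_R \cap \pi_{\mathcal M}^{-1}(\overline{V_p})$ lies inside the compact set $\overline{V_p} \times \overline{B(0, R_{\max}/\sqrt{\lambda_p})}$ and is therefore compact. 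Covering $K$ by finitely many such $V_{p_1},\dots,V_{p_N}$, which is possible because $K$ is compact, writes $\tilde K_R = \bigcup_{j=1}^N \bigl(\tilde K_R \cap \pi_{\mathcal M}^{-1}(\overline{V_{p_j}})\bigr)$ as a finite union of compact sets. Combined with the closedness from the previous step, $\tilde K_R$ is compact.

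The \emph{main obstacle} is the quantitative uniformity in this last step: the fibers $T_x\mathcal M$ are unbounded vector spaces, so the norm bound $\lvert v \rvert_g \leq R(x)$ only controls a fiber through the $x$-dependent metric, and one must convert it into a uniform bound on the Euclidean coordinates $w$ within a chart. This is exactly what the positive lower bound $\lambda_p$ on the eigenvalues of the continuous metric matrix $A(x)$ over the compact set $\overline{V_p}$ supplies. An equivalent and slightly shorter route, relying on $T\mathcal M$ being metrizable so that compactness and sequential compactness coincide, is a direct sequential argument: given $(x_n,v_n) \in \tilde K_R$, extract $x_n \to x_* \in K$, pass to a trivializing chart around $x_*$ where the same eigenvalue bound makes the coordinate vectors $w_n$ bounded, extract $w_n \to w_*$, and verify $(x_*,v_*) \in \tilde K_R$ using continuity of $\lvert \cdot \rvert_g$ and of $R$.
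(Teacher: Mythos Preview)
Your argument is correct. The paper only gives a one-line sketch: embed $\tilde K_R$ into Euclidean space (via the derivative of a local parametrisation, citing Guillemin--Pollack) and then invoke Heine--Borel after checking the image is closed and bounded. Your route is intrinsic: you show $\tilde K_R$ is closed in $T\mathcal M$ as the intersection of $\pi_{\mathcal M}^{-1}(K)$ with a sublevel set of a continuous function, and then use local trivialisations together with a uniform lower bound on the eigenvalues of the metric matrix over compact chart pieces to trap each $\tilde K_R\cap \pi_{\mathcal M}^{-1}(\overline{V_p})$ inside a product of compacts; a finite cover of $K$ finishes. Your approach avoids the need for an ambient embedding and makes explicit the only non-trivial point, namely converting the $x$-dependent bound $\lvert v\rvert_g\le R(x)$ into a uniform Euclidean bound on the fiber coordinates. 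One small remark: once you have written $\tilde K_R$ as a finite union of compact sets, compactness is immediate, so your closing sentence ``Combined with the closedness\ldots'' is redundant (closedness was already used to deduce that each piece $\tilde K_R\cap \pi_{\mathcal M}^{-1}(\overline{V_p})$ is compact, not at the end).
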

Lemma \ref{compactness theorem} can be shown by showing that the embedding of $\Tilde{K}_{R}$ is closed and bounded. Closure can be established by showing that the limit of convergent sequences is contained in $\Tilde{K}_{R}$, using the derivative of the local parametrisation as defined in \cite{Gu1974}[Page 50].

\section{Integrated Autocorrelation and ESS} \label{IAC}
If the MCMC method converges quickly, we have that the variance $\sigma^{2}(\overline{f})$ (the variance of the estimator) is small. 
From the central limit theorem we know that as $N \to \infty$,
\[
\sqrt{N}(\overline{f} - \langle f \rangle) \sim \mathcal{N}(0,a^{2}) 
\]
and hence
\[
\lim_{N \to \infty} \sigma^{2}(\overline{f}) = \frac{a^{2}}{N}, 
\]
where the quantity $a$ is known as the asymptotic variance. 
We have the following result 
\[
a^{2} = \tau_{f} \sigma^{2}(f), 
\]
where $\sigma^{2}(f)$ is the variance of $f$ under the distribution $\pi$ and is independent of the MCMC scheme used (see  \cite{Ro2004}[Chapter 12] for an in depth study). 

We also have 
\[
\tau_{f} = 1 + 2 \sum^{\infty}_{i=1} \textbf{corr}(f(X^{0}),f(X^{i})) ,
\]
which is known as the integrated autocorrelation (IAC).  If all samples are independent, then $\tau_{f} = 1$. MCMC schemes generate correlated samples, thus $\tau_{f} > 1$. 

The IAC ($\tau_{f}$) is a measure of how dependent the samples are and the closer this value is to $1$, the higher the quality of the MCMC samples produced. Note that we will use $X^{\bullet}$ to denote the random variables in a Markov chain and $X_{\bullet}$ to denote the outputs of an MCMC scheme. In the following numerics we approximate the IAC by a Monte Carlo method, that is we create a finite chain $\{f_{n} \}^{N}_{n=1} = \{f(X_{i}) \}^{N}_{n=1}$ from the MCMC schemes we want to test. 
We estimate 
\[
\textbf{corr}(f(X^{0}),f(X^{i})) \approx \frac{c_{f}(i)}{c_{f}(0)},
\]
where
\[
c_{f}(i) = \frac{1}{N-i} \sum^{N-i}_{n=1} (f_{n} - \mu_{f})(f_{n+i} - \mu_{f}) 
\]
and
\[ 
\mu_{f} = \frac{1}{N}\sum^{N}_{n=1}f_{n}.
\]
We have that
\[
\tau_{f} \approx 1 + 2 \sum^{M}_{i = 1} \frac{c_{f}(i)}{c_{f}(0)},
\]
for some large $M$ such that $M \ll N$. Note that in practice one uses a fast Fourier transform method to calculate $c_{f}(\cdot)$ as it is much more computationally efficient.

We now define an additional metric of quality of samples known as effective sample size (ESS) which is defined as
\[
N_{\textnormal{eff}} = \frac{N}{\tau_{f}} 
\]
for a sample size of size $N$. This metric is used to say that a sample of size $N$ of an MCMC algorithm has the efficiency of $N_{\textnormal{eff}}$ independent samples for computing the Monte Carlo average of $f$.




\end{appendices}


\bibliography{Cites.bib}


\end{document}